\newtheorem{theorem}{Theorem}
\newtheorem{lemma}[theorem]{Lemma}
\newtheorem{fact}[theorem]{Fact}
\newtheorem{definition}[theorem]{Definition}
\newtheorem{assumption}[theorem]{Assumption}
\renewcommand{\R}{\mathbb{R}}
\newcommand{\Rn}{\R^{n}}
\newcommand{\Redgevec}{\R^{E}}
\newcommand{\Rvertvec}{\R^{V}}
\newcommand{\Rev}{\R^{E \times V}}
\newcommand{\rn}{\Rn}
\newcommand{\rPos}{\R^{+}}
\newcommand{\boldVar}[1]{\mathbf{#1}} 
\newcommand{\mvar}[1]{\boldVar{#1}} 
\newcommand{\vvar}[1]{\vec{#1}} 
\newcommand{\defeq}{\stackrel{\mathrm{\scriptscriptstyle def}}{=}}
\DeclareMathOperator*{\argmax}{arg\,max}
\DeclareMathOperator*{\argmin}{arg\,min}
\newcommand{\gradient}{\bigtriangledown}
\newcommand{\onesVec}{\vvar{\mathbb{1}}}
\newcommand{\indicVec}[1]{\onesVec_{#1}}
\newcommand{\innerProduct}[2]{\big\langle #1 , #2 \big\rangle}
\newcommand{\norm}[1]{\|#1\|}
\newcommand{\normFull}[1]{\left\|#1\right\|}
\newcommand{\normInf}[1]{\norm{#1}_{\infty}}
\newcommand{\normInfFull}[1]{\normFull{#1}_{\infty}}
\newcommand{\normTwo}[1]{\norm{#1}_{2}}
\newcommand{\normDual}[1]{\norm{#1}^*}
\newcommand{\smax}{\mathrm{smax}}
\newcommand{\dualVec}[1]{{#1}^{\#}}
\newcommand{\opt}{\mathrm{opt}}
\newcommand{\fopt}{f^*}
\newcommand{\vx}{\vvar{x}}
\newcommand{\vy}{\vvar{y}}
\newcommand{\vs}{\vvar{s}}
\newcommand{\vc}{\vvar{c}}
\newcommand{\vzero}{\vvar{0}}
\newcommand{\xopt}{\vvar{x}^*}
\newcommand{\varVec}{\vvar{x}}
\newcommand{\varVecA}{\vvar{x}}
\newcommand{\varVecB}{\vvar{y}}
\newcommand{\varMat}{\mvar{A}}
\newcommand{\abs}[1]{\left|#1\right|}
\newcommand{\capacityMatrix}{\mvar{U}}
\newcommand{\runtime}{\mathcal T}
\newcommand{\timeOf}[1]{\runtime\left(#1\right)}
\newcommand{\load}{\mathrm{load}}
\newcommand{\funLip}{L}
\newcommand{\redgevec}{\R^{E}}
\newcommand{\rvertvec}{\R^{V}}
\newcommand{\lap}{\mvar{\mathcal L}}
\newcommand{\pseudo}[1]{{#1}^\dagger}
\newcommand{\lapPseudo}{\pseudo{\lap}}
\newcommand{\incMatrix}{\mvar{B}}
\newcommand{\rMatrix}{\mvar{R}} 
\newcommand{\iMatrix}{\mvar{I}} 
\newcommand{\wMatrix}{\mvar{W}}
\newcommand{\dMatrix}{\mvar{D}}
\newcommand{\conductance}{\Phi}
\newcommand{\vol}{\mathrm{vol}}
\newcommand{\cutset}{\partial}
\DeclareMathOperator*{\diag}{diag}
\newcommand{\tree}{T}
\newcommand{\treePath}[1]{P_{#1}}
\newcommand{\treePathVec}[1]{\vvar{p}_{#1}}
\newcommand{\offtreeEdgeSet}{E \setminus \tree}
\newcommand{\sampleProbVec}[1]{\vvar{p}}
\newcommand{\demands}{\vvar{\chi}}
\newcommand{\demandsEdge}[1]{\demands_{#1}}
\newcommand{\flow}{\vvar{f}}
\newcommand{\volt}{\vvar{v}}
\newcommand{\circVec}{\vvar{c}} 
\renewcommand{\epsilon}{\varepsilon} 
\newcommand{\capacityVec}{\vvar{\mu}}
\newcommand{\capacityRatio}{U}
\newcommand{\congest}{\mathrm{cong}}
\newcommand{\mProj}{\widetilde{\mvar{P}}}
\newcommand{\mProjScaled}{\mvar{P}}
\newcommand{\mEmbed}{\mvar{M}}
\newcommand{\mRoute}{\mvar{A}}
\newcommand{\mElRout}{\mvar{A}_{\mathcal{E}}}
\newcommand{\first}{v^1}
\newcommand{\last}{v^2}
\newcommand{\im}{\mathrm{im}}
\newcommand{\lemmaref}[1]{Lemma~\ref{#1}}
\newcommand{\theoremref}[1]{Theorem~\ref{#1}}
\newcommand{\sectionref}[1]{Section~\ref{#1}}
\newcommand{\otilde}{\widetilde{O}}
\begin{document}

\title{An Almost-Linear-Time Algorithm for Approximate Max Flow in Undirected Graphs, and its Multicommodity Generalizations}

\maketitle
\begin{center}
\parbox{1.5in}{\center 
    Jonathan A.\ Kelner\\
    \texttt{\href{mailto:kelner@mit.edu}{\color{black}kelner@mit.edu}}\\
    MIT}
\parbox{1.5in}{\center 
    Yin Tat Lee\\
    \texttt{\href{mailto:yintat@mit.edu}{\color{black}yintat@mit.edu}} \\
    MIT}
\parbox{1.5in}{\center 
    Lorenzo Orecchia\\
    \texttt{\href{mailto:orecchia@mit.edu}{\color{black}orecchia@mit.edu}} \\
    MIT}
\parbox{1.5in}{\center 
    Aaron Sidford\\
\texttt{\href{mailto:sidford@mit.edu}{\color{black}sidford@mit.edu}} \\
    MIT}
\end{center}
{\ }\\{\ }\\

\date{}

\thispagestyle{empty}

\begin{abstract}

In this paper, we introduce a new framework for approximately solving flow problems in capacitated, undirected graphs and apply it to provide asymptotically faster algorithms for the maximum $s$-$t$
flow and maximum concurrent multicommodity flow problems.
For graphs with $n$ vertices and $m$ edges, it allows us to find an $\epsilon$-approximate maximum $s$-$t$ flow in time ${O}(m^{1+o(1)}\epsilon^{-2})$, improving on the previous best bound of $\otilde(mn^{1/3}\mathrm{poly(1/\epsilon)})$.
Applying the same framework in the multicommodity setting
solves a maximum concurrent multicommodity flow problem with $k$ commodities
in ${O}(m^{1+o(1)}\epsilon^{-2}k^2)$ time, improving on the existing bound of $\otilde(m^{4/3}\mathrm{poly}(k,\epsilon^{-1}))$.\\

Our algorithms utilize several new technical tools that we believe may be of independent interest:\\

 \begin{itemize} \item We give a non-Euclidean generalization of gradient descent and provide bounds on its performance.  Using this, we show how to 
 reduce approximate maximum flow and maximum concurrent flow to oblivious routing. \\
  \item We define and provide an efficient construction of a new type of \emph{flow sparsifier}.  Previous sparsifier constructions approximately preserved the size of cuts and, by duality, the \emph{value} of the maximum flows as well.  However, they did not provide any direct way 
to route flows in the sparsifier $G'$ back in the original graph $G$, leading to a longstanding gap between the efficacy of sparsification on flow and cut problems.
We ameliorate this by constructing a sparsifier $G'$ that can be embedded (very efficiently) into $G$ with low congestion, allowing one to transfer flows from $G'$ back to $G$.\\
    \item We give the first almost-linear-time construction of an $O(m^{o(1)})$-competitive oblivious routing scheme. No previous such algorithm ran in time better than $\widetilde{\Omega}(mn)$.
    By reducing the running time to almost-linear,   our work provides a powerful new primitive for constructing very fast graph algorithms.\\
\end{itemize}

We also note that independently Jonah Sherman produced an almost linear time algorithm for maximum flow and we thank him for coordinating submissions.

\end{abstract}

\newpage
\setcounter{page}{1}

\section{Introduction}

Given a graph $G=(V,E)$ in which each edge $e\in E$ is assigned a nonnegative capacity $\capacityVec_e$, the \textbf{maximum $s$-$t$ flow problem} asks us to find a flow $\flow$ that routes as much flow as possible from a source vertex $s$ to a sink vertex $t$ while sending at most $\capacityVec_e$ units of flow over each edge $e$.
Its generalization, the \textbf{maximum concurrent multicommodity flow problem}, supplies $k$ source-sink pairs $(s_i,t_i)$ and asks for the maximum $\alpha$ such that we may simultaneously route $\alpha$ units of flow between each source-sink pair. 
That is, it asks us to find flows $\flow_1, \dots, \flow_k$ (which we think of as corresponding to $k$ different commodities) such that $\flow_i$ sends $\alpha$ units of flow from $s_i$ to $t_i$, and $\sum_{i} |\flow_i(e)|\leq \capacityVec_e$ for all $e \in E$.

These problems lie at the core of graph algorithms and combinatorial optimization, and  they have been extensively studied  over the past 60 years~\cite{SchrijverA,AhujaBook}.  They have found a wide range of theoretical and practical applications~\cite{AhujaApps}, and they are widely used as key subroutines in other algorithms (see~\cite{AroraHK05,ShermanMaxflow}).

In this paper, we introduce a                                            
new framework for approximately solving flow problems in capacitated, undirected   graphs and apply it to provide asymptotically faster algorithms for the maximum $s$-$t$
flow and maximum concurrent multicommodity flow problems.
For graphs with $n$ vertices and $m$ edges, it allows us to find an $\epsilon$-approximately maximum $s$-$t$ flows in time $O(m^{1+o(1)}\epsilon^{-2})$, improving on the previous best bound of $\otilde(mn^{1/3}\mathrm{poly(1/\epsilon)})$\cite{CKMST}.
Applying the same framework 
in the multicommodity setting
solves a maximum concurrent multicommodity flow problem with $k$ commodities
in $O(m^{1+o(1)}\epsilon^{-2}k^2)$ time, improving on the existing bound of $\otilde(m^{4/3}\mathrm{poly}(k,\epsilon^{-1}))$\cite{KelnerMillerPeng}.

We believe that both our general framework and several of the pieces necessary for its present instantiation are of independent interest, and we hope that they will find other applications.  
These include:
\begin{itemize}
  \item a non-Euclidean generalization of gradient descent, bounds on its performance, 
 and a way to use this to reduce approximate maximum flow and maximum concurrent flow to oblivious routing; 
  \item the definition and efficient construction of \emph{flow sparsifiers}; and
  \item the construction of a new oblivious routing scheme that can be implemented extremely efficiently.\end{itemize}
  
We have aimed to make our algorithm fairly modular and have thus occasionally worked in slightly more generality than is strictly necessary for the problem at hand.  This has slightly increased the length of the exposition, but we believe that it clarifies the high-level structure of the argument, and it will hopefully facilitate the application of these tools in other settings.

\subsection{Related Work}

For the first several decades of its study, the fastest algorithms for the maximum flow problem were essentially all deterministic algorithms based on combinatorial techniques, such as augmenting paths, blocking flows, preflows, and the push-relabel method.  These culminated in the work of Goldberg and Rao~\cite{GoldbergRao}, which computes exact maximum flows in time
$O(\min(n^{2/3}, m^{1/2}) \log(n^2/m) \log U)$ on graphs with edge weights in $\{0,\dots, U\}$. We refer the reader to~\cite{GoldbergRao} for a survey of these results.

More recently, a collection of new techniques based on randomization, spectral graph theory and numerical linear algebra, graph decompositions and embeddings, and iterative methods for convex optimization have  emerged.  These have  allowed researchers to provide better provable algorithms for a wide range of flow and cut problems, particularly when one aims to obtain approximately optimal solutions on undirected graphs.

Our algorithm draws extensively on the intellectual heritage established by these works. 
In this section, we will briefly review 
some of the previous advances that inform our algorithm.
We do not give a comprehensive review of the literature, but instead aim to provide a high-level view of the main tools
 that motivated the present work, along with the limitations of these tools that had to be overcome.
 For simplicity of exposition, we primarily focus on the maximum $s$-$t$ flow problem for the remainder of the introduction.
                                                                                                                                                        \paragraph{\bf Sparsification}
In~\cite{BenczurK96}, Benczur and Karger showed how to efficiently approximate any graph $G$ with a sparse graph $G'$ on the same vertex set. To do this, they compute a carefully chosen probability $p_e$ for each $e\in E$, sample each edge $e$ with probability $p_e$, and include $e$ in $G'$ with its weight increased by a factor of $1/p_e$ if it is sampled.
Using this,
they obtain, in nearly linear time, a graph $G'$ with $O(n\log n / \epsilon^2)$ edges such that the total weight of the edges crossing any cut in $G'$ is within a multiplicative factor of $1\pm \epsilon$ of the weight crossing the corresponding cut in $G$.    In particular, the Max-Flow Min-Cut Theorem implies that the value of the maximum flow on $G'$ is within a factor of $1\pm \epsilon$ of that of $G$.

This is an extremely effective tool for approximately solving cut problems on a dense graph $G$, since one can simply solve the corresponding problem on the sparsified graph $G'$.  However, while this means that one can approximately compute the \emph{value} of the maximum $s$-$t$ flow on $G$ by solving the problem on $G'$,  it is not known how to use the maximum $s$-$t$ flow on $G'$  to obtain 
an actual approximately maximum flow on $G$.  Intuitively, this is because the weights of edges included in $G'$ are larger than they were in $G$, and the sampling argument does not provide any guidance about how to route flows over these edges in the original graph $G$.  

\paragraph{\bf Iterative algorithms based on linear systems and electrical flows}  In 2010, Christiano \emph{et al.}\cite{CKMST}
described a new linear algebraic approach to the problem that found $\epsilon$-approximately maximum $s$-$t$ flows in time $\otilde(mn^{1/3}\mathrm{poly(1/\epsilon)})$.  
They treated the edges of $G$ as electrical resistors and then computed the  \emph{electrical flow} that would result from sending electrical current  from $s$ to $t$ in the corresponding circuit.  
They showed that these flows can be computed in nearly-linear time using fast Laplacian linear system solvers~\cite{KoutisMP10,Koutis:2011:NLN:2082752.2082901,koszSolver,YinTatAaron}, which we further discuss below.
The electrical flow obeys the flow conservation constraints, but it could violate the capacity constraints.  They then adjusted the resistances of edges to penalize the edges that were flowing too much current and repeated the process. 
Kelner, Miller, and Peng~\cite{KelnerMillerPeng} later showed how to use more general objects that they called \emph{quadratically coupled flows} to use a similar approach to solve the maximum concurrent multicommodity flow problem in time 
$\otilde(m^{4/3}\mathrm{poly}(k,1/\epsilon))$.

Following this, Lee, Rao, and Srivastava~\cite{Lee2013} proposed another iterative algorithm that uses electrical flows, but in a way that was substantially different than in~\cite{CKMST}.  Instead of adjusting the resistances of the edges in each iteration to correct overflowing edges, they keep the resistances the same but compute a new electrical flow to reroute the excess current.  They explain how to interpret this as gradient descent in a certain space, from which a standard analysis would give an algorithm that runs in time $\otilde(m^{3/2}\mathrm{poly}(1/\epsilon))$.  By replacing the standard gradient descent step with Nesterov's accelerated gradient descent method~\cite{nesterov2003introductory} and using a regularizer to make the penalty function smoother, they obtain an algorithm that runs in time $\otilde(mn^{1/3}\mathrm{poly}(1/\epsilon))$ in unweighted graphs.

In all of these algorithms, the superlinear running times arise from an intrinsic $\Theta(\sqrt{m})$ 
factor introduced by using electrical flows, which minimize an $\ell_2$ objective function, to approximate the maximum congestion, which is an $\ell_\infty$ quantity.
 
\paragraph{\bf Fast solvers for Laplacian linear systems}
In their breakthrough paper~\cite{SpielmanTeng04}, Spielman and Teng showed how to solve Laplacian systems in nearly-linear time.  (This was later sped up and simplified by Koutis, Miller, and Peng~\cite{KoutisMP10,Koutis:2011:NLN:2082752.2082901} and Kelner, Orecchia, Sidford, and Zhu~\cite{koszSolver}.)
Their algorithm worked by showing how to approximate the Laplacian $\lap_G$ of a graph $G$ with the Laplacian $\lap_H$ of a much simpler graph $H$ such that one could use the ability to solve linear systems in $\lap_H$ to accelerate the solution of a linear system in $\lap_G$.  They then applied this recursively to solve the linear systems in $\lap_H$.
In addition to providing the electrical flow primitive used by the algorithms described above, 
 the structure of their recursive sequence of graph simplifications provides the motivating framework for much of the technical content 
 of our oblivious routing construction.

\paragraph{\bf Oblivious routing}
In an \emph{oblivious routing scheme}, one specifies a linear operator 
taking any demand vector to a flow routing these demands over the edges of $G$.  
Given a collection of demand vectors, one can produce a multicommodity flow meeting these demands by routing each demand vector using this pre-specified operator, independently of the others.  The competitive ratio of such an operator is the worst possible ratio between the congestion incurred by a set of demands in this scheme and the congestion of the best multicommodity flow routing these demands.  

In~\cite{Racke:2008:OHD:1374376.1374415}, R\"acke showed how to construct an oblivious routing scheme with a competitive ratio of $O(\log n)$.
His construction worked by providing a probability distribution over trees $T_i$ such that $G$ embeds into each $T_i$ with congestion at most 1, and such that the corresponding convex combination of trees embeds into $G$ with congestion $O(\log n)$.  In a sense, one can view this as showing how to approximate $G$ by a probability distribution over trees.
Using this, he was able to show how to obtain polylogarithmic approximations for a variety of cut and flow problems, given only the ability to solve these problems on trees.

We note that such an oblivious routing scheme clearly yields a logarithmic approximation to the maximum flow and maximum concurrent multicommodity flow problems.  
However, R\"acke's construction took time substantially superlinear time, making it too slow to be useful for computing approximately maximum flows.  Furthermore, it only gives a logarithmic approximation, and it is not clear how to use this a small number of times to reduce the error to a multiplicative  $\epsilon$.

In a later paper~\cite{Madry10}, Madry applied a recursive technique similar to the one employed by Spielman and Teng in their Laplacian solver to accelerate many of the applications of R\"acke's construction at the cost of a worse approximation ratio. Using this, he obtained almost-linear-time polylogarithmic approximation algorithms for a wide variety of cut problems.  

Unfortunately, his algorithm made extensive use of sparsification, which, for the previously mentioned reasons, made it unable to solve the corresponding flow problems.  This meant that, while it could use flow-cut duality to find a polylogarithmic approximation of the value of a maximum flow, it could not construct a corresponding flow or repeatedly apply such a procedure a small number of times to decrease the error to a multiplicative $\epsilon$.

In simultaneous, independent work~\cite{ShermanMaxflow}, Jonah Sherman used somewhat different techniques to find another almost-linear-time algorithm for the (single-commodity) maximum flow problem. 
His approach is essentially dual to ours: 
Our algorithm maintains a flow that routes the given demands throughout its execution and iteratively works to improve its congestion.  Our main technical tools thus consist of efficient methods for finding ways to route flow in the graph while maintaining flow conservation.
Sherman, on the other hand, maintains a flow that does \emph{not} route the given demands, along with a bound on the congestion required to route the excess flow at the vertices.  He then uses this to iteratively work towards achieving flow conservation.  (In a sense, our algorithm is more in the spirit of augmenting paths, whereas his is more like preflow-push.) As such, his main technical tools are efficient methods for producing dual objects that give congestion bounds.  
Objects meeting many of his requirements were given in the work of Madry~\cite{Madry10}  (whereas there were no previous constructions of flow-based analogues, requiring us to start from scratch); leveraging these allows him to avoid some of the technical complexity required by our approach.
 We believe that these paper nicely complement each other, and we enthusiastically refer the reader to Sherman's paper.

\subsection{Our Approach}
In this section, we give a high-level description of how we overcome the obstacles described in the previous section.  For simplicity, we suppose for the remainder of this introduction that all edges have capacity 1. 

The problem is thus to send as many units of flow as possible from $s$ to $t$ without sending more than one unit over any edge.  It will be more convenient for us to work with an equivalent congestion minimization problem, where we try to find the unit  $s$-$t$ flow $\flow$ 
(i.e., a flow sending one unit from $s$ to $t$)
that minimizes $\|f\|_\infty= \max_e |\flow_e|$.  
If we begin with some initial unit $s$-$t$ flow $\flow_{0}$, the goal will be thus be to find the  circulation $\circVec$ to add to $\flow_0$ that minimizes $\|\flow_0+\circVec\|_\infty$.

We give an iterative algorithm to approximately find such a $\circVec$.  There are $2^{O(\sqrt{\log n \log \log n})}/\epsilon^2$ iterations, each of which adds a circulation to the present flow and runs in $m\cdot 2^{O(\sqrt{\log n \log \log n})}$ time.  
Constructing this scheme consists of two main parts: an iterative scheme that reduces the problem to the construction of a projection matrix with certain properties; and the construction of such an operator.

\subsubsection*{The iterative scheme: Non-Euclidean gradient descent}

The simplest way to improve the flow would be to just perform gradient descent on the maximum congestion of an edge.  There are two problems with this:

The first problem is that gradient descent depends on having a smoothly varying gradient, but the infinity norm is very far from smooth. This is easily remedied by a standard technique: we replace the infinity norm with a smoother ``soft max'' function.
Doing this would lead to an update that would be a linear projection onto the space of circulations.  This could be computed using an electrical flow, and the resulting algorithm would be very similar to the unaccelerated gradient descent algorithm in~\cite{Lee2013}.  

The more serious problem is the one discussed in the previous section: the difference between $\ell_2$ and $\ell_\infty$.
Gradient steps choose a direction by optimizing a local approximation of the objective function over a sphere, whereas the $\ell_\infty$ constraint  asks us to optimize over a cube.  The difference between the size of the largest sphere inside a cube and the smallest sphere containing it gives rise to an inherent $O(\sqrt{m})$ in the number of iterations, unless one can somehow exploit additional structure in the problem.

To deal with this, we introduce and analyze a non-Euclidean variant of gradient descent that operates with respect to an arbitrary norm.\footnote{This idea and analysis seems to be implicit in other work, e.g.,~\cite{nesterov2010efficiency}  However, we could not find a clean statement like the one we need in the literature, and we have not seen it previously applied in similar settings.  We believe that it will find further applications, so we state it in fairly general terms before specializing to what we need for flow problems.}  Rather than choosing the direction by optimizing a local linearization of the objective function over the sphere, it performs an optimization over the unit ball in the given norm.  By taking this norm to be  $\ell_\infty$ instead of $\ell_2$, we are able to obtain a much smaller bound on the number of iterations, albeit at the expense of having to solve a nonlinear minimization problem at every step.
The number of iterations required by the gradient descent method depends on how quickly the gradient can change over balls in the norm we are using, which we express in terms of the Lipschitz constant of the gradient in the chosen norm.

To apply this to our problem, we write flows meeting our demands as $\flow_0 +\circVec$, as described above.  We then need a parametrization of the space of circulations so that the objective function (after being smoothed using soft max) has a good bound on its Lipschitz constant.  
Similarly to what occurs in~\cite{koszSolver}, this comes down to finding a good linear representation of the space of circulations, which we show amounts in the present setting to finding a matrix that projects into the space of circulations while meetings certain norm bounds. 

\subsubsection*{Constructing a projection matrix}
This reduces our problem to the construction of such a projection matrix.
A simple calculation shows that any linear oblivious routing scheme $A$ with a good competitive ratio gives rise to a projection matrix with the desired properties, and thus leads to an iterative algorithm that converges in a small number of iterations.  Each of these iterations performs a matrix-vector multiplication with both $\mvar{A}$ and $\mvar{A}^T$.

Intuitively, this is letting us replace the electrical flows used in previous algorithms with the flows given by an oblivious routing scheme.  
Since the oblivious routing scheme was constructed to meet $\ell_{\infty}$ guarantees, while the electrical flow could only obtain such guarantees by relating $\ell_2$ to $\ell_\infty$, it is quite reasonable that we should expect this to lead to a better iterative algorithm.

However, the computation involved  in existing oblivious routing schemes is not fast enough to be used in this setting.  Our task thus becomes constructing an oblivious routing scheme that we can compute and work with very efficiently.  We do this with a recursive construction that reduces oblivious routing in a graph to oblivious routing in various successively simpler graphs.

To this end, we show that if $G$ can be embedded with low congestion into $H$ (existentially), and $H$ can be embedded with low congestion into $G$ \emph{efficiently}, one can use an oblivious routing on $H$ to obtain an oblivious routing on $G$.
The crucial difference between the simplification operations we perform here and those in previous papers (e.g., in the work of Benczur-Karger~\cite{BenczurK96} and Madry~\cite{Madry10}) is that ours are accompanied by such embeddings, which enables us to transfer flows from the simpler graphs to the more complicated ones. 

We construct our routing scheme by recursively composing two types of reductions, each of which we show how to implement without incurring a large increase in the competitive ratio:
\begin{list}{$\bullet$}{\leftmargin=1em \itemindent=0em}
\item {\bf Vertex elimination} This  shows how to efficiently reduce oblivious routing on a graph $G=(V,E)$ to  routing on $t$ graphs with roughly $\otilde(|E|/t)$ vertices.    To do this, we show how to efficiently embed $G$ into  $t$ simpler graphs, each consisting of a tree plus a subgraph supported on roughly $\otilde{(|E|/t)}$ vertices.  This follows easily from a careful reading of Madry's paper~\cite{Madry10}.  We then show that routing on such a graph can be reduced to routing on a graph with at most $\otilde{(|E|/t)}$ vertices by collapsing paths and eliminating leaves.
\vspace{0.5em}
\item {\bf Flow sparsification} This allows us to efficiently reduce oblivious routing on an arbitrary graph to oblivious routing on a graph with $\otilde(|V|)$ edges, which we call a flow sparsifier.

To  construct flow sparsifiers, we use local partitioning to decompose the graph into well-connected clusters that contain many of the original edges.  
(These clusters are not quite expanders, but they are contained in graphs with good expansion in a manner that is sufficient for our purposes.)  We then sparsify these clusters using standard techniques and then show that we can embed the sparse graph back into the original graph using electrical flows. If the graph was originally dense, this results in a sparser graph, and we can recurse on the result.
While the implementation of these steps is somewhat different, the outline of this construction parallels Spielman and Teng's approach to the construction of spectral sparsifiers~\cite{SpielmanTeng04, STspectralSparse}.
\end{list}

Combining these two reductions recursively yields an efficient oblivious routing scheme, and thus an algorithm for the maximum flow problem.

Finally, we show that the same framework can be applied to the maximum concurrent  multicommodity flow problem.  While the norm and regularization change, the structure of the argument and the construction of the oblivious routing scheme go through without requiring substantial modification.

\section{Preliminaries}

\textbf{General Notation:} We typically use $\vx$ to denote a vector and $\mvar{A}$ to denote a matrix. For $\vx \in \R^n$, we let $\abs{\vx} \in \R^{n}$ denote the vector such that $\forall i$, $\abs{\vx}_i \defeq \abs{\vx_i}$. For a matrix $\mvar{A} \in \R^{n \times m}$, we let $\abs{\mvar{A}}$ denote the matrix such that $\forall i, j$ we have $\mvar{|A|}_{ij} \defeq \abs{\mvar{A}_{ij}}$. We let $\onesVec$ denote the all ones vector and $\onesVec_i$ denote the vector that is one at position $i$ and 0 elsewhere. We let $\iMatrix$ be the identity matrix and  $\iMatrix_{a \rightarrow b} \in \R^{b \times a}$ denote the matrix such that for all $i \leq \min\{a, b\}$ we have $\iMatrix_{ii} = 1$ and $\iMatrix_{ij} = 0$ otherwise.

\textbf{Graphs:} Throughout this paper we let $G = (V, E, \capacityVec)$ denote an undirected capacitated graph with $n = |V|$ vertices, $m = |E|$ edges, and non-negative capacities $\capacityVec \in \redgevec$. We let $w_e \geq 0$ denote the weight of an edge and let $r_e \defeq 1/w_e$ denote the resistance of an edge. Here we make no connection between $\mu_e$ and $r_e$; we fix their relationship later. While all graphs in this paper are undirected, we frequently assume an arbitrary orientation of the edges to clarify the meaning of vectors $\flow \in \redgevec$.

\textbf{Fundamental Matrices:} Let $\capacityMatrix, \wMatrix, \rMatrix \in \R^{E \times E}$ denote the diagonal matrices associated with the capacities, the weights, and the resistances respectively. Let $\incMatrix \in \R^{E \times V}$ denote the graphs incidence matrix where for all $e = (a, b) \in E$ we have $\incMatrix^T \indicVec{e} = \indicVec{a} - \indicVec{b}$. Let $\lap \in \R^{V \times V}$ denote the combinatorial graph Laplacian, i.e. $\lap \defeq \incMatrix^T \rMatrix^{-1} \incMatrix$.

\textbf{Sizes:} For all $a \in V$ we let $d_a \defeq \sum_{\{a, b\}} w_{a,b}$ denote the \emph{(weighted) degree} of vertex $a$ and we let $\deg(a) \defeq |\{e \in E ~ | ~ e = \{a, b\} \text{ for some } b \in V\}|$ denote its \emph{(combinatorial) degree}. We let $\dMatrix \in \R^{V \times V}$ be the diagonal matrix where $\dMatrix_{a, a} = d_a$. Furthermore, for any vertex subset $S \subseteq V$ we define its \emph{volume} by
$
\vol(S) \defeq \sum_{a \in V} d_a
$
.

\textbf{Cuts:} For any vertex subset $S \subseteq V$ we denote the cut induced by $S$ by the edge subset
\[
\cutset(S)
\defeq \{e \in E ~ | ~ e \notin S \text{ and } e \notin E \setminus S\}
\]
and we denote the cost of $F \subseteq E$ by $w(F) \defeq \sum_{e \in F}w_e$. We denote the \emph{conductance} of $S \subseteq V$ by
\[
\conductance(S) \defeq \frac{w(\cutset(S))}{\min\{\vol(S), \vol(V - S)\}}
\]
and we denote the conductance of a graph by
\[
\conductance(G) \defeq \min_{S \subseteq V ~ : S \notin \{\emptyset, V\}} \phi(S)
\]

\textbf{Subgraphs:} For a graph $G = (V, E)$ and a vertex subset $S \subseteq V$ let $G(S)$ denote the subgraph of $G$ consisting of vertex set $S$ and all the edges of $E$ with both endpoints in $S$, i.e. $\{(a, b) \in E ~ | ~ a, b \in S\}$. When we we consider a term such as $\vol$ or $\conductance$ and we wish to make the graph such that this is respect to clear we use subscripts. For example $\vol_{G(S)}(A)$ denotes the volume of vertex set $A$ in the subgraph of $G$ induced by $S$.

\textbf{Congestion:} Thinking of edge vectors, $\flow \in \redgevec$, as flows we let the \emph{congestion\footnote{Note that here and in the rest of the paper we will focus our analysis with congestion with respect to the norm $\normInf{\cdot}$ and we will look at oblivious routing strategies that are competitive with respect to this norm. However, many of the results present are easily generalizable to other norms. These generalizations are outside the scope of this paper.} of $\flow$} be given by $\congest(\flow) \defeq \normInf{\capacityMatrix^{-1} \flow}$. For any collection of flows $\{\flow_i\} = \{\flow_1, \ldots, \flow_k\}$ we overload notation and let their \emph{total congestion} be given by
\[
\congest(\{\flow_i\})
\defeq 
\normInfFull{\capacityMatrix^{-1} \sum_i |\flow_i|}
\]

\textbf{Demands and Multicommodity Flow:}
We call a vector $\demands \in \rvertvec$ a \emph{demand vector} if itis the case that $\sum_{a \in V} \demands(a) = 0$ and we say $\flow \in \redgevec$ meets the demands if $\incMatrix^T \flow = \demands$. Given a set of demands $D = \{\demands_1, \ldots, \demands_k\}$, i.e. $\forall i \in [k], \sum_{a \in V} \demands_i(a) = 0$, we denote the optimal low congestion routing of these demands as follows
\[
\opt(D)
\defeq \min_{\{\flow_i\} \in \redgevec ~ : ~ \{\incMatrix^T \flow_i\} = \{\demands_i\}}
\congest(\{\flow_i\})
\]
We call a set of flows $\{\flow_i\}$ that meet demands $\{\demands_i\}$, i.e. $\forall i, \incMatrix^T \flow_i = \demands_i$, a \emph{multicommodity flow} meeting the demands.

\textbf{Operator Norm:} Let $\norm{\cdot}$ be a family of norms applicable to $\Rn$ for any $n$. We define this norms' \emph{induced norm} or \emph{operator norm} on the set of of $m \times n$ matrices by
\[
\forall \mvar{A} \in \R^{n \times m}
\enspace : \enspace
\norm{\mvar{A}} \defeq \max_{\vx \in \Rn} \frac{\norm{\mvar{A} \vx}}{\norm{\vx}}
\]

\textbf{Running Time:} For matrix $\mvar{A},$ we let $\timeOf{\mvar{A}}$ denote the maximum amount of time needed to apply $\mvar{A}$ or $\mvar{A}^T$ to a vector.

\section{Solving Max-Flow Using a Circulation Projection}

\subsection{Gradient Descent}

In this section, we discuss the gradient descent method for general
norms. Let $\norm{\cdot}:\R^{n}\rightarrow\R$ be an arbitrary norm
on $\R^{n}$ and recall that the gradient of $f$ at $\vx$ is defined
to be the vector $\nabla f(\vx) \in \Rn$ such that 
\begin{equation}
f(\vy)=f(\vx)+\left\langle \nabla f(\vx),\vy-\vx\right\rangle +o(\norm{\vy-\vx}).\label{eq:taylor}
\end{equation}
The gradient descent method is a greedy minimization method that updates
the current vector, $\vx$, using the direction which minimizes $\left\langle \nabla f(\vx),\vy-\vx\right\rangle $.
To analyze this method's performance, we need a tool to compare the improvement
$\left\langle \nabla f(\vx),\vy-\vx\right\rangle $ with the step
size, $\norm{\vy-\vx}$, and the quantity, $\norm{\nabla f(\vx)}$. For $\ell_2$
norm, this can be done by Cauchy Schwarz inequality and in general, we
can define a new norm for $\nabla f(\vx)$ to make this happens. We
call this the \emph{dual norm} $\normDual{\cdot}$ defined as follows
\[
\normDual{\varVecA}\defeq\max_{\varVecB\in\Rn~:~\norm{\varVecB}\leq1}\innerProduct{\varVecB}{\varVecA}.
\]
Fact \ref{claim:CS_inq} shows that this definition indeed yields that
$\innerProduct{\vy}{\vx}\leq\normDual{\vy}\norm{\vx}$. Next, we define
the fastest increasing direction $\dualVec x$, which is an arbitrary
point satisfying the following 
\[
\dualVec{\vx}\defeq\argmax_{\vs\in\R}\innerProduct{\vx}{\vs}-\frac{1}{2}\norm{\vs}^{2}.
\]
In the appendix, we provide some facts about $\normDual{\cdot}$
and $\dualVec{\vx}$ that we will use in this section. Using the notations defined, the gradient descent
method simply produces a sequence of $\vx_{k}$ such that 
\[
\vx_{k+1}:=\vx_{k}-t_{k}\dualVec{(\gradient f(\vx_{k}))}
\]
where $t_{k}$ is some chosen step size for iteration $k$. To determine what these step sizes should be we need some information about the smoothness of the function, in
particular, the magnitude of the second order term in (\ref{eq:taylor}).
The natural notion of smoothness for gradient descent is the Lipschitz
constant of the gradient of $f$, that is the smallest constant $L$
such that 
\[
\forall\vx,\vy\in\rn\enspace:\enspace\normDual{\gradient f(\vx)-\gradient f(\vy)}\leq\funLip\cdot\norm{\vx-\vy}.
\]
In the appendix we provide an equivalent definition and a way to compute $L$, which is useful later. 

Let $X^{*}\subseteq\rn$ denote the set of optimal solutions to the unconstrained
minimization problem $\min_{x\in\R^{n}}f$ and let $\fopt$ denote
the optimal value of this minimization problem, i.e. 
\[
\forall\vx\in X^{*}~:~f(\vx)=\fopt=\min_{\vy\in\R^{n}}f(y)\enspace\text{ and }\enspace\forall\vx\notin X^{*}~:~f(\vx)>\fopt
\]
We assume that $X^{*}$ is non-empty. Now, we are ready to estimate
the convergence rate of the gradient descent method. 

\begin{theorem}[Gradient Descent] \label{thm:gradient_descent}Let
$f:\R^{n}\rightarrow\R$ be a convex continuously differentiable function
and let $L$ be the Lipschitz constant of $\nabla f$. For initial point $\vx_0 \in \rn$ we define a sequence of $\vx_{k}$
by the update rule 
\[
\vx_{k+1}:=\vx_{k}-\frac{1}{\funLip}\dualVec{(\gradient f(\vx_{k}))}
\]
For all $k\geq0$, we have 
\[
f(\vx_{k})-\fopt\leq\frac{2\cdot L\cdot R^{2}}{k+4}\enspace\text{ where }\enspace R\defeq\max_{\vx\in\rn:f(\vx)\leq f(\vx_{0})}\min_{\xopt\in X^{*}}\norm{\vx-\xopt}.
\]
\end{theorem}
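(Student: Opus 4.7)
The plan is to follow the classical gradient descent convergence argument, but carried out entirely with respect to the general norm $\norm{\cdot}$ using the dual norm $\normDual{\cdot}$ and the sharp operator $\dualVec{\cdot}$ in place of the Euclidean inner product structure. The key algebraic facts we will use from the appendix are (i) $\innerProduct{\vy}{\vx} \leq \normDual{\vy}\norm{\vx}$, (ii) $\normDual{\dualVec{\vx}} = \norm{\vx}$ (I mean $\norm{\dualVec{\vx}} = \normDual{\vx}$ — the sharp of a vector has norm equal to the dual norm of that vector), and (iii) $\innerProduct{\vx}{\dualVec{\vx}} = \normDual{\vx}^{2}$. These are analogous to what holds for the identity map in the $\ell_{2}$ case.

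First, I would establish the standard descent lemma from Lipschitz smoothness: for any $\vx, \vy \in \rn$,
\[
f(\vy) \leq f(\vx) + \innerProduct{\gradient f(\vx)}{\vy - \vx} + \frac{L}{2}\norm{\vy - \vx}^{2}.
\]
This follows from integrating $\gradient f$ along the segment from $\vx$ to $\vy$ and applying the dual-norm Cauchy–Schwarz inequality (i) together with the Lipschitz hypothesis; the appendix's equivalent characterization of $L$ should make this immediate. Applying this with $\vy = \vx_{k+1} = \vx_{k} - \frac{1}{L}\dualVec{(\gradient f(\vx_{k}))}$ and using (ii) and (iii) gives the per-step progress bound
\[
f(\vx_{k+1}) \leq f(\vx_{k}) - \frac{1}{2L}\,\normDual{\gradient f(\vx_{k})}^{2}.
\]
In particular $f(\vx_{k})$ is monotonically non-increasing, so every iterate lies in the sublevel set $\{\vx : f(\vx) \leq f(\vx_{0})\}$ and $\min_{\xopt \in X^{*}} \norm{\vx_{k} - \xopt} \leq R$.

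Next I would lower bound the progress in terms of $\Delta_{k} \defeq f(\vx_{k}) - \fopt$. Pick $\xopt \in X^{*}$ achieving the minimum distance to $\vx_{k}$. Convexity gives $\fopt \geq f(\vx_{k}) + \innerProduct{\gradient f(\vx_{k})}{\xopt - \vx_{k}}$, and applying (i) yields
\[
\Delta_{k} \leq \normDual{\gradient f(\vx_{k})} \cdot \norm{\vx_{k} - \xopt} \leq R \cdot \normDual{\gradient f(\vx_{k})}.
\]
Combining with the descent inequality gives the recurrence $\Delta_{k+1} \leq \Delta_{k} - \Delta_{k}^{2}/(2LR^{2})$.

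Finally, I would convert this recurrence to the stated rate by the standard telescoping trick. Dividing by $\Delta_{k}\Delta_{k+1}$ (using $\Delta_{k+1} \leq \Delta_{k}$) gives $\frac{1}{\Delta_{k+1}} - \frac{1}{\Delta_{k}} \geq \frac{1}{2LR^{2}}$, so by induction $\frac{1}{\Delta_{k}} \geq \frac{1}{\Delta_{0}} + \frac{k}{2LR^{2}}$. To match the bound $\frac{2LR^{2}}{k+4}$, we need $\frac{1}{\Delta_{0}} \geq \frac{2}{LR^{2}}$, i.e., $\Delta_{0} \leq \frac{LR^{2}}{2}$. This initial bound follows by applying the descent lemma itself with $\vx = \xopt$ and $\vy = \vx_{0}$ and using $\gradient f(\xopt) = 0$: $f(\vx_{0}) - \fopt \leq \frac{L}{2}\norm{\vx_{0} - \xopt}^{2} \leq \frac{L R^{2}}{2}$. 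The only mildly subtle point is handling the sharp-operator identities cleanly — in particular ensuring $\innerProduct{\gradient f(\vx_{k})}{\dualVec{(\gradient f(\vx_{k}))}} = \normDual{\gradient f(\vx_{k})}^{2}$ and $\norm{\dualVec{(\gradient f(\vx_{k}))}} = \normDual{\gradient f(\vx_{k})}$ — but these are exactly the kind of basic duality facts the appendix is set up to supply, so the argument should go through in a few lines once those are in hand.
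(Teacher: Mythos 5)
Your proposal follows essentially the same route as the paper's proof: establish the descent inequality $f(\vx_{k+1}) \leq f(\vx_k) - \tfrac{1}{2L}\normDual{\gradient f(\vx_k)}^2$ via the sharp-operator identities, bound $f(\vx_k) - \fopt \leq R\,\normDual{\gradient f(\vx_k)}$ by convexity, telescope the resulting recurrence, and seed the induction with $\phi_0 \leq \tfrac{L}{2}R^2$ from the descent lemma at $\xopt$. The steps, the key duality facts invoked, and the way the constant $4$ in the denominator arises all match the paper.
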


\begin{proof}\footnote{The structure of this specific proof was modeled after a proof in \cite{nesterov2010efficiency} for a slightly different problem.} By the Lipschitz continuity of the gradient
of $f$ and \lemmaref{lem:lipeq} we have
\[
f(\vx_{k+1})\leq f(\vx_{k})-\frac{1}{2\funLip}\left(\normDual{\gradient f(\vx_{k})}\right)^{2}.
\]
Furthermore, by the convexity of $f$, we know that 
\[
\forall\vx,\vy\in\rn\enspace:\enspace f(\vy)\geq f(\vx)+\innerProduct{\gradient f(\vx)}{\vy-\vx}.
\]
Using this and the fact that $f(\vx_{k})$ decreases monotonically
with $k$, we get 
\[
f(\vx_{k})-\fopt\leq\min_{\xopt\in X^{*}}\innerProduct{\gradient f(\vx_{k})}{\vx_{k}-\xopt}\leq\min_{\xopt\in X^{*}}\normDual{\gradient f(\vx_{k})}\norm{\vx_{k}-\xopt}\leq R\normDual{\gradient f(\vx_{k})}.
\]
Therefore, letting $\phi_{k}\defeq f(\vx_{k})-f^{*}$, we have 
\[
\phi_{k}-\phi_{k+1}\geq\frac{1}{2L}(\normDual{\gradient f(\vx_{k})})^{2}\geq\frac{\phi_{k}^{2}}{2\cdot\funLip\cdot R^{2}}.
\]
Furthermore, since $\phi_{k}\geq\phi_{k+1}$, we have 
\[
\frac{1}{\phi_{k+1}}-\frac{1}{\phi_{k}}=\frac{\phi_{k}-\phi_{k+1}}{\phi_{k}\phi_{k+1}}\geq\frac{\phi_{k}-\phi_{k+1}}{\phi_{k}^{2}}\geq\frac{1}{2\cdot\funLip\cdot R^{2}}.
\]
So, by induction, we have that 
\[
\frac{1}{\phi_{k}}-\frac{1}{\phi_{0}}\geq\frac{k}{2\cdot\funLip\cdot R^{2}}.
\]
Now, note that since $\gradient f(\xopt)=0$, we have that 
\[
f(\vx_{0})\leq f(\xopt)+\innerProduct{\gradient f(\xopt)}{\vx_{0}-\xopt}+\frac{\funLip}{2}\norm{\vx_{0}-\xopt}^{2}\leq f(\xopt)+\frac{\funLip}{2}R^{2}.
\]
So, we have that $\phi_{0}\leq\frac{\funLip}{2}R^{2}$ and putting
this all together yields that 
\[
\frac{1}{\phi_{k}}\geq\frac{1}{\phi_{0}}+\frac{k}{2\cdot\funLip\cdot R^{2}}\geq\frac{4}{2\cdot\funLip\cdot R^{2}}+\frac{k}{2\cdot\funLip\cdot R^{2}}.
\]
\end{proof}

\subsection{Maximum Flow Formulation \label{sub:MaxFlow_formulation}}

For an arbitrary set of demands $\demands\in\rvertvec$
we wish to solve the following \emph{maximum flow} problem
\[
\max_{\alpha\in R,\flow\in\R^{E}}\alpha
\qquad\text{subject to}\qquad\incMatrix^{T}\flow=\alpha\demands\enspace\text{and}\enspace\normInf{\mvar U^{-1}\flow}\leq1.
\]
Equivalently, we want to compute a \emph{minimum congestion flow}
\[
\min_{\flow\in\redgevec~:~\incMatrix^{T}\flow=\demands}\normInf{\capacityMatrix^{-1}\flow}.
\]
where we call $\normInf{\capacityMatrix^{-1}\flow}$ \emph{the congestion}
of $\flow$.

Letting $\flow_{0}\in\redgevec$ be some initial \emph{feasible flow}, i.e.
$\incMatrix^{T}\flow_{0}\defeq\demands$, we write the problem equivalently
as 
\[
\min_{\circVec\in\redgevec~:~\incMatrix^{T}\circVec=0}\normInf{\capacityMatrix^{-1}(\flow_{0}+\circVec)}
\]
where the output flow is $\flow=\flow_{0}+\circVec$. Although the
gradient descent method is applicable to constrained optimization
problems and has a similar convergence guarantee, the sub-problem
involved in each iteration is a constrained optimization problem,
which is quite complicated in this case. Since the domain is a linear
subspace, the constraints can be avoided by projecting the variables
onto this subspace. 

Formally, we define a circulation projection matrix as follows.

\begin{definition}\label{def:circ_proj} A matrix $\mvar{\tilde{P}}\in\R^{E\times E}$ is
a \emph{circulation projection matrix} if it is a projection matrix onto
the circulation space, i.e. it satisfies the following 
\begin{itemize}
\item $\forall\vx\in\redgevec$ we have $\incMatrix^{T}\mvar{\tilde{P}}\vx= \vzero$.
\item $\forall\vx\in\redgevec$ with $\incMatrix^{T}\vx = \vzero$ we have $\mvar P\vx=\vx$.
\end{itemize}
\end{definition}

Then, the problem becomes
\[
\min_{\circVec\in\redgevec}\normInf{\capacityMatrix^{-1}(\flow_{0}+\mvar{\tilde{P}}\circVec)}.
\]
Applying gradient descent on this problem is similar to applying projected
gradient method on the original problem. But, instead of using the
orthogonal projection that is not suitable for $\normInf{\cdot}$, we
will pick a better projection matrix.

Applying the change of basis $\varVec=\capacityMatrix^{-1}\circVec$
and letting $\vvar{\alpha_{0}}=\capacityMatrix^{-1}\flow_{0}$ and
$\mvar P=\mvar U^{-1}\mvar{\tilde{P}}\mvar U$, we write the problem
equivalently as

\[
\min_{\vx\in\redgevec}\normInf{\vvar{\alpha_{0}}+\mvar P\vx}
\]
where the output maximum flow is 
\[
\flow(\vx)=\capacityMatrix(\vvar{\alpha}_{0}+\mvar P\varVec)/\normInf{\capacityMatrix(\vvar{\alpha}_{0}+\mvar P\varVec)}.
\]

\subsection{An Approximate Maximum Flow Algorithm}

Since the gradient descent method requires the objective function to be differentiable, we introduce a smooth version of $\norm{\cdot}_{\infty}$ which we
call $\smax_{t}$. In next section, we prove that there is a convex differentiable function $\smax_t$ such that $\gradient \smax_t$ is Lipschitz continuous with Lipschitz constant $\frac{1}{t}$ and such that
\[
\forall \vx \in \Redgevec
\enspace : \enspace
\normInf{\vx}-t\ln(2m)\leq\smax_{t}(\vx)\leq\normInf{\vx}
\enspace.
\]
Now we consider the following regularized optimization problem 
\[
\min_{\vx\in\redgevec}g_{t}(\vx)\enspace\text{ where }\enspace g_{t}(\vx)=\smax_{t}(\vec{\alpha_{0}}+\mvar P\vx).
\]
For the rest of this section, we consider solving this optimization
problem using gradient descent under $\normInf{\cdot}$.

First, we bound the Lipschitz constant of the gradient of $g_{t}$.

\begin{lemma} \label{lem:lip_constant_of_g}The gradient of $g_{t}$
is Lipschitz continuous with Lipschitz constant $\funLip=\frac{\norm{\mvar P}_{\infty}^{2}}{t}$
\end{lemma}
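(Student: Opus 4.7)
The plan is to prove the lemma by a straightforward chain rule plus operator norm calculation, using the already-stated Lipschitz bound on $\nabla \smax_t$ and the definition of operator norm given in the preliminaries.

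First, I would compute $\nabla g_t$. Since $g_t(\vx) = \smax_t(\vec{\alpha_0} + \mvar{P}\vx)$ is a composition of a smooth function with an affine map, the chain rule gives $\nabla g_t(\vx) = \mvar{P}^T \nabla \smax_t(\vec{\alpha_0} + \mvar{P}\vx)$. To establish Lipschitz continuity of this gradient in the relevant norm, I need to bound $\normDual{\nabla g_t(\vx) - \nabla g_t(\vy)}$ in terms of $\norm{\vx - \vy}$, where the norm on $\redgevec$ we are working with is $\normInf{\cdot}$ and its dual is $\normOne{\cdot}$.

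Next, I would chain the inequalities. Writing $\vu = \vec{\alpha_0} + \mvar{P}\vx$ and $\vvar{v} = \vec{\alpha_0} + \mvar{P}\vy$, we have
\[
\normOne{\nabla g_t(\vx) - \nabla g_t(\vy)} = \normOne{\mvar{P}^T(\nabla \smax_t(\vu) - \nabla \smax_t(\vvar{v}))} \leq \normOne{\mvar{P}^T} \cdot \normOne{\nabla \smax_t(\vu) - \nabla \smax_t(\vvar{v})},
\]
using the definition of the induced $\ell_1$ operator norm of $\mvar{P}^T$. By duality of induced norms, $\normOne{\mvar{P}^T} = \normInf{\mvar{P}}$ (rows of $\mvar{P}^T$ are columns of $\mvar{P}$, and max column sum of $\mvar{P}^T$ equals max row sum of $\mvar{P}$). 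Applying the $1/t$ Lipschitz bound on $\nabla \smax_t$ stated in the paper, this is at most $\frac{1}{t}\normInf{\mvar{P}} \normInf{\vu - \vvar{v}}$. Finally, $\vu - \vvar{v} = \mvar{P}(\vx - \vy)$, so $\normInf{\vu - \vvar{v}} \leq \normInf{\mvar{P}}\normInf{\vx - \vy}$. Combining, the Lipschitz constant is at most $\normInf{\mvar{P}}^2 / t$.

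The only subtle point, and hence the main thing I would want to double-check, is that the $1/t$ Lipschitz constant of $\nabla \smax_t$ is stated in the paper with respect to the $\normInf{\cdot}$ / $\normOne{\cdot}$ pair (rather than, say, the Euclidean pair). The introduction to this section is explicit that the gradient descent analysis is carried out in a general norm with its dual, and \lemmaref{lem:lip_constant_of_g} is used downstream in a context where the relevant norm on $\redgevec$ is $\normInf{\cdot}$; the paper's promised construction of $\smax_t$ in the following section supplies exactly this fact, so I would simply invoke it. With that in hand, no further calculation is required and the lemma follows.
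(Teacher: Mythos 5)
Your proof is correct, and it reaches the same bound by a slightly different route than the paper. You verify the Lipschitz condition on $\nabla g_t$ directly from its definition: chain $\normOne{\mvar P^T \vvar z}\leq \normOne{\mvar P^T}\normOne{\vvar z}$ with the $1/t$-Lipschitz bound on $\nabla\smax_t$ and with $\normInf{\mvar P(\vx-\vy)}\leq\normInf{\mvar P}\normInf{\vx-\vy}$, and then invoke the adjoint identity $\normOne{\mvar P^T}=\normInf{\mvar P}$ (max column sum of $\mvar P^T$ equals max row sum of $\mvar P$) to collect the two factors into $\normInf{\mvar P}^2$. The paper instead passes through the equivalent second-order characterization of Lipschitz gradients (\lemmaref{lem:lipeq}): it writes the quadratic upper bound $\smax_t(\vy)\leq\smax_t(\vx)+\innerProduct{\nabla\smax_t(\vx)}{\vy-\vx}+\frac{1}{2t}\normInf{\vy-\vx}^2$, substitutes the affine map $\vx\mapsto\vec{\alpha_0}+\mvar P\vx$, absorbs the factor $\normInf{\mvar P}^2$ via the operator-norm bound on $\normInf{\mvar P\vy-\mvar P\vx}^2$, and converts back. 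The two arguments are essentially dual to one another; yours needs the extra (standard) fact $\normOne{\mvar P^T}=\normInf{\mvar P}$, while the paper's only ever uses the $\ell_\infty\to\ell_\infty$ operator norm of $\mvar P$ but requires both directions of the equivalence in \lemmaref{lem:lipeq}. Your concern about the norm pair for the $1/t$ Lipschitz constant of $\nabla\smax_t$ is resolved exactly as you anticipated: Lemma~\ref{smax_properties} establishes it via the Hessian bound $\vy^T\nabla^2\smax_t(\vx)\vy\leq\frac{1}{t}\normInf{\vy}^2$ together with Lemma~\ref{lem:lip_formula}, so the constant is indeed with respect to the $\normInf{\cdot}$/$\normOne{\cdot}$ dual pair.
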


\begin{proof} By \lemmaref{lem:lipeq} and the Lipschitz continuity of $\gradient \smax_t$, we have 
\[
\smax_{t}(\vy)\leq\smax_{t}(\vx)+\innerProduct{\nabla\smax_{t}(\vx)}{\vy-\vx}+\frac{1}{2t}\normInf{\vy-\vx}.
\]
Seting $\vx\leftarrow\vec{\alpha_{0}}+\mvar P\vx$ and $\vy\leftarrow\vec{\alpha_{0}}+\mvar P\vy$,
we have 
\begin{eqnarray*}
g_{t}(\vy) & \leq & g_{t}(\vy)+\innerProduct{\nabla\smax_{t}(\vec{\alpha_{0}}+\mvar P\vx)}{\mvar P\vy-\mvar P\vx}+\frac{1}{2t}\normInf{\mvar P\vy-\mvar P\vx}^{2}\\
 & \leq & g_{t}(\vy)+\innerProduct{\mvar P^{T}\nabla\smax_{t}(\vec{\alpha_{0}}+\mvar P\vx)}{\vy-\vx}+\frac{1}{2t}\norm{\mvar P}_{\infty}^{2}\normInf{\vy-\vx}^{2}\\
 & = & g_{t}(\vy)+\innerProduct{\nabla g_{t}(\vx)}{\vy-\vx}+\frac{1}{2t}\norm{\mvar P}_{\infty}^{2}\normInf{\vy-\vx}^{2}.
\end{eqnarray*}
Hence, the result follows from Lemma \ref{lem:lipeq}.\end{proof}

Now, we apply gradient descent to find an approximate max flow as follows. 

\begin{center}
\begin{tabular}{|l|}
\hline 
\textbf{MaxFlow}\tabularnewline
\hline 
\hline 
Input: any initial feasible flow $\vec{f_{0}}$ and $\text{OPT}=\min_{\vx}\normInf{\capacityMatrix^{-1}\flow_{0}+\mvar P\vx}$.\tabularnewline
\hline 
1. Let $\vec{\alpha_{0}}=\left(\mvar I-\mvar P\right)\capacityMatrix^{-1}\flow_{0}$
and $\vec{x_{0}}=0$.\tabularnewline
\hline 
2. Let $t=\varepsilon\text{OPT}/2\ln(2m)$ and $k=300\norm{\mvar P}_{\infty}^{4}\ln(2m)/\varepsilon^{2}$. \tabularnewline
\hline 
3. Let $g_{t}=\smax_{t}(\vec{\alpha_{0}}+\mvar P\vx)$.\tabularnewline
\hline 
4. For $i=1,\cdots,k$\tabularnewline
\hline 
5. $\quad\vx_{i+1}=\vx_{i}-\frac{t}{\norm{\mvar P}_{\infty}^{2}}\dualVec{(\gradient g_{t}(\vx_{i}))}.$
(See Lemma \ref{lem:sharp_formula_Linf_})\tabularnewline
\hline 
6. Output $\capacityMatrix\left(\vec{\alpha_{0}}+\mvar P\vx_{k}\right)/\normInf{\vec{\alpha_{0}}+\mvar P\vx_{k}}$.\tabularnewline
\hline 
\end{tabular}
\par\end{center}

We remark that the initial flow can be obtained by BFS and the OPT value can be approximted using binary search. In Section \ref{sec:nonlinear_projection}, we will give an algorithm with better dependence on $\norm{\mvar P}$.

\begin{theorem} \label{thm:MaxFlowAlgorithm}Let $\mProj$
be a cycle projection matrix, let $\mvar \mProjScaled = \mvar \capacityMatrix^{-1} \mProj \capacityMatrix$, and let $\varepsilon<1$. \textbf{MaxFlow} outputs an $(1 - \varepsilon)$-approximate
maximum flow in time 
\[
O\left(\frac{\norm{\mvar P}_{\infty}^{4}\ln(m)\left(\runtime(\mvar P)+m\right)}{\varepsilon^{2}}\right).
\]
\end{theorem}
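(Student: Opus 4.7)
The plan is to invoke Theorem~\ref{thm:gradient_descent} on the smoothed objective $g_t$, whose Lipschitz constant $L = \norm{\mProjScaled}_\infty^{2}/t$ is given by Lemma~\ref{lem:lip_constant_of_g}. Combining the sandwich bound $\normInf{\vy}-t\ln(2m)\le\smax_t(\vy)\le\normInf{\vy}$ with the descent guarantee yields
\[
\normInf{\vvar{\alpha_0} + \mProjScaled \vx_k} \;\le\; g_t(\vx_k) + t\ln(2m) \;\le\; \mathrm{OPT} + \frac{2LR^{2}}{k+4} + t\ln(2m),
\]
where $g_t^{*}\le\mathrm{OPT}$ comes from applying the upper bound on $\smax_t$ at any true optimizer. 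With the algorithm's choice $t = \varepsilon\,\mathrm{OPT}/(2\ln(2m))$ the smoothing error becomes $\varepsilon\,\mathrm{OPT}/2$, so the whole theorem reduces to establishing a radius bound of the form $R = O(\norm{\mProjScaled}_\infty \cdot \mathrm{OPT})$: then the optimization error is $O(\norm{\mProjScaled}_\infty^{4}\ln(m)\,\mathrm{OPT}/(\varepsilon\, k))$, which drops below $\varepsilon\,\mathrm{OPT}/2$ at the stated $k = 300\norm{\mProjScaled}_\infty^{4}\ln(2m)/\varepsilon^{2}$.

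To bound $R$, I first control the starting point. Given any optimal flow $\flow^{*}$ (of congestion $\mathrm{OPT}$) that routes $\demands$, the difference $\flow_0 - \flow^{*}$ is a circulation and hence fixed by $\mProj$, which gives $\flow_0 - \mProj\flow_0 = \flow^{*} - \mProj\flow^{*}$ and consequently
\[
\vvar{\alpha_0} \;=\; (\mvar I - \mProjScaled)\,\capacityMatrix^{-1}\flow^{*}, \qquad \normInf{\vvar{\alpha_0}} \;\le\; (1 + \norm{\mProjScaled}_\infty)\,\mathrm{OPT}.
\]
For any $\vx$ in the sublevel set $\{g_t \le g_t(0)\}$ the soft-max bound then yields $\normInf{\vvar{\alpha_0}+\mProjScaled\vx} \le \normInf{\vvar{\alpha_0}} + t\ln(2m) = O(\norm{\mProjScaled}_\infty\mathrm{OPT})$. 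To translate this into an $\normInf{\cdot}$ distance to some optimizer, I exploit the fact that the two defining properties of a circulation projection force $\mProj^{2}=\mProj$ and hence $\mProjScaled^{2}=\mProjScaled$. Picking any fixed optimizer $\xopt_0$ and setting $\xopt \defeq (\mvar I - \mProjScaled)\vx + \mProjScaled\,\xopt_0$, idempotence gives $\mProjScaled\,\xopt = \mProjScaled\,\xopt_0$, so $\xopt$ is also optimal, while $\vx - \xopt = \mProjScaled\vx - \mProjScaled\,\xopt_0 = (\vvar{\alpha_0} + \mProjScaled\vx) - (\vvar{\alpha_0} + \mProjScaled\,\xopt_0)$. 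The triangle inequality then bounds $\normInf{\vx - \xopt} = O(\norm{\mProjScaled}_\infty \cdot \mathrm{OPT})$, as required.

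Assembling the pieces gives $\normInf{\vvar{\alpha_0} + \mProjScaled\vx_k} \le (1+\varepsilon)\,\mathrm{OPT}$, so the normalized output routes a $1/((1+\varepsilon)\mathrm{OPT}) \ge (1-\varepsilon)/\mathrm{OPT}$ fraction of $\demands$ at congestion $1$, i.e.\ a $(1-\varepsilon)$-approximate maximum flow. For the runtime, each of the $k$ gradient steps requires one multiplication by $\mProjScaled$ and one by $\mProjScaled^{T}$, each costing $\runtime(\mProjScaled)$, plus $O(m)$ for evaluating $\smax_t$, its gradient, and the $\normInf{\cdot}$ sharp operator (Lemma~\ref{lem:sharp_formula_Linf_}), giving $O(k(\runtime(\mProjScaled)+m)) = O(\norm{\mProjScaled}_\infty^{4}\ln(m)(\runtime(\mProjScaled)+m)/\varepsilon^{2})$. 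The main obstacle is the $R$ bound: a naive argument would pick up a factor of the initial BFS congestion $\normInf{\capacityMatrix^{-1}\flow_0}$, which could be as large as $m$, so the rewriting of $\vvar{\alpha_0}$ in terms of an optimal flow and the snapping-to-optimum via idempotence of $\mProjScaled$ are the essential ingredients for keeping the iteration count polynomial in $\norm{\mProjScaled}_\infty$ and $1/\varepsilon$ only.
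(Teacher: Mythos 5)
Your proof is correct and follows essentially the same approach as the paper's: bound $\normInf{\vvar{\alpha_0}}$ by $(1+\normInf{\mProjScaled})\mathrm{OPT}$ using a true optimizer, bound $R$ by $O(\normInf{\mProjScaled}\mathrm{OPT})$ by snapping any sublevel-set point to an optimizer of the form $(\mvar I - \mProjScaled)\vx + \mProjScaled\xopt_0$ via idempotence, then apply Theorem~\ref{thm:gradient_descent} with $L = \normInf{\mProjScaled}^2/t$ and the chosen $t, k$. Your explicit observation that $\mProjScaled^2 = \mProjScaled$ follows from the two defining properties of a circulation projection, and your remark about why a naive bound would pick up the initial congestion $\normInf{\capacityMatrix^{-1}\flow_0}$, are useful clarifications of facts the paper uses implicitly.
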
 \begin{proof} First, we bound $\normInf{\vec{\alpha_{0}}}$.
Let $\vec{x}^{*}$ be a minimizer of $\min_{\vx}\normInf{\capacityMatrix^{-1}\flow_{0}+\mvar P\vx}$
such that $\mvar P\vec{x}^{*}=\vec{x}^{*}$. Then, we have 
\begin{eqnarray*}
\normInf{\vec{\alpha_{0}}} & = & \normInf{\capacityMatrix^{-1}\flow_{0}-\mvar P\capacityMatrix^{-1}\flow_{0}}\\
 & \leq & \normInf{\capacityMatrix^{-1}\flow_{0}+\vec{x}^{*}}+\normInf{\vec{x}^{*}+\mvar P\capacityMatrix^{-1}\flow_{0}}\\
 & = & \normInf{\capacityMatrix^{-1}\flow_{0}+\vec{x}^{*}}+\normInf{\mvar P\vec{x}^{*}+\mvar P\capacityMatrix^{-1}\flow_{0}}\\
 & \leq & \left(1+\normInf{\mvar P}\right)\normInf{\capacityMatrix^{-1}\flow_{0}+\vec{x}^{*}}\\
 & = & \left(1+\normInf{\mvar P}\right)\text{OPT}.
\end{eqnarray*}

Second, we bound $R$ in Theorem \ref{thm:gradient_descent}. Note
that 
\[
g_{t}(\vx_{0})=\smax_{t}(\vec{\alpha_{0}})\leq\normInf{\vec{\alpha_{0}}}\leq\left(1+\normInf{\mvar P}\right)\text{OPT}.
\]
Hence, the condition $g_{t}(\vx)\leq g_{t}(\vx_{0})$ implies that
\[
\normInf{\vec{\alpha_{0}}+\mvar P\vx}\leq\left(1+\normInf{\mvar P}\right)\text{OPT}+t\ln(2m).
\]
For any $\vy\in X^{*}$ let $\vc=\vx-\mvar P\vx+\vy$ and note
that $\mvar P\vc=\mvar P\vy$ and therefore $\vc\in X^{*}$. Using
these facts, we can bound $R$ as follows 
\begin{align*}
R & =\max_{\vx\in\redgevec~:~g_{t}(\vx)\leq g_{t}(\vx_{0})}\left\{ \min_{\vx^{*}\in X^{*}}\normInf{\vx-\vx^{*}}\right\} \\
 & \leq\max_{\vx\in\redgevec~:~g_{t}(\vx)\leq g_{t}(\vx_{0})}\normInf{\vx-\vc}\\
 & \leq\max_{\vx\in\redgevec~:~g_{t}(\vx)\leq g_{t}(\vx_{0})}\normInf{\mvar P\vx-\mvar P\vy}\\
 & \leq\max_{\vx\in\redgevec~:~g_{t}(\vx)\leq g_{t}(\vx_{0})}\normInf{\mvar P\vx}+\normInf{\mvar P\vy}\\
 & \leq2\normInf{\vec{\alpha_{0}}}+\normInf{\vec{\alpha_{0}}+\mvar P\vx}+\normInf{\vec{\alpha_{0}}+\mvar P\vy}\\
 & \leq2\normInf{\vec{\alpha_{0}}}+2\normInf{\vec{\alpha_{0}}+\mvar P\vx}\\
 & \leq4\left(1+\normInf{\mvar P}\right)\text{OPT}+2t\ln(2m).
\end{align*}
From Lemma \ref{lem:lip_constant_of_g}, we know that the Lipschitz
constant of $\gradient g_t$ is $\norm{\mvar P}_{\infty}^{2}/t$. Hence, Theorem
\ref{thm:gradient_descent} shows that
\begin{eqnarray*}
g_{t}(\vx_{k}) & \leq & \min_{\vx}g_{t}(\vx)+\frac{2\cdot L\cdot R^{2}}{k+4}\\
 & \leq & \text{OPT}+\frac{2\cdot L\cdot R^{2}}{k+4}.
\end{eqnarray*}
So, we have
\begin{eqnarray*}
\normInf{\vec{\alpha_{0}}+\mvar P\vx_{k}} & \leq & g_{t}(\vx_{k})+t\ln(2m)\\
 & \leq & \text{OPT}+t\ln(2m)+\frac{2\norm{\mvar P}_{\infty}^{2}}{t(k+4)}\left(4\left(1+\normInf{\mvar P}\right)\text{OPT}+2t\ln(2m)\right)^{2}.
\end{eqnarray*}
Using $t=\varepsilon\text{OPT}/2\ln(2m)$ and $k=300\norm{\mvar P}_{\infty}^{4}\ln(2m)/\varepsilon^{2}$,
we have 
\[
\normInf{\vec{\alpha_{0}}+\mvar P\vx_{k}}\leq(1+\varepsilon)\text{OPT}.
\]
Therefore, $\vec{\alpha_{0}}+\mvar P\vx_{k}$ is an $(1-\varepsilon)$
approximate maximum flow.

Now, we estimate the running time. In each step $5$, we are required
to compute $\dualVec{(\gradient g(\vx_{k}))}$. The gradient 
\[
\nabla g(\vx)=\mvar P^{T}\nabla\smax_{t}(\vec{\alpha_{0}}+\mvar P\vx)
\]
can be computed in $O(\runtime(\mvar P)+m)$ using the formula of
the gradient of $\smax_{t}$, applications of $\mvar P$ and $\mvar P^{T}$.
Lemma \ref{lem:sharp_formula_Linf_} shows that the $\#$ operator
can be computed in $O(m)$. \end{proof}

\begin{lemma} \label{lem:sharp_formula_Linf_}In $\normInf{\cdot}$,
the $\#$ operator is given by the explicit formula
\[
\left(\dualVec{\vx}\right)_{e}=\mathrm{sign}(x_{e})\norm{\vx}_{1}\quad\text{for }e\in E.
\]
\end{lemma}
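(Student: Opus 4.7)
The plan is to evaluate the defining optimization problem
\[
\dualVec{\vx} = \argmax_{\vs \in \redgevec} \innerProduct{\vx}{\vs} - \tfrac{1}{2}\normInf{\vs}^{2}
\]
directly by exploiting the coordinate-wise structure of the $\ell_\infty$ norm. The key observation is that the penalty term $\normInf{\vs}^2 = (\max_e |s_e|)^2$ depends on $\vs$ only through its maximum absolute value, so I can split the maximization into an outer maximization over the scalar $r = \normInf{\vs}$ and an inner maximization over all $\vs$ with $\normInf{\vs} \le r$.

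For the inner problem, I would fix $r \geq 0$ and maximize the linear functional $\innerProduct{\vx}{\vs} = \sum_e x_e s_e$ subject to $|s_e| \leq r$ for every $e$. Since the constraints decouple across coordinates, the optimum is achieved coordinate-wise by $s_e = r\cdot\mathrm{sign}(x_e)$, which yields the inner optimum value $r \sum_e |x_e| = r \normOne{\vx}$. Substituting back, the outer problem reduces to the one-dimensional concave maximization
\[
\max_{r \ge 0} \; r\,\normOne{\vx} - \tfrac{1}{2} r^{2},
\]
whose first-order condition gives $r = \normOne{\vx}$. Plugging this back into the inner optimizer produces $(\dualVec{\vx})_e = \mathrm{sign}(x_e)\,\normOne{\vx}$, which is the claimed formula.

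I do not anticipate any real obstacle here; the only mildly delicate point is that the argmax in the definition need not be unique — when $x_e = 0$, the corresponding coordinate of $\dualVec{\vx}$ can be any value in $[-\normOne{\vx},\normOne{\vx}]$ without changing the objective — but the formula in the statement simply picks a canonical representative (using the convention $\mathrm{sign}(0) = 0$ or any fixed value, consistent with the surrounding appendix material). Finally, the $O(m)$ runtime claim is immediate from the formula, since computing $\normOne{\vx}$ and then scaling signs both take a single pass over the $m$ coordinates of $\vx$.
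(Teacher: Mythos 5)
Your proof is correct, and it reaches the result by a slightly more self-contained route than the paper. Both proofs share the first observation: at the optimum, each coordinate $(\dualVec{\vx})_e$ must sit at $\pm\normInf{\dualVec{\vx}}$ with sign matching $x_e$ (the paper states this as ``it is easy to see''; you derive it as the coordinate-wise maximizer of the inner linear program). Where you diverge is in pinning down the magnitude. The paper invokes the already-proven Fact~\ref{fact:dualnorm_norm}, namely $\norm{\dualVec{\vx}} = \normDual{\vx}$, together with the standard fact that the dual of $\ell_\infty$ is $\ell_1$, to conclude $\normInf{\dualVec{\vx}} = \normOne{\vx}$. You instead carry out the two-level reduction explicitly and solve the resulting scalar concave problem $\max_{r\ge 0}\, r\normOne{\vx} - \tfrac{1}{2}r^2$, getting $r = \normOne{\vx}$ from the first-order condition. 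Your version is elementary and does not depend on the appendix machinery, which makes the computation fully transparent; the paper's version is shorter because Fact~\ref{fact:dualnorm_norm} is already available and used elsewhere. One small imprecision worth noting: when you split into the outer scalar and an inner maximization ``over all $\vs$ with $\normInf{\vs}\le r$,'' you should really be maximizing over $\normInf{\vs}= r$ to make the penalty term exactly $\tfrac{1}{2}r^2$; this is harmless here because the max of the linear inner objective over the ball is attained on the sphere anyway, but it deserves a one-line remark. Also, the $O(m)$ runtime is not actually part of the lemma statement (it appears later, in the proof of Theorem~\ref{thm:MaxFlowAlgorithm}), so that final sentence is extraneous, though of course true.
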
 \begin{proof} Recall that 
\begin{eqnarray*}
\dualVec{\vx} & = & \argmax_{\vs\in\R}\innerProduct{\vx}{\vs}-\frac{1}{2}\normInf{\vec{s}}^{2}.
\end{eqnarray*}

It is easy to see that for all $e\in E$, $||\dualVec{\vx}||_{\infty}=\left|\left(\dualVec{\vx}\right)_{e}\right|$
. In particular, we have 
\[
\left(\dualVec{\vx}\right)_{e}=\mathrm{sign}(x_{e})||\dualVec{\vx}||_{\infty}.
\]
Fact \ref{fact:dualnorm_norm} shows that $||\dualVec{\vx}||_{\infty}=\norm{\vx}_{1}$
and the result follows.\end{proof}

\subsection{Properties of soft max}

In this section, we define $\smax_{t}$ and discuss its properties.
Formally, the regularized convex function can be found by smoothing
technique using convex conjugate \cite{nesterov2005smooth} \cite[Sec 5.4]{bertsekas1999nonlinear}.
For simplicity and completeness, we define it explicitly and prove
its properties directly. Formally, we define 
\[
\forall\vx\in\redgevec,\forall t\in\rPos\enspace:\enspace\smax_{t}(\vx)\defeq t\ln\left(\frac{\sum_{e\in E}\exp\left(\frac{x_{e}}{t}\right)+\exp\left(-\frac{x_{e}}{t}\right)}{2m}\right).
\]
For notational simplicity, for all $\varVec$ where this vector is
clear from context, we define $\vvar c$ and $\vvar s$ as follows
\[
\forall e\in E\enspace:\enspace\vvar c_{e}\defeq\exp\left(\frac{x_{e}}{t}\right)+\exp\left(-\frac{x_{e}}{t}\right)\enspace\text{ and }\enspace\vvar s_{e}\defeq\exp\left(\frac{x_{e}}{t}\right)-\exp\left(-\frac{x_{e}}{t}\right),
\]
where the letters are chosen due to the very close resemblance to
hyperbolic sine and hyperbolic cosine.

\begin{lemma}\label{lem:smax_grad} 
\[
\forall\varVec\in\R^{n}\enspace:\enspace\gradient\smax_{t}(\varVec)=\frac{1}{\onesVec^{T}\vvar c}\vvar s
\]

\[
\forall\varVec\in\R^{n}\enspace:\enspace\gradient^{2}\smax_{t}(\varVec)=\frac{1}{t\left(\onesVec^{T}\vvar c\right)}\left[\diag(\vvar c)-\frac{\vvar s\vvar s^{T}}{\onesVec^{T}\vvar c}\right]
\]
\end{lemma}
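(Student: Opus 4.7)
This is a direct computation from the definition of $\smax_t$. The plan is to rewrite $\smax_t(\vx) = t \ln(\onesVec^T \vvar c) - t\ln(2m)$, differentiate once to get the gradient, and then differentiate again entrywise to get the Hessian. The key observation that makes the formulas clean is that $\partial_e \vvar c_e = \vvar s_e / t$ and $\partial_e \vvar s_e = \vvar c_e / t$ (the hyperbolic-cosh/sinh relationship built into the definition), while cross partials $\partial_f \vvar c_e$ and $\partial_f \vvar s_e$ for $f \neq e$ vanish because $\vvar c_e$ and $\vvar s_e$ depend only on $x_e$.

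For the gradient, applying the chain rule to $t\ln(\onesVec^T \vvar c)$ gives
\[
\partial_e \smax_t(\vx) = \frac{t}{\onesVec^T \vvar c} \cdot \partial_e(\onesVec^T \vvar c) = \frac{t}{\onesVec^T \vvar c} \cdot \frac{\vvar s_e}{t} = \frac{\vvar s_e}{\onesVec^T \vvar c},
\]
which assembles into the claimed vector identity $\gradient \smax_t(\vx) = \vvar s / (\onesVec^T \vvar c)$.

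For the Hessian, I differentiate $\partial_e \smax_t(\vx) = \vvar s_e / (\onesVec^T \vvar c)$ by applying the quotient rule. In the diagonal case $f = e$, the numerator contributes $\partial_e \vvar s_e = \vvar c_e / t$ and the denominator contributes $\partial_e(\onesVec^T \vvar c) = \vvar s_e / t$, yielding
\[
\partial_e^2 \smax_t(\vx) = \frac{\vvar c_e/t}{\onesVec^T \vvar c} - \frac{\vvar s_e \cdot \vvar s_e/t}{(\onesVec^T \vvar c)^2} = \frac{1}{t(\onesVec^T \vvar c)}\left[\vvar c_e - \frac{\vvar s_e^2}{\onesVec^T \vvar c}\right].
\]
In the off-diagonal case $f \neq e$, the numerator derivative vanishes and only the denominator contributes, giving
\[
\partial_f \partial_e \smax_t(\vx) = -\frac{\vvar s_e \cdot \vvar s_f/t}{(\onesVec^T \vvar c)^2} = \frac{1}{t(\onesVec^T \vvar c)}\left[-\frac{\vvar s_e \vvar s_f}{\onesVec^T \vvar c}\right].
\]
Both cases are captured uniformly by the matrix expression $\frac{1}{t(\onesVec^T \vvar c)}\bigl[\diag(\vvar c) - \vvar s \vvar s^T / (\onesVec^T \vvar c)\bigr]$, which is the claimed formula.

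There is no real obstacle here; the lemma is just a bookkeeping exercise, and the only thing to be careful about is tracking the factor of $1/t$ coming from the chain rule on $\exp(\pm x_e/t)$ each time a derivative lands on $\vvar c$ or $\vvar s$. Consequently the proof should just present the two calculations above, one for $\gradient \smax_t$ and one for each of the diagonal and off-diagonal Hessian entries, and assemble them.
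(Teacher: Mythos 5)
Your proof is correct and takes essentially the same route as the paper: a direct entrywise differentiation of $t\ln(\onesVec^T\vvar{c}) - t\ln(2m)$, first by the chain rule for the gradient and then by the quotient rule for the Hessian. Your use of the hyperbolic-style relations $\partial_e \vvar{c}_e = \vvar{s}_e/t$ and $\partial_e \vvar{s}_e = \vvar{c}_e/t$ is a minor streamlining of the paper's more explicit exponential bookkeeping, but it is the same calculation.
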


\begin{proof} For all $i\in E$ and $\vx\in\redgevec$, we have 
\begin{align*}
\frac{\partial}{\partial x_{i}}\smax_{t}(\varVec) & =\frac{\partial}{\partial x_{i}}\left(t\ln\left(\frac{\sum_{e\in E}\exp\left(\frac{x_{e}}{t}\right)+\exp\left(-\frac{x_{e}}{t}\right)}{2m}\right)\right)\\
 & =\frac{\exp\left(\frac{x_{i}}{t}\right)-\exp\left(-\frac{x_{i}}{t}\right)}{\sum_{e\in E}\exp\left(\frac{x_{e}}{t}\right)+\exp\left(-\frac{x_{e}}{t}\right)}.
\end{align*}
For all $i,j\in E$ and $\varVec\in\redgevec$, we have 
\begin{align*}
\frac{\partial^{2}}{\partial x_{i}\partial x_{j}}\smax_{t}(\varVec) & =\frac{\partial^{2}}{\partial x_{i}\partial x_{j}}\left(t\ln\left(\frac{\sum_{e\in E}\exp\left(\frac{x_{e}}{t}\right)+\exp\left(-\frac{x_{e}}{t}\right)}{2m}\right)\right)\\
 & =\frac{\partial}{\partial j}\left[\frac{\exp\left(\frac{x_{i}}{t}\right)-\exp\left(-\frac{x_{i}}{t}\right)}{\sum_{e\in E}\exp\left(\frac{x_{e}}{t}\right)+\exp\left(-\frac{x_{e}}{t}\right)}\right]\\
 & =\frac{1}{t}\frac{\left(\onesVec^{T}\vvar c\right)\indicVec{i=j}\left(\vvar c_{i}\right)-\vvar s_{i}\vvar s_{j}}{\left(\onesVec^{T}\vvar c\right)^{2}}.
\end{align*}
\end{proof}

\begin{lemma} \label{smax_properties}The function $\smax_{t}$ is
a convex continuously differentiable function and it has Lipschitz
continuous gradient with Lipschitz constant $1/t$ and 
\[
\normInf{\vx}-t\ln(2m)\leq\smax_{t}(\vx)\leq\normInf{\vx}
\]
for $\vx\in\redgevec$.\end{lemma}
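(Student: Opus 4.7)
The plan is to first dispatch the sandwich bound, which is a direct calculation, and then use the explicit gradient and Hessian formulas from Lemma \ref{lem:smax_grad} to get convexity, continuous differentiability, and the Lipschitz estimate in $\normInf{\cdot}$ in one shot. Continuous differentiability is automatic since $\smax_t$ is a log of a sum of exponentials, which is real-analytic.

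For the sandwich bound, I would observe that for every coordinate $e$, $\exp(x_e/t)+\exp(-x_e/t) \leq 2\exp(|x_e|/t)\leq 2\exp(\normInf{\vx}/t)$, so summing over $e\in E$ and taking $t\ln(\cdot)$ yields $\smax_t(\vx)\leq \normInf{\vx}$. For the other direction, the sum is at least its largest term $\exp(\normInf{\vx}/t)$, giving $\smax_t(\vx) \geq t\ln(\exp(\normInf{\vx}/t)/(2m)) = \normInf{\vx} - t\ln(2m)$.

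For convexity and the Lipschitz constant, I would use the Hessian formula from Lemma \ref{lem:smax_grad}, namely $\gradient^{2}\smax_{t}(\vx) = \frac{1}{t(\onesVec^T\vvar c)}\bigl[\diag(\vvar c) - \frac{\vvar s\vvar s^T}{\onesVec^T\vvar c}\bigr]$. For any test vector $\vy$, Cauchy--Schwarz gives
\[
(\vvar s^T\vy)^2 = \Bigl(\sum_e s_e y_e\Bigr)^2 \leq \Bigl(\sum_e c_e y_e^2\Bigr)\Bigl(\sum_e \tfrac{s_e^2}{c_e}\Bigr),
\]
and since $|s_e|\le c_e$ we have $s_e^2/c_e \leq c_e$, so $\sum_e s_e^2/c_e \leq \onesVec^T\vvar c$. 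Plugging in shows $(\vvar s^T\vy)^2/(\onesVec^T\vvar c) \leq \vy^T\diag(\vvar c)\vy$, hence $\vy^T\gradient^{2}\smax_{t}(\vx)\vy \geq 0$, giving convexity. For the Lipschitz bound, I would just drop the (negative) rank-one term and estimate
\[
\vy^T\gradient^{2}\smax_{t}(\vx)\vy \leq \tfrac{1}{t(\onesVec^T\vvar c)}\sum_e c_e y_e^2 \leq \tfrac{1}{t(\onesVec^T\vvar c)}\normInf{\vy}^2 \sum_e c_e = \tfrac{1}{t}\normInf{\vy}^2,
\]
which bounds the Hessian in the operator norm induced by $\normInf{\cdot}$ by $1/t$. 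Invoking the Hessian characterization of the Lipschitz constant in Lemma \ref{lem:lipeq} (which, per its statement, is exactly the equivalence between a Lipschitz gradient and the upper quadratic bound), this yields Lipschitz continuity of $\gradient\smax_t$ with constant $1/t$ in $\normInf{\cdot}$.

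The mildly delicate step is the Hessian bound, because we have to control a symmetric matrix in the $\normInf{\cdot}\to\normOne{\cdot}$ operator norm rather than the usual spectral norm. The trick above — Cauchy--Schwarz to show the rank-one correction is dominated by the diagonal part for convexity, then just discarding it for the upper bound — is what makes both estimates clean. Everything else (differentiability, the sandwich inequality) is routine.
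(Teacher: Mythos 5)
Your proposal is correct and follows essentially the same structure as the paper's proof: bound the Hessian quadratic form above by $\frac{1}{t}\normInf{\vy}^2$ (dropping the rank-one term), show it is nonnegative via Cauchy--Schwarz, conclude Lipschitz continuity of the gradient from the Hessian bound, and verify the sandwich inequality by direct estimation of the log-sum-exp. The only substantive difference is cosmetic: in the Cauchy--Schwarz step you weight by $\sqrt{c_e}$ and $s_e/\sqrt{c_e}$ and then bound $s_e^2/c_e \leq c_e$, whereas the paper weights by $\sqrt{|s_e|}$ and uses $|s_e|\leq c_e$; both are fine. One small citation slip: the lemma you actually need to pass from the Hessian quadratic-form bound to Lipschitz continuity of the gradient is \lemmaref{lem:lip_formula}, not \lemmaref{lem:lipeq}; the latter states the equivalence between Lipschitz gradient and the first-order quadratic upper bound, while the former is the statement that packages the second-order (Hessian) estimate into exactly the conclusion you want, and is what the paper cites.
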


\begin{proof} By the formulation of the Hessian, for all $\vx,\vy\in\redgevec$,
we have 
\[
\vy^{T}\left(\gradient^{2}\smax_{t}(\vx)\right)\vy\leq\frac{\sum_{i}c_{i}\vy_{i}^{2}}{t(\onesVec^{T}\vc)}\leq\frac{\sum_{i}c_{i}(\max_{j}\vy_{j}^{2})}{t(\onesVec^{T}\vc)}\leq\frac{1}{t}\normInf y^{2}.
\]
On the other side, for all $\vx,\vy\in\redgevec$, we have by $s_{i}\leq |s_i| \leq c_{i}$
and Cauchy Schwarz shows that 
\[
\vy^{T}\vvar s\vvar s^{T}\vy
\leq(\onesVec^{T}\vvar |s|)(\vy^{T}\diag(\vvar |s|)\vy).
\leq(\onesVec^{T}\vvar c)(\vy^{T}\diag(\vvar c)\vy).
\]
and hence 
\[
0\leq\vy^{T}\left(\gradient^{2}\smax_{t}(\vx)\right)\vy.
\]
Thus, the first part follows from Lemma \ref{lem:lip_formula}. For
the later part, we have 
\[
\normInf{\vx}\geq t\ln\left(\frac{\sum_{e\in E}\exp\left(\frac{x_{e}}{t}\right)+\exp\left(-\frac{x_{e}}{t}\right)}{2m}\right)\geq t\ln\left(\frac{\exp\left(\frac{\normInf{\vx}}{t}\right)}{2m}\right)=\normInf{\vx}-\ln(2m).
\]

\end{proof}

\section{Oblivious Routing}

In the previous sections, we saw how a circulation projection matrix can be used to solve max flow. In the next few sections, we show how to efficiently construct a circulation projection matrix to obtain an almost linear time algorithm for solving max flow.

Our proof focuses on the notion of (linear) oblivious routings. Rather than constructing the circulation projection matrix directly, we show how the efficient construction of an oblivious routing algorithm with a good competitive ratio immediately allows us to produce a circulation projection matrix.

In the remainder of this section, we formally define oblivious routings and prove the relationship between oblivious routing and circulation projection matrices (\sectionref{sec:oblivious}), provide a high level overview of our recursive approach and state the main theorems we will prove in later sections (\sectionref{sec:obliv_route:embed}). Finally, we prove the main theorem about our almost-linear-time construction of circulation projection  with norm $2^{O(\sqrt{\log(n)\log\log(n)})}$ assuming the proofs in the later sections (\sectionref{sec:obliv_route:construction_proof}). 

\subsection{From Oblivious Routing to Circulation Projection}
\label{sec:oblivious}

Here we provide definitions and prove basic properties of \emph{oblivious routings}, that is, fixed mappings from demands to flows that meet the input demands. While non-linear algorithms could be considered, we restrict our attention to linear oblivious routing strategies and use the term oblivious routing to refer to the linear subclass for the remainder of the paper.\footnote{Note that the oblivous routing strategies considered in \cite{KelnerMaymunkov09} \cite{LawlerNarayana09} \cite{Racke:2008:OHD:1374376.1374415} are all linear oblivious routing strategies.}

\begin{definition}[Oblivious Routing]\label{def:obliv_routing}
An \emph{oblivious routing} on graph $G = (V, E)$ is a linear operator $\mRoute \in \Rev$ such that for all demands $\demands \in \rvertvec$, $\incMatrix^T \mvar{A} \demands = \demands$. We call $\mRoute \demands$ the \emph{routing} of $\demands$ by $\mRoute$. 
\end{definition}

Oblivious routings get their name due to the fact that, given an oblivious routing strategy $\mRoute$ and a set of demands $D = \{\demands_1, \ldots, \demands_k\},$ one can construct a multicommodity flow satisfying all the demands in $D$ by using $\mRoute$ to route each demand individually, obliviously to the existence of the other demands. We measure the \emph{competitive ratio}\footnote{Again note that here and in the rest of the paper we focus our analysis on competitive ratio with respect to norm $\normInf{\cdot}$. However, many of the results present are easily generalizable to other norms. These generalizations are outside the scope of this paper.} of such an oblivious routing strategy to be the ratio of the worst relative congestion of such a routing to the minimal-congestion routing of the demands.

\begin{definition}[Competitive Ratio]\label{def:competitive_ratio}
The \emph{competitive ratio of oblivious routing $\mvar{A} \in \R^{E \times V}$}, denoted $\rho(\mvar{A})$, is given by
\[
\rho(\mRoute)
\defeq 
\max_{\{\demands_i\} ~ : ~ \forall i ~ \demands_i \perp \onesVec}
\frac{\congest(\{\mRoute \demands_i\})}{\opt(\{\demands_i\})}
\]
\end{definition}

At times, it will be more convenient to analyze an oblivious routing as a linear algebraic object rather a combinatorial algorithm; towards this end, we note that the competitive ratio of a linear oblivious routing strategy can be gleaned from the operator norm of a related matrix (see also \cite{LawlerNarayana09} and \cite{KelnerMaymunkov09}). Below, we state and prove a generalization of this result to weighted graphs that will be vital to relating $\mRoute$ to $\mProj$.

\begin{lemma} \label{lem:equivalence_of_competetivity_and_norm}
For any oblivious routing $\mvar{A},$ we have 
$
\rho(\mvar{A}) = \normInf{\capacityMatrix^{-1} \mvar{A} \incMatrix^T \capacityMatrix}
$
\end{lemma}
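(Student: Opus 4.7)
The plan is to define $\mvar{M} \defeq \capacityMatrix^{-1} \mvar{A} \incMatrix^T \capacityMatrix \in \R^{E \times E}$ and show both directions of the equality $\rho(\mvar{A}) = \normInf{\mvar{M}}$ by translating between flows on the edge side and demands on the vertex side. The key identity is that for any flow $\flow$ with demand $\demands = \incMatrix^T \flow$, setting $\vx = \capacityMatrix^{-1} \flow$ gives $\mvar{M} \vx = \capacityMatrix^{-1} \mvar{A} \demands$ and $\normInf{\vx} = \congest(\flow)$. Thus demands/flows correspond bijectively (up to optimal routing) with pairs $(\mvar{M}\vx, \vx)$ in the operator norm definition.

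For the upper bound $\rho(\mvar{A}) \leq \normInf{\mvar{M}}$, I would fix an arbitrary demand set $\{\demands_i\}$ and any routing $\{\flow_i\}$ with $\incMatrix^T \flow_i = \demands_i$. Using that $\capacityMatrix$ is positive diagonal (so it commutes with the entry-wise absolute value), I rewrite
\[
\congest(\{\mvar{A}\demands_i\}) = \normInfFull{\sum_i |\capacityMatrix^{-1} \mvar{A} \demands_i|} = \normInfFull{\sum_i |\mvar{M}\, \capacityMatrix^{-1}\flow_i|}.
\]
Then I apply the entry-wise inequality $|\mvar{M}\vy| \leq |\mvar{M}|\,|\vy|$ together with the standard fact that $\normInf{|\mvar{M}|} = \normInf{\mvar{M}}$ (the induced $\ell_\infty$ norm is the max absolute row sum) to obtain the upper bound $\congest(\{\mvar{A}\demands_i\}) \leq \normInf{\mvar{M}} \cdot \congest(\{\flow_i\})$. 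Taking an infimum over feasible $\{\flow_i\}$ on the right gives $\congest(\{\mvar{A}\demands_i\}) \leq \normInf{\mvar{M}} \cdot \opt(\{\demands_i\})$, and then a supremum over demand sets proves $\rho(\mvar{A}) \leq \normInf{\mvar{M}}$.

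For the lower bound $\rho(\mvar{A}) \geq \normInf{\mvar{M}}$, I would instantiate a single-demand witness. Let $\vx^*$ attain the operator norm, so $\normInf{\mvar{M}\vx^*} = \normInf{\mvar{M}} \cdot \normInf{\vx^*}$; set $\flow^* \defeq \capacityMatrix \vx^*$ and $\demands^* \defeq \incMatrix^T \flow^*$. Automatically $\demands^* \perp \onesVec$ since $\incMatrix \onesVec = \vzero$, so $\{\demands^*\}$ is admissible in the definition of $\rho$. Because $\flow^*$ is a feasible routing of $\demands^*$, we have $\opt(\{\demands^*\}) \leq \normInf{\capacityMatrix^{-1}\flow^*} = \normInf{\vx^*}$, while $\congest(\{\mvar{A}\demands^*\}) = \normInf{\mvar{M}\vx^*} = \normInf{\mvar{M}}\,\normInf{\vx^*}$. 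Dividing gives $\rho(\mvar{A}) \geq \normInf{\mvar{M}}$, completing the equality.

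I expect the only delicate step to be the entry-wise manipulation in the upper bound: one must be careful that the absolute value inside $\congest(\cdot)$ is applied \emph{after} the linear action of $\mvar{M}$, so the right inequality to invoke is $|\mvar{M}\vy| \leq |\mvar{M}|\,|\vy|$ coordinatewise, followed by $\normInf{|\mvar{M}|\,\vz} \leq \normInf{\mvar{M}}\,\normInf{\vz}$ for the non-negative vector $\vz = \sum_i |\capacityMatrix^{-1}\flow_i|$. Everything else is a direct unpacking of the definitions of $\congest$, $\opt$, and the operator norm, together with the fact that $\capacityMatrix^{-1}$ passes through the entrywise absolute value.
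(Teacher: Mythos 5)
Your proposal is correct and follows essentially the same route as the paper's proof: both arguments hinge on rewriting $\congest(\{\mvar{A}\demands_i\})$ through the matrix $\capacityMatrix^{-1}\mvar{A}\incMatrix^T\capacityMatrix$ applied to the capacity-scaled flows, using the entrywise bound (the paper's ``splitting into edge demands doesn't reward cancellations'') together with the max-row-sum characterization of the induced $\ell_\infty$ norm. The only cosmetic difference is organizational — you split the equality into two inequalities with a single-demand witness for the lower bound, whereas the paper runs a chain of equalities through the edge-demand set $D_\infty$ — and your version makes the attainment of the supremum slightly more explicit.
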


\begin{proof}
For a set of demands $D,$ let $D_\infty$ be the set of demands that results by taking the routing of every demand in $D$ by $\opt(D)$ and splitting it up into demands on every edge corresponding to the flow sent by $\opt(D)$. Now, clearly $\opt(D) = \opt(D_\infty)$ since routing $D$ can be used to route $D_\infty$ and vice versa, and clearly $\congest(\mvar{A} D) \leq \congest(\mvar{A} D_\infty)$ by the linearity of $\mvar{A}$ (routing $D_\infty$ simply doesn't reward $\mvar{A}$ routing for cancellations). Therefore,
\begin{align*}
\rho_p(\mvar{A})
&=
\max_{D} \frac{\congest(\{\mvar{A} D\})}{\opt(D)}
= 
\max_{D_\infty} \frac{\congest(\mvar{A} D_{\infty})}{\opt(D_\infty)}
= \max_{\vx \in \redgevec}
\frac{\norm{\sum_{e \in E} \vx_e \abs{\capacityMatrix^{-1} \mvar{A} \demandsEdge{e}}}_\infty}{\norm{\capacityMatrix^{-1} \vx}_\infty}
\\
&= 
\max_{\vx \in \redgevec}
\frac{\normInf{ \abs{\capacityMatrix^{-1} \mvar{A} \incMatrix^T} \vx}}{\normInf{\capacityMatrix^{-1} \vx}}
= 
\max_{\vx \in \redgevec}
\frac{\normInf{ \abs{\capacityMatrix^{-1} \mvar{A} \incMatrix^T \capacityMatrix} \vx}}{\normInf{\vx}}.
\end{align*}
\end{proof}

To make this lemma easily applicable in a variety of settings, we make use of the following easy to prove lemma.

\begin{lemma}[Operator Norm Bounds]
\label{lem:operator_norm_bounds}
For all $\varMat \in \R^{n \times m},$ we have that
\[
\normInf{\varMat}
= \normInf{ \abs{\varMat} }
= \normInf{ \abs{\varMat} \onesVec }
= \max_{i \in {n}} \norm{|\varMat|^T \indicVec{i}}_1
\]
\end{lemma}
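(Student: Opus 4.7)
The plan is to prove the chain of equalities by demonstrating the cyclic inequalities $\normInf{\varMat} \leq \normInf{|\varMat|} \leq \normInf{|\varMat|\onesVec} \leq \normInf{\varMat}$, and then observe that the final equality $\normInf{|\varMat|\onesVec} = \max_{i} \norm{|\varMat|^T\indicVec{i}}_1$ is immediate from the definition of the $\ell_1$ norm, since the $i$-th coordinate of $|\varMat|\onesVec$ is exactly $\sum_j |\varMat_{ij}| = \||\varMat|^T \indicVec{i}\|_1$, so taking the $\ell_\infty$ norm (a max of nonnegative entries) yields the claim.

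For the first inequality $\normInf{\varMat} \leq \normInf{|\varMat|}$, I would take any $\vx$ with $\normInf{\vx}\leq 1$ and apply the triangle inequality entrywise: $|(\varMat\vx)_i| = \bigl|\sum_j \varMat_{ij}\vx_j\bigr| \leq \sum_j |\varMat_{ij}|\,|\vx_j| = (|\varMat|\,|\vx|)_i$. Since $\normInf{|\vx|} = \normInf{\vx}\leq 1$, taking the max over $i$ gives $\normInf{\varMat\vx} \leq \normInf{|\varMat|\,|\vx|} \leq \normInf{|\varMat|}$.

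For the second inequality $\normInf{|\varMat|} \leq \normInf{|\varMat|\onesVec}$, I would again take $\vx$ with $\normInf{\vx}\leq 1$ and note that since $|\varMat|$ has nonnegative entries, $|(|\varMat|\vx)_i| \leq \sum_j |\varMat_{ij}|\cdot|\vx_j| \leq \sum_j |\varMat_{ij}| = (|\varMat|\onesVec)_i$, so $\normInf{|\varMat|\vx} \leq \normInf{|\varMat|\onesVec}$. This says the maximum $\onesVec$ achieves the operator norm of a nonnegative matrix.

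For the closing inequality $\normInf{|\varMat|\onesVec} \leq \normInf{\varMat}$, I would construct an explicit witness: let $i^*$ attain the max in $\max_i \sum_j |\varMat_{ij}|$ and choose $\vx$ by $\vx_j = \mathrm{sign}(\varMat_{i^*j})$, so $\normInf{\vx}\leq 1$. Then $(\varMat\vx)_{i^*} = \sum_j \varMat_{i^*j}\,\mathrm{sign}(\varMat_{i^*j}) = \sum_j |\varMat_{i^*j}| = \normInf{|\varMat|\onesVec}$, hence $\normInf{\varMat\vx} \geq \normInf{|\varMat|\onesVec}$ and the inequality follows from the definition of $\normInf{\varMat}$. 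There is no substantive obstacle here; the only subtlety is recognizing that one must actually exhibit a sign vector to pass from $|\varMat|$ back to $\varMat$, which the $\mathrm{sign}$ construction handles.
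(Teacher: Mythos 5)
The paper does not actually supply a proof of this lemma; it is stated as ``easy to prove'' and immediately invoked. Your proof is therefore filling a genuine gap, and it is correct and complete. The cyclic-inequality structure is the natural way to establish the chain of equalities, and each step is sound: the entrywise triangle inequality gives $\normInf{\varMat} \leq \normInf{|\varMat|}$; monotonicity of nonnegative matrix-vector products on the unit $\ell_\infty$ ball gives $\normInf{|\varMat|} \leq \normInf{|\varMat|\onesVec}$; and the sign-vector witness closes the cycle with $\normInf{|\varMat|\onesVec} \leq \normInf{\varMat}$. The final identity $\normInf{|\varMat|\onesVec} = \max_i \norm{|\varMat|^T \indicVec{i}}_1$ is, as you note, just reading off the $i$-th coordinate as a row sum. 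One thing worth making explicit (you implicitly rely on it) is that the paper writes $\normInf{|\varMat|\onesVec}$ where $|\varMat|\onesVec$ is a \emph{vector}, so $\normInf{\cdot}$ is the vector $\ell_\infty$ norm there rather than the induced operator norm --- your computation treats it that way, which is the intended reading, but flagging the overloaded notation would remove any ambiguity.
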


The previous two lemmas make the connection between oblivious routings and circulation projection matrices clear. Below, we prove it formally.

\begin{lemma}[Oblivious Routing to Circulation Projection]
\label{lem:obl_rout_to_circulation_projection}
For oblivious routing $\mRoute \in \R^{E \times V}$ the matrix $\mProj \defeq \iMatrix- \mRoute \incMatrix^T$ is a circulation projection matrix such that
$
\normInf{\capacityMatrix \mProj \capacityMatrix^{-1}} \leq 1 + \rho(\mRoute) 
$
.
\end{lemma}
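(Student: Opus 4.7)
The proof has two separable parts matching the two claims of the lemma: (i) $\mProj$ is a circulation projection matrix in the sense of Definition~\ref{def:circ_proj}, and (ii) the scaled operator norm bound holds. Both parts should follow almost immediately from, respectively, the defining property of an oblivious routing and the preceding Lemma~\ref{lem:equivalence_of_competetivity_and_norm}.

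\textbf{Part (i): projection property.} The key observation is that for every $\vx \in \redgevec$, the vector $\incMatrix^T \vx \in \rvertvec$ is itself a valid demand vector since $\onesVec^T \incMatrix^T = \vzero^T$, so its entries sum to zero. Thus, by the defining property of an oblivious routing ($\incMatrix^T \mRoute \demands = \demands$ for every demand $\demands$), we obtain
\[
\incMatrix^T \mProj \vx \;=\; \incMatrix^T \vx - \incMatrix^T \mRoute (\incMatrix^T \vx) \;=\; \incMatrix^T \vx - \incMatrix^T \vx \;=\; \vzero,
\]
verifying the first bullet of Definition~\ref{def:circ_proj}. For the second bullet, if $\incMatrix^T \vx = \vzero$ then $\mRoute \incMatrix^T \vx = \vzero$ by linearity, so $\mProj \vx = \vx$.

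\textbf{Part (ii): operator norm bound.} Since $\capacityMatrix$ is diagonal, it commutes with the identity and I can distribute the conjugation:
\[
\capacityMatrix \mProj \capacityMatrix^{-1} \;=\; \iMatrix - \capacityMatrix \mRoute \incMatrix^T \capacityMatrix^{-1}.
\]
Applying the triangle inequality for the induced $\normInf{\cdot}$ operator norm (together with $\normInf{\iMatrix} = 1$) yields
\[
\normInf{\capacityMatrix \mProj \capacityMatrix^{-1}} \;\leq\; 1 + \normInf{\capacityMatrix \mRoute \incMatrix^T \capacityMatrix^{-1}}.
\]
Finally, Lemma~\ref{lem:equivalence_of_competetivity_and_norm} identifies the residual operator norm with the competitive ratio $\rho(\mRoute)$, which closes the bound. (Strictly, Lemma~\ref{lem:equivalence_of_competetivity_and_norm} is stated as $\rho(\mRoute) = \normInf{\capacityMatrix^{-1} \mRoute \incMatrix^T \capacityMatrix}$; to apply it to the conjugation in the other direction I would either invoke the obvious symmetric version of its proof or transpose and use the $\normOne{\cdot}$/$\normInf{\cdot}$ operator-norm duality on the maximum row and column sums of $\lvert \capacityMatrix \mRoute \incMatrix^T \capacityMatrix^{-1} \rvert$.)

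\textbf{Anticipated obstacle.} There is no real obstacle: once Definition~\ref{def:circ_proj} and Lemma~\ref{lem:equivalence_of_competetivity_and_norm} are in hand, the proof is essentially a two-line calculation plus the triangle inequality. The only subtle point is keeping the capacity conjugations on the correct sides so that the residual norm lines up with the statement of Lemma~\ref{lem:equivalence_of_competetivity_and_norm}; I would handle this by writing out the conjugation explicitly as above rather than trying to manipulate $\mProj$ in abstract form.
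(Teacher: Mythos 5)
Your Part (i) matches the paper's argument exactly, and your Part (ii) takes the same triangle-inequality route that the paper does. The one point you flagged in parentheses, however, is a real issue, and your proposed work\-arounds do not resolve it. After the conjugation you correctly arrive at the quantity $\normInf{\capacityMatrix \mRoute \incMatrix^T \capacityMatrix^{-1}}$, whereas Lemma~\ref{lem:equivalence_of_competetivity_and_norm} controls $\normInf{\capacityMatrix^{-1} \mRoute \incMatrix^T \capacityMatrix}$. These are genuinely different matrices when the capacities are non-uniform: writing $\mvar{N} \defeq \mRoute\incMatrix^T$, the $(i,j)$ entry of the first is $(\capacityVec_i/\capacityVec_j)\mvar{N}_{ij}$ and of the second is $(\capacityVec_j/\capacityVec_i)\mvar{N}_{ij}$, so there is no ``symmetric version'' of Lemma~\ref{lem:equivalence_of_competetivity_and_norm} that produces the first quantity. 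Transposing also does not help, since $\normInf{\mvar{A}^T}$ equals $\normOne{\mvar{A}}$, not $\normInf{\mvar{A}}$.

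The true resolution is that the conjugation in the statement of the lemma is reversed. The scaled projection used throughout the algorithm is $\mProjScaled = \capacityMatrix^{-1}\mProj\capacityMatrix$ (see Section~\ref{sub:MaxFlow_formulation} and the hypothesis of Theorem~\ref{thm:MaxFlowAlgorithm}, where the change of variables $\vx = \capacityMatrix^{-1}\circVec$ forces this order), so the intended claim is
$\normInf{\capacityMatrix^{-1} \mProj \capacityMatrix} \leq 1 + \rho(\mRoute)$.
With that correction your proof goes through verbatim: $\capacityMatrix^{-1}\mProj\capacityMatrix = \iMatrix - \capacityMatrix^{-1}\mRoute\incMatrix^T\capacityMatrix$, and the residual norm is exactly the quantity Lemma~\ref{lem:equivalence_of_competetivity_and_norm} equates with $\rho(\mRoute)$. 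So your approach is the paper's approach; the gap you noticed is a typo in the statement, not something a clever norm manipulation can patch.
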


\begin{proof}
First, we verify that $\im(\mProj)$ is contained in cycle space:
\[
\forall \vx \in \redgevec
\enspace : \enspace
\incMatrix^T \mProj \vx
= \incMatrix^T \vx - \mRoute \incMatrix^T \vx
= \vzero.
\]
Next, we check that $\mProj$ is the identity on cycle space
\[
\forall \vx \in \redgevec \text{ s.t. } \incMatrix^T \vx = \vzero
\enspace : \enspace
\mProj \vx
= \vx - \mRoute \incMatrix^T \vx
= \vx.
\]
Finally, we bound the $\ell_\infty$-norm of the scaled projection matrix:
\[
\normInf{\capacityMatrix \mProj \capacityMatrix^{-1}}
=
\normInf{\iMatrix - \capacityMatrix \mRoute \incMatrix^T \capacityMatrix^{-1}}
\leq 1 + \rho(\mRoute).
\]
\end{proof}

\subsection{A Recursive Approach by Embeddings}
\label{sec:obliv_route:embed}

We construct an oblivious routing for a graph recursively. Given a generic, possibly complicated, graph, we show how to reduce computing an oblivious routing on this graph to computing an oblivious routing on a simpler graph on the same vertex set. A crucial concept in these constructions will be the notion of an embedding, which will allow us to relate the competitive ratios of an oblivious routing algorithms over graphs on the same vertex sets but different edge sets. 

\begin{definition}[Embedding]\label{def:embedding}
Let $G = (V, E, \capacityVec)$ and $G' = (V, E', \capacityVec')$ denote two undirected capacitated graphs on the same vertex set with incidence matrices $\incMatrix \in \R^{E \times V}$ and $\incMatrix' \in \R^{E' \times V}$ respectively. An \emph{embedding} from $G$ to $G'$ is a matrix $\mEmbed \in \R^{E' \times E}$ such that ${\incMatrix'}^T \mEmbed = \incMatrix^T$.
\end{definition}

In other words, an embedding is a map from flows in one graph $G$ to flows in another graph $G'$ that preserves the demands met by the flow. We can think of an embedding as a way of routing any flow in graph $G$ into graph $G'$ that has the same vertex set, but different edges. We will be particularly interested in embeddings that increase the congestion of the flow by a small amount going from $G$ to $G'$. 

\begin{definition}[Embedding Congestion]\label{def:embedding_cong}
Let $\mEmbed \in \R^{E' \times E}$ be an embedding from $G = (V, E, \capacityVec)$ to $G' = (V, E', \capacityVec')$ and let $\capacityMatrix \in \R^{E \times E}$ and $\capacityMatrix' \in \R^{E' \times E'}$ denote the capacity matrices of $G$ and $G'$ respectively. The \emph{congestion} of embedding $\mEmbed$ is given by
\[
\congest(\mEmbed) \defeq
\max_{\vx \in \R^{E}} \frac{\normInf{{\capacityMatrix'}^{-1} \mEmbed \vx}}{\normInf{\capacityMatrix^{-1} \vx}}
= \normInf{{\capacityMatrix'}^{-1} |\mEmbed| \capacityMatrix \onesVec}
\enspace.
\]
We say \emph{$G$ embeds into $G'$} with congestion $\alpha$ if there exists an embedding $\mEmbed$ from $G$ to $G'$ such that $\congest(\mEmbed) \leq \alpha$
\end{definition}

Embeddings potentially allow us to reduce computing an oblivious routing in a complicated graph to computing an oblivious routing in a simpler graph. Specifically, if we can embed a complicated graph in a simpler graph and we can efficiently embed the simple graph in the original graph, both with low congestion, then we can just focus on constructing oblivious routings in the simpler graph. We prove this formally as follows.

\begin{lemma}[Embedding Lemma]
\label{lem:embedding_lemma}
Let $G = (V, E, \capacityVec)$ and $G' = (V, E', \capacityVec')$ denote two undirected capacitated graphs on the same vertex sets, let $\mEmbed \in \R^{E' \times E}$ denote an embedding from $G$ into $G'$, let $\mEmbed' \in \R^{E \times E'}$ denote an embeding from $G'$ into $G$, and let $\mRoute' \in \R^{E' \times V}$ denote an oblivious routing algorithm on $G'$. Then $\mRoute \defeq \mEmbed' \mRoute'$ is an oblivious routing algorithm on $G$ and
\[
\rho(\mRoute)
\leq
\congest(\mEmbed) \cdot \congest(\mEmbed') \cdot \rho(\mRoute')
\]
\end{lemma}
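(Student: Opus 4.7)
The plan is to first check that $\mRoute = \mEmbed' \mRoute'$ is a legitimate oblivious routing on $G$, and then use Lemma~\ref{lem:equivalence_of_competetivity_and_norm} to reduce the competitive-ratio bound to an operator-norm inequality that falls out of submultiplicativity once the right factorization is in place.

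For the first step, I would take any demand $\demands \in \rvertvec$ and compute $\incMatrix^T \mRoute \demands = \incMatrix^T \mEmbed' \mRoute' \demands$. By the defining property of the embedding $\mEmbed'$ from $G'$ to $G$ (Definition~\ref{def:embedding}) we have $\incMatrix^T \mEmbed' = (\incMatrix')^T$, and by the oblivious-routing property of $\mRoute'$ on $G'$ we have $(\incMatrix')^T \mRoute' \demands = \demands$. Composing these two identities gives $\incMatrix^T \mRoute \demands = \demands$, so $\mRoute$ meets Definition~\ref{def:obliv_routing} on $G$.

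For the competitive-ratio bound, I would invoke Lemma~\ref{lem:equivalence_of_competetivity_and_norm} to write
\[
\rho(\mRoute) \;=\; \normInf{\capacityMatrix^{-1}\mEmbed'\mRoute'\incMatrix^T\capacityMatrix}.
\]
The key algebraic move is to use the embedding $\mEmbed$ (from $G$ into $G'$) to rewrite $\incMatrix^T = (\incMatrix')^T\mEmbed$, and then insert the identities $\capacityMatrix'(\capacityMatrix')^{-1}$ on either side of $\mRoute'(\incMatrix')^T$ so that the matrix inside the norm factors as
\[
\bigl(\capacityMatrix^{-1}\mEmbed'\capacityMatrix'\bigr)\;\bigl((\capacityMatrix')^{-1}\mRoute'(\incMatrix')^T\capacityMatrix'\bigr)\;\bigl((\capacityMatrix')^{-1}\mEmbed\,\capacityMatrix\bigr).
\]
By the submultiplicativity of the operator $\infty$-norm, the norm of this product is at most the product of the three factors' norms.

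To finish, I would identify each factor with the advertised quantity. The middle factor is exactly $\rho(\mRoute')$ by another application of Lemma~\ref{lem:equivalence_of_competetivity_and_norm} to $G'$. For the outer two factors, a change of variables $\vy = (\capacityMatrix')^{-1}\vx$ (resp.\ $\vy = \capacityMatrix^{-1}\vx$) in Definition~\ref{def:embedding_cong} shows $\normInf{\capacityMatrix^{-1}\mEmbed'\capacityMatrix'} = \congest(\mEmbed')$ and $\normInf{(\capacityMatrix')^{-1}\mEmbed\,\capacityMatrix} = \congest(\mEmbed)$. There isn't really a ``main obstacle'' here beyond bookkeeping: the definitions of embedding, embedding congestion, and competitive ratio have all been set up as operator norms in capacity-scaled $\ell_\infty$, so the composition statement reduces to the standard fact that operator norms multiply under composition once the correct insertions of $\capacityMatrix'(\capacityMatrix')^{-1}$ expose the three scaled blocks.
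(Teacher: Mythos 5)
Your proof is correct and follows essentially the same route as the paper's: verify the oblivious-routing property via $\incMatrix^T\mEmbed' = (\incMatrix')^T$, rewrite $\rho(\mRoute)$ as an operator norm via Lemma~\ref{lem:equivalence_of_competetivity_and_norm}, substitute $\incMatrix^T = (\incMatrix')^T\mEmbed$, and split into three blocks by submultiplicativity. The only difference is that you spell out the insertions of $\capacityMatrix'(\capacityMatrix')^{-1}$ and the change of variables identifying each block, which the paper leaves implicit.
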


\begin{proof}
For all $\vx \in \Rvertvec$ we have by definition of embeddings and oblivious routings that
\[
\incMatrix^T \mRoute \vx
= \incMatrix^T \mEmbed' \mRoute' \vx
= \incMatrix^T \vx.
\]
To bound $\rho(A),$ we let $\capacityMatrix$ denote the capacity matrix of $G$ and $\capacityMatrix'$ denote the capacity matrix of $G'$. Using \lemmaref{lem:equivalence_of_competetivity_and_norm}, we get
\[
\rho(\mRoute)
= \normInf{\capacityMatrix^{-1} \mRoute \incMatrix^T \capacityMatrix}
= \normInf{\capacityMatrix^{-1} \mEmbed' \mRoute' \incMatrix^T \capacityMatrix}
\]
Using that $\mEmbed$ is an embedding and therefore ${\incMatrix'}^T \mEmbed = \incMatrix^T$, we get
\[
\rho(\mRoute)
= \normInf{\capacityMatrix^{-1} \mEmbed' \mRoute' {\incMatrix'}^T \mEmbed\capacityMatrix}
\leq 
 \normInf{\capacityMatrix^{-1} \mEmbed' \capacityMatrix'}
 \cdot \normInf{{\capacityMatrix'}^{-1} \mRoute' {\incMatrix'}^T \capacityMatrix'}
 \cdot \normInf{{\capacityMatrix'}^{-1} \mEmbed \capacityMatrix}
\]
By the definition of competitive ratio and congestion, we obtain the result.
\end{proof}

Note how in this lemma we only use the embedding from $G$ to $G'$ to certify the quality of flows in $G'$, we do not actually need to apply this embedding in the reduction. 

Using this concept, we construct oblivious routings via recursive application of two techniques. First, in \sectionref{sec:flow_sparsifiers} we show how to take an arbitrary graph $G = (V, E)$ and approximate it by a \emph{sparse graph} $G' = (V, E')$ (i.e. one in which $|E'| = \otilde(|V|)$) such that flows in $G$ can be routed in $G'$ with low congestion and that there is an $\otilde(1)$ embedding from $G'$ to $G$ that can be applied in $\otilde(|E|)$ time. We call such a construction a \emph{flow sparsifiers} and prove the following theorem.

\begin{theorem}[Edge Sparsification]
\label{thm:edge_reduction}
Let $G = (V, E, \capacityVec)$ be an undirected capacitated graph with capacity ratio $U \leq \poly(|V|)$. In $\otilde(|E|)$ time we can construct a graph $G'$ on the same vertex set with at most $\otilde(|V|)$ edges and capacity ratio at most $\capacityRatio \cdot \poly(|V|).$ Moreover,  given an oblivious routing $\mRoute'$ on $G',$ in $\otilde(|E|)$ time we can construct an oblivious routing $\mRoute$ on $G$ such that
\[
\timeOf{\mRoute} = \otilde(|E| + \timeOf{\mRoute'})
\enspace \text{ and } \enspace
\rho(\mRoute) = \otilde(\rho(\mRoute'))
\]
\end{theorem}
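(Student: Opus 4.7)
The strategy is to instantiate the Embedding Lemma (\lemmaref{lem:embedding_lemma}) with a \emph{flow sparsifier}: a sparse graph $G'$ on the same vertex set such that $G$ embeds into $G'$ with $\otilde(1)$ congestion (existentially, so we never need to compute this embedding) and such that we can construct an explicit embedding $\mEmbed'$ from $G'$ back into $G$ of congestion $\otilde(1)$ that is applicable to a vector in $\otilde(|E|)$ time. Given such a pair, applying \lemmaref{lem:embedding_lemma} to any oblivious routing $\mRoute'$ on $G'$ immediately yields $\mRoute \defeq \mEmbed' \mRoute'$ as an oblivious routing on $G$ with competitive ratio
\[
\rho(\mRoute) \leq \congest(\mEmbed)\cdot\congest(\mEmbed')\cdot\rho(\mRoute') = \otilde(\rho(\mRoute')),
\]
and the running time to apply $\mRoute$ is the cost of applying $\mEmbed'$ on top of applying $\mRoute'$, giving $\timeOf{\mRoute} = \otilde(|E| + \timeOf{\mRoute'})$, as required. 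Thus, up to the assumed existence of the flow sparsifier construction of \sectionref{sec:flow_sparsifiers}, the theorem reduces to a direct invocation of the lemma.

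\textbf{Key steps.} First, I will invoke the flow sparsifier construction (to be proved in \sectionref{sec:flow_sparsifiers}): in $\otilde(|E|)$ time, produce $G' = (V, E', \capacityVec')$ with $|E'| = \otilde(|V|)$ and capacity ratio at most $U \cdot \poly(|V|)$, together with an explicit embedding matrix $\mEmbed' \in \R^{E \times E'}$ of congestion $\otilde(1)$ whose application takes $\otilde(|E|)$ time. Second, I will note that the flow sparsifier guarantees the existence (not construction) of an embedding $\mEmbed \in \R^{E' \times E}$ of $G$ into $G'$ with $\congest(\mEmbed) = \otilde(1)$; this embedding is only used inside the analysis via \lemmaref{lem:equivalence_of_competetivity_and_norm}, and is never applied algorithmically. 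Third, I will define $\mRoute \defeq \mEmbed' \mRoute'$ and verify, via \lemmaref{lem:embedding_lemma}, that it is an oblivious routing on $G$ with competitive ratio $\otilde(\rho(\mRoute'))$. Finally, I will account for the running time: $\timeOf{\mRoute} \leq \timeOf{\mEmbed'} + \timeOf{\mRoute'} = \otilde(|E|) + \timeOf{\mRoute'}$, and the $\otilde(|E|)$ preprocessing cost is exactly the cost of producing $G'$ and the data structures representing $\mEmbed'$.

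\textbf{Main obstacle.} The content at this point in the paper is essentially a clean interface theorem: the real work is deferred to the flow sparsifier construction of \sectionref{sec:flow_sparsifiers}, which must simultaneously deliver sparsity $|E'| = \otilde(|V|)$, bidirectional low-congestion embeddings, and an almost-linear construction time. That construction — decomposing $G$ into well-connected clusters via local partitioning, spectrally sparsifying each cluster, using electrical flows to embed the sparsifier back into $G$ with $\otilde(1)$ congestion, and recursing if the graph is still too dense — is the actual hard step. I also need to verify that the capacity ratio is preserved up to a $\poly(|V|)$ factor through all of these operations (including any rescaling introduced by sampling probabilities and by electrical-flow-based embeddings), since \theoremref{thm:edge_reduction} explicitly claims this bound and it is needed for downstream recursive applications.
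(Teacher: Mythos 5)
Your proposal follows essentially the same architecture as the paper's proof: invoke the flow sparsifier of \theoremref{thm:flow-sparsify} to get $G'$ and the explicit backward embedding $\mEmbed'$, set $\mRoute \defeq \mEmbed' \mRoute'$, and account for time and congestion. The one substantive discrepancy is in how the forward direction is certified. You assert that the flow sparsifier "guarantees the existence of an embedding $\mEmbed$ of $G$ into $G'$ with $\congest(\mEmbed)=\otilde(1)$," but Definition~\ref{def:flowsparsify} deliberately does \emph{not} promise this: in the $G \to G'$ direction it promises only an $\epsilon$-cut approximation, and the paper explicitly remarks that the forward embedding is replaced by the cut guarantee. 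The bridge between the two is the Aumann--Rabani flow-cut gap: the paper's proof passes to the demand set $D_\infty$ induced by the optimal routing (at most $|E|$ commodities) and uses the cut approximation to conclude $\opt_{G}(D_\infty) \geq \opt_{G'}(D_\infty)/O(\log n)$, which is where the extra $O(\log n)$ in $\rho(\mRoute) = \otilde(\rho(\mRoute'))$ comes from. Your route is recoverable — the cut approximation plus the flow-cut gap applied to the $|E|$ capacity-weighted edge demands of $G$ does yield an existential embedding $\mEmbed$ with $\congest(\mEmbed) = O(\log n)/(1-\epsilon)$, after which \lemmaref{lem:embedding_lemma} applies verbatim — but that conversion step must be stated, since the construction in \sectionref{sec:flow_sparsifiers} (spectral sparsification of clusters) naturally certifies cuts, not flows, in that direction. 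Everything else in your plan (the composition, the time accounting, the capacity-ratio bookkeeping) matches the paper.
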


Next, in \sectionref{sec:less_vertices} we show how to embed a graph into a collection of graphs consisting of trees plus extra edges. Then, we will show how to embed these graphs into better structured graphs consisting of trees plus edges so that by simply removing degree 1 and degree 2 vertices we are left with graphs with fewer vertices. Formally, we prove the following.

\begin{theorem}[Vertex Elimination]
\label{thm:node_reduction}
Let $G = (V, E, \capacityVec)$ be an undirected capacitated graph with capacity ratio $U$. For all $t > 0$ in $\otilde(t \cdot |E|)$ time we can compute graphs $G_1, \ldots , G_t$ each with at most $\otilde(\frac{|E| \log(U)}{t})$ vertices, at most $|E|$ edges, and capacity ratio at most $|V| \cdot U.$ Moreover, given oblivious routings $\mRoute_i$ for each $G_i$, in $\otilde(t \cdot |E|)$ time we can compute an oblivious routing $\mRoute$ on $G$ such that
\[
\timeOf{\mRoute} = \otilde(t \cdot |E| + \sum_{i = 1}^{t} \timeOf{\mRoute_i})
\enspace \text{ and } \enspace
\rho(\mRoute) = \otilde(\max_{i} \rho(\mRoute_i))
\]
\end{theorem}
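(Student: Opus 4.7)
The plan is to combine Madry's j-tree construction with a deterministic ``smoothing'' step that removes degree-1 and degree-2 vertices, and then invoke the Embedding Lemma (\lemmaref{lem:embedding_lemma}) to glue everything together. Concretely, I will first apply the $j$-tree machinery of Madry~\cite{Madry10}: for parameter $j \approx \otilde(|E|/t)$, one obtains a collection of at most $t$ $j$-trees $H_1, \ldots, H_t$ on the same vertex set $V$, together with weights $\lambda_i \ge 0$ summing to $1$, such that (i) there is an embedding $\mEmbed$ from $G$ into the convex combination $\sum_i \lambda_i H_i$ with congestion $\otilde(1)$, and (ii) for each $i$ there is an explicit embedding $\mEmbed_i'$ from $H_i$ into $G$ with congestion $\otilde(1)$, computable in $\otilde(|E|)$ time per $H_i$. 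Each $H_i$ consists of a subgraph $C_i$ on at most $j$ ``core'' vertices together with a forest $F_i$ attached to $C_i$, and has at most $|E|$ edges with capacity ratio at most $|V|\cdot U$.

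The next step is to shrink each $H_i$ down to a graph $G_i$ on only $\otilde(|E|/t)$ vertices, by exploiting the fact that every non-core vertex has combinatorial degree $1$ or lies on a path whose internal vertices all have degree $2$ in $H_i$. I will iteratively (a) delete leaves (any flow through a degree-1 non-terminal vertex must be zero, so there is a canonical zero-extension), and (b) contract maximal degree-2 paths into a single edge whose capacity is the minimum of the capacities along the path (the unique way to route flow through such a path is to send equal flow on every edge). This yields a graph $G_i$ whose vertex set is exactly the core of $H_i$ together with branch points, so $|V(G_i)| = \otilde(|E|/t)$; it has at most $|E|$ edges and, since we only take minima of existing capacities, capacity ratio at most $|V|\cdot U$. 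Define linear operators $\mvar{S}_i: \R^{E(G_i)} \to \R^{E(H_i)}$ that expand an edge of $G_i$ uniformly into its contracted path, and $\mvar{T}_i: \R^{V(H_i)} \to \R^{V(G_i)}$ that pushes demands on eliminated vertices along the unique path to the nearest core/branch vertex (this is essentially tree routing on $F_i$, which can be computed in $\otilde(|V|)$ time). These satisfy the identities $\incMatrix_{H_i}^T \mvar{S}_i = \mvar{T}_i^T\, \incMatrix_{G_i}^T$ and $\mvar{T}_i \incMatrix_{H_i}^T = \incMatrix_{G_i}^T \mvar{S}_i^{\dagger}$ in the appropriate sense, and both preserve congestion up to a factor $\otilde(1)$.

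Given oblivious routings $\mRoute_i$ on the $G_i$, I will define an oblivious routing on $H_i$ by $\widehat{\mRoute}_i \defeq \mvar{S}_i \mRoute_i \mvar{T}_i$: a demand on $H_i$ is first ``reduced'' to a demand on $G_i$ by routing it along forest paths via $\mvar{T}_i$, then routed in $G_i$ by $\mRoute_i$, and finally expanded back to a flow on $H_i$ by $\mvar{S}_i$. Straightforward computation shows $\incMatrix_{H_i}^T \widehat{\mRoute}_i = \iMatrix$ and $\rho(\widehat{\mRoute}_i) = \otilde(\rho(\mRoute_i))$, since the tree routing and the path expansion each contribute $\otilde(1)$ to congestion (capacities along a contracted path differ only by a factor $|V|$, and the forest part of $H_i$ can always be routed on itself with congestion $1$). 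Finally, I set
\[
\mRoute \defeq \sum_{i=1}^t \lambda_i \, \mEmbed_i' \, \widehat{\mRoute}_i.
\]
Because each $\mEmbed_i'$ is an embedding from $H_i$ to $G$ and each $\widehat{\mRoute}_i$ meets demands, $\mRoute$ meets demands on $G$; by two applications of \lemmaref{lem:embedding_lemma} combined with the fact that $\sum_i \lambda_i H_i$ embeds back into $G$ with $\otilde(1)$ congestion (so convex combining the $\mEmbed_i'\widehat{\mRoute}_i$ only costs an additional $\otilde(1)$ factor), one concludes $\rho(\mRoute) = \otilde(\max_i \rho(\mRoute_i))$. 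The per-iteration running time is $\otilde(|E|)$ for building $H_i$, $\mEmbed_i'$, $G_i$, $\mvar{S}_i$, $\mvar{T}_i$, giving $\otilde(t\cdot |E|)$ overall; applying $\mRoute$ costs one multiplication by each of the $t$ operators $\mEmbed_i' \mvar{S}_i \mRoute_i \mvar{T}_i$, matching the claimed bound on $\timeOf{\mRoute}$.

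The main obstacle I anticipate is the bookkeeping around the j-tree decomposition: one must verify that Madry's guarantees, which were originally stated for cut-approximation purposes, give a genuine two-sided embedding with the right per-tree congestion bound, and one must be careful that the path-contraction step does not blow up the capacity ratio by more than a polynomial factor or distort the oblivious routing competitive ratio. The rest of the argument is essentially linear-algebraic and follows mechanically from the Embedding Lemma once those two facts are established.
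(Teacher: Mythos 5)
Your proposal follows essentially the same route as the paper: the paper likewise invokes Madry's machinery (phrased as probabilistic partial tree embeddings, then converted to almost-$j$-trees) to reduce $G$ to $t$ tree-plus-core graphs with a two-sided $\otilde(1)$-congestion embedding, and then performs exactly your greedy elimination of degree-$1$ vertices and degree-$2$ paths (contracting each path to an edge of the minimum capacity along it) before gluing everything back with the Embedding Lemma and the convex combination $\sum_i \lambda_i$. The only cosmetic difference is that you work with $j$-trees directly while the paper factors the reduction through an intermediate ``partial tree embedding'' object, so the argument is correct and matches the paper's.
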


In the next section we show that the careful application of these two ideas along with a powerful primitive for routing on constant sized graphs suffices to produce an oblivious routing with the desired properties.

\subsection{Efficient Oblivious Routing Construction Proof}
\label{sec:obliv_route:construction_proof}

First, we provide the lemma that will serve as the base case of our recursion. In particular, we show that electric routing can be used to obtain a routing algorithm with constant competitive ratio for constant-size graphs.

\begin{lemma}[Base Case] \label{thm:base_case}
Let $G = (V, E, \capacityVec)$ be an undirected capacitated graph and let us assign weights to edges so that $\wMatrix = \capacityMatrix^{2}$. For $\lap \defeq \incMatrix^T \wMatrix \incMatrix$ we have that $\mRoute \defeq \wMatrix \incMatrix \lapPseudo$ is an oblivious routing on $G$ with
$
\rho(\mRoute) \leq \sqrt{|E|}
$ 
and $\timeOf{\lapPseudo} = \otilde(|E|)$.
\end{lemma}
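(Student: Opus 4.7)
The plan is to verify three things in order: that $\mRoute$ is a valid oblivious routing, that its competitive ratio is at most $\sqrt{|E|}$, and that $\lapPseudo$ is applicable in $\otilde(|E|)$ time. The first is a direct calculation: $\incMatrix^T \mRoute = \incMatrix^T \wMatrix \incMatrix \lapPseudo = \lap \lapPseudo$, which acts as the identity on the image of $\lap$. Since every demand vector $\demands$ satisfies $\onesVec^T \demands = 0$, it lies in $\im(\lap)$ (assuming connectivity, which we may reduce to component-by-component), so $\incMatrix^T \mRoute \demands = \demands$ as required.

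For the competitive ratio, I would invoke \lemmaref{lem:equivalence_of_competetivity_and_norm} to get
\[
\rho(\mRoute) = \normInf{\capacityMatrix^{-1} \mRoute \incMatrix^T \capacityMatrix} = \normInf{\capacityMatrix^{-1} \wMatrix \incMatrix \lapPseudo \incMatrix^T \capacityMatrix} = \normInf{\capacityMatrix \incMatrix \lapPseudo \incMatrix^T \capacityMatrix},
\]
where in the last step I used $\wMatrix = \capacityMatrix^2$. Set $\projMatrix \defeq \capacityMatrix \incMatrix \lapPseudo \incMatrix^T \capacityMatrix = \wMatrix^{1/2} \incMatrix \lapPseudo \incMatrix^T \wMatrix^{1/2}$. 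A standard check using $\lap \lapPseudo \lap = \lap$ shows $\projMatrix^2 = \projMatrix$, and $\projMatrix$ is symmetric by construction, so $\projMatrix$ is an orthogonal projection (onto the image of $\wMatrix^{1/2} \incMatrix$, which is exactly the space of electrical flows).

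The main technical step, and the only nontrivial inequality, is bounding $\normInf{\projMatrix}$ for a symmetric orthogonal projection $\projMatrix \in \R^{E \times E}$. Since $\projMatrix^2 = \projMatrix$ and $\projMatrix = \projMatrix^T$, I get the key identity
\[
\projMatrix_{ii} = (\projMatrix^2)_{ii} = \sum_{j} \projMatrix_{ij}^2,
\]
and of course $0 \leq \projMatrix_{ii} \leq 1$ because $\projMatrix$ is an orthogonal projection. Using the characterization $\normInf{\projMatrix} = \max_i \sum_j |\projMatrix_{ij}|$ from \lemmaref{lem:operator_norm_bounds}, Cauchy--Schwarz gives
\[
\sum_j |\projMatrix_{ij}| \leq \sqrt{|E|} \cdot \sqrt{\sum_j \projMatrix_{ij}^2} = \sqrt{|E|}\cdot \sqrt{\projMatrix_{ii}} \leq \sqrt{|E|},
\]
which yields $\rho(\mRoute) \leq \sqrt{|E|}$.

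Finally, the running time bound $\timeOf{\lapPseudo} = \otilde(|E|)$ is not something I would prove from scratch; I would simply cite the nearly-linear-time Laplacian solvers \cite{SpielmanTeng04, KoutisMP10, koszSolver}, noting that approximate solves suffice for our purposes and that the errors from the solver can be absorbed into the asymptotic notation (and, if needed, into a slight rescaling of the competitive ratio). The only place I expect minor fuss is in making the approximate-solver approximation rigorous at the level of the operator-norm bound, but this is standard and does not change the asymptotic guarantee.
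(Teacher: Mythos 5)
Your proposal is correct and follows essentially the same route as the paper: both verify the routing property via $\lap\lapPseudo$, reduce $\rho(\mRoute)$ to bounding $\normInf{\projMatrix}$ for the orthogonal projection $\projMatrix = \capacityMatrix\incMatrix\lapPseudo\incMatrix^T\capacityMatrix$, and cite nearly-linear-time Laplacian solvers for $\timeOf{\lapPseudo}$. The only (cosmetic) difference is in the last inequality: you bound $\normInf{\projMatrix}$ by a row-wise Cauchy--Schwarz argument using idempotency and $\projMatrix_{ii}\le 1$, whereas the paper uses the generic norm comparison $\normInf{\projMatrix}\le\sqrt{|E|}\,\normTwo{\projMatrix}$ together with $\normTwo{\projMatrix}\le 1$; these are two derivations of the same fact.
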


\begin{proof}
To see that $\mRoute$ is an oblivious routing strategy we note that for any demands $\demands \in \Rvertvec$ we have $\incMatrix^T \mRoute = \lap \lapPseudo = \iMatrix$. To see bound $\rho(\mRoute)$ we note that by \lemmaref{lem:equivalence_of_competetivity_and_norm} and standard norm inequalities we have
\[
\rho(\mRoute) = \max_{\vx \in \redgevec}
\frac{
\normInf{\capacityMatrix^{-1} \wMatrix \incMatrix \lapPseudo \incMatrix^T \capacityMatrix \vx}
}{\normInf{\vx}}
\leq 
\max_{\vx \in \redgevec}
\frac{
\normTwo{\capacityMatrix \incMatrix \lapPseudo \incMatrix^T \capacityMatrix \vx}
}{\frac{1}{\sqrt{|E|}} \normTwo{\vx}}
= \sqrt{|E|} \cdot \normTwo{\capacityMatrix \incMatrix \lapPseudo \incMatrix^T \capacityMatrix}
\]
The result follows from the fact in \cite{Spielman:2008:GSE:1374376.1374456} that $\mvar{\Pi} \defeq \capacityMatrix \incMatrix \lapPseudo \incMatrix^T \capacityMatrix$ is an orthogonal projection, and therefore $\normTwo{\mvar{\Pi}} \leq 1$, and the fact in \cite{Spielman:Solver:DBLP:journals/corr/abs-cs-0607105,Koutis:2010:AOS:1917827.1918388,Koutis:2011:NLN:2082752.2082901,koszSolver} that $\timeOf{\lapPseudo} = \otilde(|E|)$.
\end{proof}

Assuming \theoremref{thm:edge_reduction} and \theoremref{thm:node_reduction}, which we prove in the next two sections, we prove that low-congestion oblivious routings can be constructed efficiently.

\begin{theorem}[Recursive construction] \label{thm:master_theorem} Given an undirected capacitated
graph $G=(V,E,\capacityVec)$ with capacity ratio $\capacityRatio$. Assume $U = \poly(|V|)$.
We can construct an oblivious routing algorithm $\mRoute$ on $G$ in time $$O(|E|2^{O(\sqrt{\log|V|\log\log|V|})})$$
such that
\[
\timeOf{\mRoute}=|E|2^{O(\sqrt{\log|V|\log\log|V|})}
\enspace\text{ and }\enspace
\rho(\mRoute)=2^{O(\sqrt{\log|V|\log\log|V|})}.
\]
\end{theorem}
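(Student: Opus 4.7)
The plan is to recursively compose Edge Sparsification (\theoremref{thm:edge_reduction}) and Vertex Elimination (\theoremref{thm:node_reduction}), bottoming out with the electric--flow Base Case (\lemmaref{thm:base_case}). I would fix a branching parameter $t = 2^{\sqrt{\log|V| \log\log|V|}}$ and a recursion depth $r = \Theta(\sqrt{\log|V|/\log\log|V|})$, chosen so that $t^{r} = \Theta(|V|)$.

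At each recursive call, given an input graph $H$ with $n_H$ vertices, I would first invoke \theoremref{thm:edge_reduction} to reduce $H$ to a graph $H'$ with $|E(H')| = \otilde(n_H)$ edges, paying an $\otilde(1)$ factor in both the construction time and the eventual competitive ratio. Then I would apply \theoremref{thm:node_reduction} to $H'$ with the branching parameter $t$, producing subgraphs $H_1,\dots,H_t$ each with $\otilde(n_H/t)$ vertices and at most $\otilde(n_H)$ edges. I would recurse on each $H_i$, obtaining oblivious routings $\mRoute_i$, and then let the two theorems stitch these together into a routing on $H$. The recursion terminates once a subgraph has $O(1)$ vertices, at which point I apply \lemmaref{thm:base_case}, whose $\sqrt{|E|}$ competitive ratio is only $O(1)$ on constant-size graphs.

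The competitive ratio grows by an $\otilde(1) = (\log|V|)^{O(1)}$ factor per level, so after $r$ levels it is bounded by $(\log|V|)^{O(r)} = 2^{O(\sqrt{\log|V|\log\log|V|})}$, and the base case contributes only an $O(1)$ factor since $n_H/t^{r}=O(1)$ at the leaves. For running time, at level $k$ the recursion tree has $t^{k}$ nodes, each holding a graph with $\otilde(|V|/t^{k})$ vertices, and \theoremref{thm:node_reduction} costs $\otilde(t\cdot|E|)$ per node. Summing gives $\otilde(t\cdot|V|)$ per level and $\otilde(r\cdot t\cdot|V|) = |V|\cdot 2^{O(\sqrt{\log|V|\log\log|V|})}$ overall; the same recurrence (with an additional $t$-way fan-out to apply the leaf routings) governs $\timeOf{\mRoute}$ and yields the same bound. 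The parameter choice is simply the optimizer of these two recurrences under the constraint $t^{r}\ge|V|$.

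The main technical obstacle I anticipate is bookkeeping the capacity-ratio hypothesis $U\le\poly(|V|)$ required by \theoremref{thm:edge_reduction} all the way down the recursion. Each sparsification multiplies $U$ by $\poly(n_H)$ and each vertex elimination by $n_H$, so after $r$ levels the accumulated capacity ratio is $U\cdot|V|^{O(r)}$, which is super-polynomial in the shrunken leaf size $n_H/t^{r}$. The resolution is that every statement in the excerpt uses $\otilde$-notation, which hides $\log U$ factors; since $r\log|V| = O(\log^{2}|V|)$, the quantity $\log U$ remains $\mathrm{polylog}(|V|)$ throughout, and so the per-level $\otilde(1)$ losses in ratio and $\otilde(|E|)$ losses in time really are polylogarithmic in the \emph{original} $|V|$ at every level. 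With this observation the recurrences close and the stated bounds follow, so the remainder of the proof amounts to formally unrolling $T(n) = t\cdot T(n/t) + \otilde(tn)$ and $\rho(n) = \otilde(1)\cdot\rho(n/t)$ together with $T(O(1)) = \rho(O(1)) = O(1)$.
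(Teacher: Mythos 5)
Your proposal is correct and follows essentially the same route as the paper: recursively alternate \theoremref{thm:edge_reduction} and \theoremref{thm:node_reduction} with branching factor $t = 2^{\sqrt{\log|V|\log\log|V|}}$ to recursion depth $\Theta(\sqrt{\log|V|/\log\log|V|})$, bottoming out with \lemmaref{thm:base_case} on constant-size graphs. Your explicit discussion of the capacity-ratio bookkeeping is the right way to make rigorous what the paper handles by carrying the accumulated $\log(U\cdot|V|^{2ck})$ factor through its bounds.
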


\begin{proof} Let $c$ be the constant hidden in the exponent terms,
including $\tilde{O}(\cdot)$ and $\poly(\cdot)$ in \theoremref{thm:edge_reduction}
and \theoremref{thm:node_reduction}. Apply \theoremref{thm:edge_reduction}
to construct a sparse graph $G^{(1)}$, then apply Theorem \ref{thm:node_reduction}
with $t=\left\lceil 2^{\sqrt{\log|V|\log\log|V|}}\right\rceil $ to get $t$ graphs $G_{1}^{(1)},\cdots G_{t}^{(1)}$ such that each
graphs have at most $O\left(\frac{1}{t}|E|\log^{2c}|V|\log{U}\right)$
vertices and at most $U\cdot|V|^{2c}$ capacity ratio. 

Repeat this process on each $G_{i}^{(1)}$, it produces $t^{2}$ graphs
$G_{1}^{(2)},\cdots,G_{t^{2}}^{(2)}$. Keep doing this until all graphs
$G_{i}$ produced have $O(1)$ vertices. Let $k$ be the highest level
we go through in this process. Since at the $k$-th level the
number of vertices of each graph is at most 
$O\left(\frac{1}{t^{k}}|E|\log^{2kc}|V|\log^{2k}(U|V|^{2ck})\right)$
vertices, we have $k=O\left(\sqrt{\frac{\log|V|}{\log\log|V|}}\right)$.

On each graph $G_{i}$, we use Theorem \ref{thm:base_case} to get an oblivious routing algorithm
$\mvar A_{i}$ for each $G_{i}$ with
\[
\timeOf{\mRoute_{i}}=O(1)
\enspace\text{ and }\enspace
\rho(\mRoute_{i})=O(1).
\]
Then, the Theorem \ref{thm:node_reduction} and \ref{thm:edge_reduction}
shows that we have an oblivious routing algorithm $\mvar A$ for
$G$ with 
\[
\timeOf{\mRoute}=O(tk|E|\log^{ck}(|V|)\log^{2k}(U|V|^{2ck}))
\enspace\text{ and }\enspace
\rho(\mRoute)=O(\log^{2kc}|V|\log^{k}(U|V|^{2ck})).
\]
The result follows from $k=O\left(\sqrt{\frac{\log|V|}{\log\log|V|}}\right)$
and $t=\left\lceil 2^{\sqrt{\log|V|\log\log|V|}}\right\rceil $.
\end{proof}

Using \theoremref{thm:master_theorem}, \lemmaref{lem:obl_rout_to_circulation_projection} and \theoremref{thm:MaxFlowAlgorithm}, we have the following almost linear time max flow algorithm on undirected graph.

\begin{theorem} \label{thm:maxflow_algorithm} Given an undirected capacitated graph $G=(V,E,\capacityVec)$
with capacity ratio $U$. Assume $U=\poly(|V|)$. There is an algorithm
finds an $(1-\varepsilon)$ approximate maximum flow in
time 
\[
O\left(\frac{|E|2^{O\left(\sqrt{\log|V|\log\log |V|}\right)}}{\varepsilon^{2}}\right).
\]
\end{theorem}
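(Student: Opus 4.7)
The plan is to combine the three main ingredients already assembled in the paper: the recursive oblivious routing construction of Theorem~\ref{thm:master_theorem}, the conversion from oblivious routing to circulation projection of Lemma~\ref{lem:obl_rout_to_circulation_projection}, and the non-Euclidean gradient descent procedure \textbf{MaxFlow} analyzed in Theorem~\ref{thm:MaxFlowAlgorithm}. There is no new analytical work required — the proof is a parameter substitution. The main thing to check carefully is that the various $2^{O(\sqrt{\log|V|\log\log|V|})}$ factors compose (via a fourth power, products with $\log m$, and sums with $m$) in a way that is still absorbed by a single $2^{O(\sqrt{\log|V|\log\log|V|})}$ term.

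First, I would invoke Theorem~\ref{thm:master_theorem} on $G$ to obtain in time $|E|\,2^{O(\sqrt{\log|V|\log\log|V|})}$ a linear oblivious routing $\mRoute$ on $G$ with
\[
\timeOf{\mRoute} = |E|\,2^{O(\sqrt{\log|V|\log\log|V|})} \quad\text{and}\quad \rho(\mRoute) = 2^{O(\sqrt{\log|V|\log\log|V|})}.
\]
I would then form $\widetilde{\mvar P} \defeq \iMatrix - \mRoute\incMatrix^T$. By Lemma~\ref{lem:obl_rout_to_circulation_projection}, $\widetilde{\mvar P}$ is a circulation projection matrix, and the scaled version $\mvar P = \capacityMatrix^{-1}\widetilde{\mvar P}\capacityMatrix$ that appears in \textbf{MaxFlow} satisfies $\normInf{\mvar P} \leq 1 + \rho(\mRoute) = 2^{O(\sqrt{\log|V|\log\log|V|})}$. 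Since a matrix–vector product with $\widetilde{\mvar P}$ (or its transpose) requires one application of $\mRoute$ (resp.\ $\mRoute^T$) together with one application of $\incMatrix$ or $\incMatrix^T$, both of which cost $O(|E|)$, we get $\runtime(\widetilde{\mvar P}) = O(\timeOf{\mRoute} + |E|) = |E|\,2^{O(\sqrt{\log|V|\log\log|V|})}$, and the diagonal rescaling by $\capacityMatrix^{\pm 1}$ adds only $O(|E|)$ per application.

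Next, I would generate the inputs required by \textbf{MaxFlow}. An arbitrary feasible flow $\flow_0$ can be found by a single BFS (routing each demand along tree paths) in $O(|E|)$ time, and a $(1\pm\varepsilon)$ estimate of $\text{OPT}$ can be obtained by binary search in the interval $[1/\poly(|V|),\, \poly(|V|)]$ using the assumption $U = \poly(|V|)$; this introduces only an $O(\log(|V|/\varepsilon))$ multiplicative overhead, which is absorbed into the $2^{O(\sqrt{\log|V|\log\log|V|})}$ factor. Feeding $\mvar P$, $\flow_0$, and $\text{OPT}$ into Theorem~\ref{thm:MaxFlowAlgorithm} yields an $(1-\varepsilon)$-approximate maximum flow in time
\[
O\!\left(\frac{\normInf{\mvar P}^{4}\,\ln(m)\,(\runtime(\mvar P) + m)}{\varepsilon^{2}}\right).
\]

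Finally, I would substitute the bounds from the previous paragraphs. We have $\normInf{\mvar P}^4 = 2^{O(\sqrt{\log|V|\log\log|V|})}$ (raising to the fourth power only changes the implicit constant in the exponent), $\runtime(\mvar P) + m = |E|\,2^{O(\sqrt{\log|V|\log\log|V|})}$, and the extra $\ln(m)$ factor is trivially swallowed. The total running time — including the one-time preprocessing to build $\mRoute$ — is therefore
\[
O\!\left(\frac{|E|\,2^{O(\sqrt{\log|V|\log\log|V|})}}{\varepsilon^{2}}\right),
\]
as claimed. The only mildly delicate points are (i) confirming that the binary-search-over-$\text{OPT}$ and construction of $\flow_0$ fit within the stated budget, and (ii) noting that Theorem~\ref{thm:master_theorem}'s hypothesis $U = \poly(|V|)$ matches exactly the assumption placed in the present theorem, so no capacity rescaling is needed before invoking the recursive construction.
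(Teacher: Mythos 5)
Your proof is correct and follows exactly the route the paper takes: the paper itself only states that Theorem~\ref{thm:maxflow_algorithm} follows by combining Theorem~\ref{thm:master_theorem}, Lemma~\ref{lem:obl_rout_to_circulation_projection}, and Theorem~\ref{thm:MaxFlowAlgorithm}, which is precisely what you do. You spell out the parameter substitutions more explicitly than the paper, but there is no difference in method.
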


\section{Flow Sparsifiers}
\label{sec:flow_sparsifiers}

In order to prove Theorem~\ref{thm:edge_reduction}, i.e. reduce the problem of efficiently computing a competitive oblivious routing on a dense graph to the same problem on a sparse graph, we introduce a new algorithmic tool called \emph{flow sparsifiers}. \footnote{Note that our flow sparsifiers aim to reduce the number of edges, and are different from the flow sparsifiers of Leighton and Moitra~\cite{LeightonMoitra}, which work in a different setting and reduce the number of vertices.} A flow sparsifier is an efficient cut-sparsification algorithm that also produces an efficiently-computable low-congestion embedding mapping the sparsified graph back to the original graph.

\begin{definition}[Flow Sparsifier] \label{def:flowsparsify}
An algorithm is a \emph{$(h, \epsilon, \alpha)$-flow sparsifier} if on input graph $G = (V, E, \mu)$ with capacity ratio $U$ it outputs a graph $G' = (V, E', \mu')$ with capacity ratio $U' \leq U \cdot \poly(|V|)$ and an embedding $\mEmbed: \R^{E'} \rightarrow \R^{E}$ of $G'$ into $G$  with the following properties:
\begin{itemize}
  \item \textbf{Sparsity:} $G'$ is $h$-sparse, i.e.
\[
|E'| \leq h
\]
\item \textbf{Cut Approximation:} $G'$ is an $\epsilon$-cut approximation of $G,$ i.e.
\[
\forall S \subseteq V
\enspace : \enspace
(1 - \epsilon) \mu(\cutset_G(S)) \leq \mu'(\cutset_{G'}(S)) \leq (1 + \epsilon) \mu(\cutset_G(S))
\]
\item \textbf{Flow Approximation:} $\mEmbed$ has congestion at most $\alpha,$ i.e.
\[
\congest(\mEmbed) \leq \alpha.
\]
\item \textbf{Efficiency:} The algorithm runs in $\tilde{O}(m)$ time and $\timeOf{\mEmbed}$ is also $\tilde{O}(m)$.
\end{itemize}
\end{definition}

Flow sparsifiers allow us to solve a multi-commodity flow problem on a possibly dense graph $G$ by converting $G$ into a sparse graph $G'$ and solving the flow problem on $G'$, while suffering a loss of a factor of at most $\alpha$ in the congestion when mapping the solution back to $G$ using $\mEmbed.$

\begin{theorem} \label{thm: flow-sparsify-conseq}
Consider a graph $G=(V,E,\mu)$ and let $G'=(V,E', \mu')$ be given by an $(h, \epsilon, \alpha)$-flow sparsifier of $G.$ 
Then, for any set of $k$ demands $D=\{\demands_1, \demands_2, \ldots, \demands_k\}$ between vertex pairs of $V,$
we have: 
\begin{equation} \label{eq: flow-sparsify-conseq1}
\opt_{G'}(D) \leq \frac{O(\log k)}{1-\epsilon} \cdot \opt_{G}(D). 
\end{equation}
Given the optimum flow $\{f^\star_i\}$ over $G'$, we have 
$$
\congest_G(\{\mEmbed f^\star_i\}) \leq \alpha \cdot \opt_{G'}(D) \leq \frac{O(\alpha \log k)}{1-\epsilon} \cdot \opt_G(D).
$$
\end{theorem}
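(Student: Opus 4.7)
The statement has two parts: the inequality comparing $\opt$ on the two graphs, and the routing bound that uses $\mEmbed$. My plan is to handle them in that order, since the first is the conceptual heart and the second is essentially a definition unrolling chained with the first.

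For $\opt_{G'}(D) \leq \frac{O(\log k)}{1-\epsilon} \cdot \opt_G(D)$, my plan is to sandwich both sides using the maximum sparsity over cuts. The Linial--London--Rabinovich / Leighton--Rao flow--cut gap theorem for concurrent multicommodity flow states that, for any capacitated graph $H$ and any $k$ demands $D$,
\[
\max_{S \subseteq V} \frac{|D(S)|}{\mu_H(\cutset_H(S))} \;\leq\; \opt_H(D) \;\leq\; O(\log k) \cdot \max_{S \subseteq V} \frac{|D(S)|}{\mu_H(\cutset_H(S))},
\]
where $|D(S)|$ denotes the total demand separated by $S$. I would apply the upper bound with $H = G'$, then use the flow-sparsifier cut-approximation guarantee $\mu'(\cutset_{G'}(S)) \geq (1-\epsilon)\mu(\cutset_G(S))$ to swap the denominator from $G'$ to $G$ at a multiplicative cost of $1/(1-\epsilon)$, and finally invoke the lower bound of the flow-cut gap applied to $H = G$ to identify the resulting expression with $\opt_G(D)$. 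Concatenating these three steps gives the claimed inequality.

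For the second display, I would first verify that $\{\mEmbed f^\star_i\}$ actually routes $D$ in $G$: since $\mEmbed$ is an embedding of $G'$ into $G$, we have $\incMatrix^T \mEmbed = (\incMatrix')^T$, so $\incMatrix^T (\mEmbed f^\star_i) = (\incMatrix')^T f^\star_i = \demands_i$. For the congestion, I would expand
\[
\congest_G(\{\mEmbed f^\star_i\}) = \normInfFull{\capacityMatrix^{-1} \sum_i |\mEmbed f^\star_i|} \leq \normInfFull{\capacityMatrix^{-1} |\mEmbed| \sum_i |f^\star_i|},
\]
where the inequality is entrywise using $|\mEmbed y| \leq |\mEmbed|\,|y|$, and then apply the operator-norm characterization $\congest(\mEmbed) = \normInf{\capacityMatrix^{-1} |\mEmbed| \capacityMatrix' \onesVec} \leq \alpha$ from Definition~\ref{def:embedding_cong} with $\vx = \sum_i |f^\star_i|$. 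This yields $\congest_G(\{\mEmbed f^\star_i\}) \leq \alpha \cdot \congest_{G'}(\{f^\star_i\}) = \alpha \cdot \opt_{G'}(D)$, and chaining with the first part produces the $\frac{O(\alpha \log k)}{1-\epsilon} \cdot \opt_G(D)$ bound.

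The main obstacle, such as it is, is that cut approximation by itself does not imply closeness of multicommodity flow values; an $O(\log k)$ flow--cut gap is genuinely necessary, and the $\log k$ factor in the statement is coming from exactly this gap. Everything else is a direct manipulation of definitions, with the only minor care being the componentwise absolute-value step in the embedding bound.
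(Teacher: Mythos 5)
Your proof is correct and follows essentially the same route as the paper: the flow--cut gap combined with the cut-approximation guarantee for the first inequality, and unrolling the definition of embedding congestion (plus the demand-preservation identity $\incMatrix^T\mEmbed = (\incMatrix')^T$) for the second. If anything, your placement of the $O(\log k)$ factor---on the gap direction applied to $G'$, with only the trivial cut lower bound invoked on $G$---is the cleaner version of the chain the paper writes down.
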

\begin{proof}
By the flow-cut gap theorem of Aumann and Rabani~\cite{flowcutgap}, we have that, for any set of $k$ demands $D$ on $V$ we have:
$$
\opt_{G}(D) \geq O\left(\frac{1}{\log k}\right) \cdot \max_{S \subset V} \frac{D(\cutset(S))}{\mu(\cutset_{G}(S))}. 
$$
where $D(\cutset(S))$ denotes the total amount of demand separated by the cut between $S$ and $\bar{S}.$
As any cut $S \subseteq V$ in $G'$ has capacity $\mu'(\cutset_{G'}(S)) \geq (1-\epsilon) \mu(\cutset_G(S)),$ we have:
$$
\opt_{G'}(D) \leq \max_{S \subset V} \frac{D(\cutset(S))}{\mu'(\cutset_{G'}(S))} \leq  
\frac{1}{1 - \epsilon} \cdot \max_{S \subset V} \frac{D(\cutset(S))}{\mu(\cutset_{G}(S))} \leq \frac{O(\log k)}{1-\epsilon} \cdot \opt_{G}(D).
$$
The second part of the theorem follows as a consequence of the definition of the congestion of the embedding $\mEmbed.$
\end{proof}

Our flow sparsifiers should be compared with the cut-based decompositions of 
R\"{a}cke \cite{Racke:2008:OHD:1374376.1374415}. R\"{a}cke constructs a probability distribution over trees and gives explicit embeddings from $G$ to this distribution and backwards, achieving a congestion of $O(\log n).$ However, this distribution over tree can include up to $O(m\log n)$ trees and it is not clear how to use it to obtain an almost linear time algorithm. Flow sparsifiers answer this problem by embedding $G$ into a single graph $G'$, which is larger than a tree, but still sparse. Moreover, they provide an explicit efficient embedding of $G'$ into $G.$ Interestingly, the embedding from $G$ to $G'$ is not necessary for our notion of flow sparsifier, and is replaced by the cut-approximation guarantee. This requirement, together with the application of the flow-cut gap~\cite{flowcutgap}, lets us argue that the optimal congestion of a $k$-commodity flow problem can change at most by a factor of $O(\log k)$ between $G$ and $G'.$

\subsubsection{Main Theorem on Flow Sparsifiers and Proof of Theorem~\ref{thm:edge_reduction}}

The main goal of this section will be to prove the following theorem:
\begin{theorem}\label{thm:flow-sparsify}
For any constant $\epsilon \in (0,1),$ there is an $(\tilde{O}(n), \epsilon, \tilde{O}(1))$-flow sparsifier.
\end{theorem}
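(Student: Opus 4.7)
The plan is to construct the sparsifier by a recursive scheme that, in each pass, partitions the current graph into well-connected clusters, applies Benczur-Karger cut sparsification inside each cluster, and uses electrical flow inside the clusters to provide an efficient low-congestion embedding back to the original.

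One pass proceeds as follows. We first run a local partitioning routine (in the spirit of Spielman-Teng, Orecchia-Vishnoi, and Madry) to decompose $V$ into clusters $V_1, \ldots, V_k$ such that (i) all but a $1/\polylog(n)$ fraction of the edges of $G$ are ``internal,'' i.e.\ have both endpoints in some common $V_i$, and (ii) each internal subgraph $G[V_i]$ is contained in a subgraph of $G$ of conductance $\Omega(1/\polylog(n))$. We then apply standard Benczur-Karger sparsification independently on each $G[V_i]$ to obtain a sparse graph $H_i$ on $V_i$ with $\tilde O(|V_i|)$ edges that preserves every cut inside $V_i$ up to $(1\pm\epsilon).$ Keeping the inter-cluster edge set $F$ verbatim, we output the single-pass sparsifier $G^{(1)} = F \cup \bigsqcup_i H_i.$ The cut guarantee is immediate: any cut of $G$ splits into inter-cluster edges (preserved exactly) and intra-cluster edges (preserved to $(1\pm\epsilon)$ by Benczur-Karger inside each cluster).

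To build the embedding $\mEmbed$ from $G^{(1)}$ into $G,$ each edge of $F$ maps to itself. Each edge $e'$ of $H_i$ is embedded into $G$ by routing its endpoints along the electrical flow inside the expanding host of $G[V_i]$; a single fast Laplacian solver call per cluster lets us implement $\mEmbed$ in total time $\tilde O(m).$ The congestion bound hinges on the expansion of the cluster: in a subgraph of conductance $\Omega(1/\polylog n)$, effective resistance between any pair of vertices is $\tilde O(1/\mu_{\min})$, so routing one unit of flow between the endpoints of a Benczur-Karger edge (which has capacity $1/p_e$) produces only $\tilde O(1)$ congestion per edge of $G$ in expectation. Taking expectations over the Benczur-Karger sampling distribution and applying standard Chernoff/matrix-concentration bounds yields $\tilde O(1)$ congestion with high probability.

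A single pass shrinks the intra-cluster edge count to $\tilde O(n)$ but leaves $\tilde O(m/\polylog n)$ inter-cluster edges, so we recurse on $G^{(1)}$ until the total edge count drops to $\tilde O(n).$ The number of rounds is $O(\log m / \log\log n)$, and the per-round multiplicative losses in congestion and in the cut-approximation factor are both $\polylog(n)$ and $(1+\epsilon/\log m)$ respectively, so by choosing the polylogarithmic slack carefully, the composed embedding has congestion $\tilde O(1)$ and the composed cut approximation stays within $(1\pm\epsilon).$ The main obstacle in the argument is the partitioning primitive itself: it must simultaneously leave only a $1/\polylog(n)$ fraction of edges on the boundary (so that the recursion makes rapid progress and the inter-cluster edges do not dominate) \emph{and} produce internal subgraphs that embed with $\polylog(n)$ congestion back into $G$ via electrical flow. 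This is the reason we use the ``expander-containment'' variant of local partitioning --- a cluster $G[V_i]$ is not required to be an expander on its own, but only to sit inside a higher-conductance portion of $G$ that can absorb the electrical routing --- and designing and analyzing this routine is where the bulk of the technical work lies.
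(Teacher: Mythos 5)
Your single-pass construction closely mirrors the paper's Lemma~\ref{lem:mainlemflowsparsify} (decompose into near-expander clusters, sparsify each cluster, route the sparsified edges back electrically inside a higher-conductance host), but the way you compose the passes is where the argument breaks.

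The paper does \emph{not} recurse on the sparsified graph $G^{(1)}$. It instead partitions the original edge set $E$ into disjoint pieces $F_1, F_2, \dots, F_T$: round $t$ sparsifies the intra-cluster edges $F_t$ of the \emph{current leftover graph} $G_t=(V, E_t)$ into a weighted edge set $F'_t$, and then sets $E_{t+1} = E_t \setminus F_t$, recursing only on the boundary edges. Because the $F_t$ are disjoint and each embedding $\mvar{H}_t : \R^{F'_t} \to \R^{E_t} \subseteq \R^E$ maps directly back into the original graph, the final embedding is the direct sum $\mEmbed = \bigoplus_t \mvar{H}_t$ and the congestions compose \emph{additively}: $\congest(\mEmbed) \leq \sum_t \congest(\mvar{H}_t) = O(T) \cdot \otilde(1) = \otilde(1)$. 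Your scheme instead feeds $G^{(1)}$ (including the already-sparsified cluster edges) into the next pass, so the resulting embeddings must be composed. That composition is \emph{multiplicative}: with a $\polylog(n)$ congestion factor per round over $\Theta(\log m / \log\log n)$ rounds, the total is $(\log n)^{\Theta(\log m / \log\log n)} = 2^{\Theta(\log m)} = m^{\Theta(1)}$, which is polynomial rather than polylogarithmic. ``Choosing the polylogarithmic slack carefully'' cannot fix this, since the $\polylog(n)$ loss comes from the conductance lower bound $\Omega(1/\log^2 n)$ in the decomposition, not from a tunable parameter. To recover $\otilde(1)$ congestion you need the disjoint-decomposition structure, where each sparsified piece is embedded into the original graph once and never re-routed.

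A secondary issue: your congestion bound for a single pass invokes effective resistance (``effective resistance between any pair of vertices is $\tilde O(1/\mu_{\min})$, so routing produces $\tilde O(1)$ congestion''). Bounded effective resistance controls the $\ell_2$ energy of the electrical flow, not its $\ell_\infty$ congestion, and the jump from one to the other is exactly what needs an argument. The paper's Lemma~\ref{lem:elec_cong} does this with a sweep over voltage level sets to bound the $\ell_1$ voltage drop across the cluster directly, and then invokes the Aumann--Rabani flow-cut gap together with the cut-approximation guarantee of the sparsifier to bound $\opt_{G_i}(D_i) = \otilde(1)$ for the demands induced by the sparsified edges. That two-step argument (level sets for the competitive ratio, flow-cut gap for the optimum) is where the real work is; a Chernoff bound over the sampling probabilities alone does not control per-edge congestion of the electrical routing.
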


Assuming Theorem~\ref{thm:flow-sparsify}, we can now prove Theorem~\ref{thm:edge_reduction}, the main theorem necessary for edge reduction in our construction of low-congestion projections.
\begin{proof}[Proof of Theorem~\ref{thm:edge_reduction}]
We apply the flow sparsifier of Theorem~\ref{thm:flow-sparsify} to $G=(V,E,\capacityVec)$ and obtain output $G'=(V,E', \capacityVec')$ with embedding $\mvar{M}.$ By the definition of flow sparsifier, we know that the capacity ratio $U'$ of $G'$ is at most $U \cdot \poly(|V|),$ as required. Moreover, again by Theorem~\ref{thm:flow-sparsify}, $G'$ has at most $\tilde{O}(|V|)$ edges. 
Given an oblivious routing $\mvar{A'}$ on $G'$ consider the oblivious routing $\mvar{A} \defeq \mvar{M} \mvar{A'}.$ By the definition of flow sparsifier, we have that $\timeOf{\mEmbed} = \otilde(|E|).$ Hence $\timeOf{\mRoute} = \timeOf{\mEmbed} + \timeOf{\mRoute'} =  \otilde(|E|) + \timeOf{\mRoute'}. $\

To complete the proof, we bound the competivite ratio $\rho(\mRoute).$ Using the same argument as in Lemma~\ref{lem:equivalence_of_competetivity_and_norm}, we can write $\rho(\mRoute)$ as
$$
\rho(\mvar{A})
=
\max_{D} \frac{\congest_G(\{\mvar{A} D\})}{\opt_G(D)}
\leq 
\max_{D_\infty} \frac{\congest_G(\mvar{A} D_{\infty})}{\opt_G(D_\infty)},
$$
where $D_\infty$ is the set of demands that result by taking the routing of every demand in $D$ by $\opt(D)$ and splitting it up into demands on every edge corresponding to the flow sent by $\opt(D)$. Notice that $D_{\infty}$ has at most $|E|$ demands that are routed between pairs of vertices in $V.$ Then, because $G'$ is an $\epsilon$-cut approximation to $G,$ the flow-cut gap of Aumann and Rabani~\cite{flowcutgap} guarantees that
$$
\opt_{G}(D_\infty) \geq \frac{1}{O(\log n)} \opt_{G'}(D_\infty).
$$
As a result, we obtain:
\begin{align*}
\rho(\mRoute) & \leq O(\log n) \cdot \max_{D_\infty} \frac{\congest_G(\mvar{A}D_{\infty})}{\opt_{G'} (D_\infty)} 
= O(\log n) \cdot \max_{D_\infty} \frac{\congest_G(\mvar{M} \mvar{A'} D_{\infty})}{\opt_{G'} (D_\infty)}\\
& \leq O(\log n) \cdot \congest(\mvar{M}) \cdot \max_{D_\infty} \frac{\congest_{G'}(\mvar{A'} D_{\infty})}{\opt_{G'} (D_\infty)}
\leq \otilde(\rho(\mRoute')).
\end{align*}
\end{proof}

\subsubsection{Techniques}

We will construct flow sparsifiers by taking as a starting point the construction of spectral sparsifiers of Spielman and Teng~\cite{STspectralSparse}. Their construction achieves a sparsity of $\otilde\left(\frac{n}{\epsilon^{2}}\right)$ edges, while guaranteeing an $\epsilon$-spectral approximation. As the spectral approximation implies the cut approximation, the construction in \cite{STspectralSparse} suffices to meet the first two conditions in Definition~\ref{def:flowsparsify}.
Moreover, their algorithm also runs in time  $\tilde{O}(m),$ meeting the fourth condition. Hence, to complete the proof of Theorem~\ref{thm:flow-sparsify}, we will modify the construction of Spielman and Teng to endow their sparsifier $G'$ with an embedding $\mEmbed$ onto $G$ of low congestion that can be both computed and invoked efficiently. The main tool we use in constructing $\mEmbed$ is the notion of electrical-flow routing and the fact that electrical-flow routing schemes achieve a low competitive ratio on near-expanders and subsets thereof~\cite{KelnerMaymunkov09, LawlerNarayana09}.

To exploit this fact and construct a flow sparsifier, we follow Spielman and Teng~\cite{STspectralSparse} and partition the input graph into vertex sets, where each sets induces a near-expanders and most edges of the graph do not cross set boundaries. We then sparsify these induced subgraphs using standard sparsification techniques and iterate on the edges not in the subgraphs. As each iteration removes a constant fraction of the edges, by using standard sparsification techniques, we immediately obtain the sparsity and cut approximation properties. To obtain the embedding $\mEmbed$ with $\congest(\mEmbed) = \otilde(1),$ we prove a generalization of results in~\cite{KelnerMaymunkov09, LawlerNarayana09}  and show that the electrical-flow routing achieves a low competitive ratio on near-expanders and subsets thereof. 

In the next two subsections, we introduce the necessary concept about electrical-flow routing and prove that it achieves low competitive ratio over near-expanders (and subsets of near-expanders).

\subsection{Subgraph Routing}

Given an oblivious routing strategy $\mvar{A},$ we may be interested only in routing demands coming from a subset of edge $F \subseteq E$. In this setting, given a set of demands $D$ routable in $F,$ we let $\opt^F(D)$ denote the minimal congestion achieved by any routing restricted to only sending flow on edges in $F$ and we measure the $F$-competitive ratio of $\mvar{A}$ by
\[
\rho^F(\mvar{A})
\defeq \max_{\text{$D$ routable in $F$}}
\frac{\congest(\mvar{A}D)}{\opt^F(D)}
\]
Note that $\mvar{A}$ may use all the edges in $G$ but $\rho^F(\mvar{A})$ compares it only against routings that are restricted to use only edges in $F$. As before, we can upper bound the $F$-competitive ratio $\rho^F(\mvar{A})$ by operator norms.
\begin{lemma} \label{lem:subrouting}
Let $\indicVec{F} \in \redgevec$ denote the indicator vector for set $F$ (i.e. $\indicVec{F}(e) = 1$ if $e \in F$ and $\indicVec{F}(e) = 0$) and let $\iMatrix_F \defeq diag(\indicVec{F}).$ For any $F \subseteq E$ we have
\[
\rho^F(\mvar{A})
= \norm{\capacityMatrix^{-1} \mvar{A} \incMatrix^T \capacityMatrix \iMatrix_F}_{\infty} 
\]
\end{lemma}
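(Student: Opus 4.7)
My plan is to mirror the proof of Lemma~\ref{lem:equivalence_of_competetivity_and_norm}, but with an ``atomization'' step that is restricted to the edges of $F$, which is exactly what will introduce the diagonal factor $\iMatrix_F$.

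First, for any demand set $D$ that is routable in $F$, I would fix an optimal $F$-restricted routing of $D$ achieving congestion $\opt^F(D)$, supported by some flow vector $\vx \in \redgevec$ with $\iMatrix_F \vx = \vx$ and $\normInf{\capacityMatrix^{-1}\vx} = \opt^F(D)$. I then atomize it: let $D_\infty^F$ be the collection of per-edge demands $\{\vx_e \cdot \incMatrix^T \indicVec{e}\}_{e \in F}$. Routing each individual edge-demand by flowing $\vx_e$ units along edge $e$ stays within $F$ and certifies $\opt^F(D_\infty^F) \leq \normInf{\capacityMatrix^{-1}\vx} = \opt^F(D)$, while the reverse inequality holds because any $F$-restricted routing of $D_\infty^F$ sums to an $F$-restricted routing of $D$. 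Meanwhile, by linearity $\mvar{A} D = \sum_{e \in F} \vx_e \mvar{A}\demandsEdge{e}$, so by the triangle inequality $\congest(\mvar{A} D) \leq \congest(\mvar{A} D_\infty^F)$ (cancellations only help the original $D$).

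Having passed to atomized demand sets, the sup defining $\rho^F(\mvar{A})$ is now indexed by vectors $\vx \in \redgevec$ with $\iMatrix_F \vx = \vx$. Writing things out,
\[
\congest(\mvar{A} D_\infty^F) = \normInfFull{\,\capacityMatrix^{-1}\sum_{e \in F}|\vx_e|\,|\mvar{A}\demandsEdge{e}|\,} = \normInfFull{\,\abs{\capacityMatrix^{-1}\mvar{A}\incMatrix^T}\,\iMatrix_F\,|\vx|\,},
\]
and $\opt^F(D_\infty^F) = \normInf{\capacityMatrix^{-1}\vx}$. Substituting $\vy = \capacityMatrix^{-1}\vx$ (which is still supported on $F$ since $\capacityMatrix$ is diagonal) gives
\[
\rho^F(\mvar{A}) = \max_{\vy\,:\,\iMatrix_F \vy = \vy}\frac{\normInf{\,\abs{\capacityMatrix^{-1}\mvar{A}\incMatrix^T \capacityMatrix}\,\iMatrix_F\,|\vy|\,}}{\normInf{\vy}}.
\]

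Finally, I would conclude by invoking Lemma~\ref{lem:operator_norm_bounds}. Since $\iMatrix_F$ is a nonnegative diagonal matrix, $\abs{\capacityMatrix^{-1}\mvar{A}\incMatrix^T\capacityMatrix}\iMatrix_F = \abs{\capacityMatrix^{-1}\mvar{A}\incMatrix^T\capacityMatrix\iMatrix_F}$, and the maximizer of the operator norm of a nonnegative matrix with columns outside $F$ zeroed out is achieved by $|\vy|$ supported on $F$ (indeed, by $\vy = \indicVec{F}$, per Lemma~\ref{lem:operator_norm_bounds}). Hence the above max equals $\normInf{\capacityMatrix^{-1}\mvar{A}\incMatrix^T\capacityMatrix\iMatrix_F}$, giving the claimed identity.

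I do not expect any real obstacle here; the only care needed is verifying (i) that the atomization respects the $F$-restriction so that $\opt^F(D_\infty^F) = \opt^F(D)$, and (ii) that pushing $\iMatrix_F$ inside the absolute value is valid, so that maximizing over $F$-supported inputs is exactly the operator norm of $\capacityMatrix^{-1}\mvar{A}\incMatrix^T\capacityMatrix\iMatrix_F$.
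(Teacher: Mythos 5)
Your proposal is correct and follows essentially the same route as the paper's proof: atomize the demands along the edges of $F$ used by the optimal $F$-restricted routing (so that $\opt^F$ is preserved while the oblivious congestion can only increase), rewrite the ratio as a maximum over $F$-supported vectors, rescale by $\capacityMatrix$, absorb the support restriction into the factor $\iMatrix_F$, and finish with the operator-norm characterization of Lemma~\ref{lem:operator_norm_bounds}. Your write-up is in fact slightly more careful than the paper's (e.g., verifying both inequalities in $\opt^F(D_\infty^F)=\opt^F(D)$), but there is no substantive difference in approach.
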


\begin{proof}
We use the same reasoning as in the non-subgraph case. For a set of demands $D = \{\demands_i\},$ we consider $D^F_\infty$, the demands on the edges in $F$ used by $\opt^F(D).$ Then, it is the case that $\opt^F(D) = \opt^F(D_\infty)$ and we know that cost of obliviously routing $D_P$ is greater than the cost of obliviously routing $D$. Therefore,  we have:
\begin{align*}
\rho^F
&= 
\max_{\varVec \in \redgevec ~ : ~ \iMatrix_{E \setminus F} \varVec = 0}
\frac
{\norm{
~
\sum_{e \in E}
|\capacityMatrix^{-1} \mvar{A} \incMatrix^T \indicVec{e} \varVec_e|
~
}_\infty
}
{\norm{\capacityMatrix^{-1} \varVec}_\infty}
\\
&= 
\max_{\varVecB \in \redgevec ~ : ~ \iMatrix_{E \setminus F} \varVecB = 0}
\frac
{\norm{
~
\sum_{e \in E}
|\capacityMatrix^{-1} \mvar{A} \incMatrix^T \capacityMatrix \indicVec{e} \varVecB_e|
~
}_\infty
}
{\norm{\varVecB}_\infty}
\\
& =
\max_{\varVecB \in \redgevec}
\frac
{\norm{
~
\sum_{e \in E}
|\capacityMatrix^{-1} \mvar{A} \incMatrix^T \capacityMatrix \iMatrix_F \indicVec{e} \varVecB_e|
~
}_\infty
}
{\norm{\varVecB}_\infty}
\tag{Having $y_e \neq 0$ for $e \in E \setminus F$ decreases the ratio.}
= \norm{\abs{\capacityMatrix^{-1} \mvar{A} \incMatrix^T \capacityMatrix \iMatrix_F}}_{\infty} = \norm{\capacityMatrix^{-1} \mvar{A} \incMatrix^T \capacityMatrix \iMatrix_F}_{\infty}
\\
\end{align*}
\end{proof}

\subsection{Electrical-Flow Routings}

In this section, we define the notion of electrical-flow routing and prove the results necessary to construct flow sparsifiers.
Recall that $\rMatrix$ is the diagonal matrix of resistances and the Laplacian $\lap$ is defined as $\incMatrix^T \rMatrix^{-1} \incMatrix.$ For the rest of this section, we assume that resistances are set as $\rMatrix = \capacityMatrix^{-1}.$

\begin{definition}\label{def:obl_elec}
Consider a graph $G=(V,E, \mu)$ and set the edge resistances as 
$r_e = \frac{1}{\mu_e}$ 
for all $e \in E.$
The oblivious {\emph electrical-flow routing strategy} is the linear operator $\mElRout$ defined as
$$
\mElRout \defeq \rMatrix^{-1} \incMatrix \lapPseudo,
$$
\end{definition}
In words, the electrical-flow routing strategy is the routing scheme that, for each demand $\demands$ sends the electrical flow with boundary condition $\demands$ on the graph $G$ with resistances $\rMatrix = \capacityMatrix^{-1}.$

For the electrical-flow routing strategy $\mElRout$, the upper bound on the competitive ratio $\rho(\mElRout)$ in Lemma~\ref{lem:equivalence_of_competetivity_and_norm} can be rephrased in terms of the voltages induced on $G$ by electrically routing an edge $e \in E.$ This interpretation appears in~\cite{KelnerMaymunkov09, LawlerNarayana09}.

\begin{lemma}\label{lem:elec_routing}
Let $\mElRout$ be the electrical-flow routing strategy. 
For an edge $e \in E,$ we let the voltage vector $\volt_e \in \rvertvec$ be given by $\volt_e \defeq\lapPseudo \demandsEdge{e}$. We then have
$$
\rho(\mElRout) = \max_{e \in E} \sum_{(a, b) \in E} \frac{\abs{v_e(a) - v_e(b)}}{r_{ab}}.
$$
\end{lemma}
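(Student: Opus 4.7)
My plan is to apply Lemma \ref{lem:equivalence_of_competetivity_and_norm} directly and then do a short algebraic simplification using the symmetry of $\lapPseudo$. Since the resistances are set to $r_e = 1/\mu_e$ we have $\capacityMatrix = \rMatrix^{-1}$, so substituting $\mElRout = \rMatrix^{-1} \incMatrix \lapPseudo$ into $\rho(\mElRout) = \normInf{\capacityMatrix^{-1} \mElRout \incMatrix^T \capacityMatrix}$ collapses to
\[
\rho(\mElRout) \;=\; \normInf{\incMatrix \lapPseudo \incMatrix^T \rMatrix^{-1}}.
\]

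Next I would read off the entries of the matrix $N \defeq \incMatrix \lapPseudo \incMatrix^T \rMatrix^{-1}$. For $f = (a,b)$ and any $e \in E$, using $\incMatrix^T \indicVec{e} = \demandsEdge{e}$ and $v_e = \lapPseudo \demandsEdge{e}$,
\[
N_{fe} \;=\; \frac{1}{r_e}\, \indicVec{f}^T \incMatrix \lapPseudo \incMatrix^T \indicVec{e} \;=\; \frac{1}{r_e}\,\demandsEdge{f}^T v_e \;=\; \frac{v_e(a) - v_e(b)}{r_e}.
\]
By Lemma \ref{lem:operator_norm_bounds}, the $\ell_\infty$ operator norm is the maximum row sum of absolute values, giving
\[
\rho(\mElRout) \;=\; \max_{f=(a,b) \in E} \sum_{e \in E} \frac{|v_e(a) - v_e(b)|}{r_e}.
\]

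This expression is close to but not literally the one stated in the lemma: the outer max is over the edge on which we measure the voltage drop, not the edge whose voltage vector we take. The bridge between the two is the symmetry of $\lapPseudo$: the quantity $v_e(a) - v_e(b) = \demandsEdge{f}^T \lapPseudo \demandsEdge{e}$ equals $\demandsEdge{e}^T \lapPseudo \demandsEdge{f} = v_f(s) - v_f(t)$, where $e = (s,t)$. Substituting this identity inside the previous display and then relabeling the dummy indices ($f \mapsto e$, $e \mapsto (a,b)$, $r_e \mapsto r_{ab}$) yields exactly
\[
\rho(\mElRout) \;=\; \max_{e \in E} \sum_{(a,b) \in E} \frac{|v_e(a)-v_e(b)|}{r_{ab}},
\]
completing the proof. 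The argument is essentially a one-step reduction to a previously proved lemma followed by a symmetry observation, so I do not expect any real obstacle; the only point that requires care is recognizing that the row-sum formula and the lemma's stated formula are the same modulo the symmetry of $\lapPseudo$.
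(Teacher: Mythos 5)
Your proof is correct and follows essentially the same route as the paper: both reduce to $\rho(\mElRout) = \normInf{\incMatrix\lapPseudo\incMatrix^T\rMatrix^{-1}}$ via Lemma~\ref{lem:equivalence_of_competetivity_and_norm}, read off the operator norm as a max row/column sum via Lemma~\ref{lem:operator_norm_bounds}, and rely on the symmetry of $\lapPseudo$ to match the stated formula. The only difference is expository: the paper applies the symmetry implicitly at the matrix level (writing the transposed product $\rMatrix^{-1}\incMatrix\lapPseudo\incMatrix^T\indicVec{e}$), whereas you compute raw entries, notice the apparent index mismatch, and make the symmetry $\demandsEdge{f}^T\lapPseudo\demandsEdge{e} = \demandsEdge{e}^T\lapPseudo\demandsEdge{f}$ explicit before relabeling, which is the same argument spelled out more carefully.
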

\begin{proof}
We have: 
\[
\rho(\mElRout)
=
\norm{\mvar{\incMatrix \lapPseudo \incMatrix^T \rMatrix^{-1}}}_{\infty}
= \max_{e \in E} \norm{\mvar{\rMatrix^{-1} \incMatrix \lapPseudo \incMatrix^T} \indicVec{e}}_1 
= 
\max_{e \in E}
\sum_{(a, b) \in E} \frac{ \abs{v_e(a) - v_e(b)}}{r_{ab}}.
\]
\end{proof}

The same reasoning can be extended to the subgraph-routing case to obtain the following lemma.
\begin{lemma}\label{lem:elecsubrouting}
For $F \subseteq E$ and $\rMatrix = \capacityMatrix^{-1}$ we have
$$
\rho^F(\mElRout) = \max_{e \in E}
\sum_{(a, b) \in F} \frac{\abs{v_e(a) - v_e(b)}}{r_{ab}}.
$$
\end{lemma}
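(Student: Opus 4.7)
The plan is to mimic the derivation of Lemma \ref{lem:elec_routing}, modifying it to accommodate the subgraph restriction $F$ via the diagonal indicator matrix $\iMatrix_F$.

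First, I would invoke Lemma \ref{lem:subrouting}, which gives $\rho^F(\mElRout) = \normInf{\capacityMatrix^{-1} \mElRout \incMatrix^T \capacityMatrix \iMatrix_F}$. Substituting the definition $\mElRout = \rMatrix^{-1} \incMatrix \lapPseudo$ and using the hypothesis $\rMatrix = \capacityMatrix^{-1}$, the two capacity factors cancel and the quantity inside the norm simplifies to $\incMatrix \lapPseudo \incMatrix^T \rMatrix^{-1} \iMatrix_F$. This is a matrix indexed by $E \times E$ whose columns corresponding to $f \notin F$ are zero, exactly because $\iMatrix_F$ zeros out those coordinates.

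Next, I would apply Lemma \ref{lem:operator_norm_bounds} to write the $\ell_\infty$ operator norm as the maximum over rows of the $\ell_1$ row-norms. The row indexed by $e$ is $\indicVec{e}^T \incMatrix \lapPseudo \incMatrix^T \rMatrix^{-1} \iMatrix_F$, and since $\indicVec{e}^T \incMatrix = \demandsEdge{e}^T$ for $e = (a,b)$ (by definition of the incidence matrix), it equals $\volt_e^T \incMatrix^T \rMatrix^{-1} \iMatrix_F$ with $\volt_e \defeq \lapPseudo \demandsEdge{e}$. The $f$-th entry, for $f = (c,d) \in F$, therefore evaluates to $\frac{\volt_e(c) - \volt_e(d)}{r_{cd}}$, and is $0$ for $f \notin F$ thanks to $\iMatrix_F$. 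Taking absolute values and summing gives the row sum $\sum_{(c,d) \in F} |\volt_e(c) - \volt_e(d)|/r_{cd}$, and maximizing over $e \in E$ delivers the claimed equality.

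I do not anticipate a real obstacle here; the proof is essentially the same computation as in Lemma \ref{lem:elec_routing}, with the only difference being that the trailing factor $\iMatrix_F$ truncates the column (equivalently, the summation over $f$) to $F$. The only care required is bookkeeping the direction of the norm bound from Lemma \ref{lem:operator_norm_bounds}, i.e.\ recognizing that rows are indexed by $e \in E$ (not by $F$) while the surviving nonzero columns are indexed by $f \in F$, which is precisely the asymmetry reflected in the statement.
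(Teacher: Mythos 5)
Your proposal is correct and follows essentially the same route as the paper's proof: apply Lemma~\ref{lem:subrouting}, simplify using $\rMatrix = \capacityMatrix^{-1}$ to get $\norm{\incMatrix \lapPseudo \incMatrix^T \rMatrix^{-1} \iMatrix_F}_\infty$, then use Lemma~\ref{lem:operator_norm_bounds} to reduce to a max over rows of $\ell_1$ row-norms, which evaluates entrywise to the stated voltage-difference sum. Your bookkeeping of which index ranges over $E$ versus $F$ is exactly the point the paper's proof relies on, so there is no gap.
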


\begin{proof}
As before, we have:
\begin{align*}
\rho^F(\mElRout)
& = \norm{\incMatrix \lapPseudo \incMatrix^T \rMatrix^{-1} \iMatrix_F}_\infty \tag{By Lemma~\ref{lem:subrouting}} \\
& = \max_{e \in E} \norm{\iMatrix_F \rMatrix^{-1} \incMatrix \lapPseudo \incMatrix^T \indicVec{e}}_1 
= 
\max_{e \in E}
\sum_{(a, b) \in F} \frac{\abs{v_e(a) - v_e(b)}}{r_{ab}}
\end{align*}
\end{proof}

\subsubsection{Bounding the Congestion}

In this section, we prove that we can bound the $F$-competitive ratio of the oblivious electrical-routing strategy as long as the edges $F$ that the optimum flow is allowed to route over are contained within an induced expander $G(U) =(U, E(U))$ for some $U \subseteq V$ . Towards this we provide and prove the following lemma.
This is a generalization of a similar lemma proved in~\cite{KelnerMaymunkov09}.

\begin{lemma}\label{lem:elec_cong}
For weighted graph $G = (V, E, w)$ with integer weights and vertex subset $U \subseteq V$ the following holds: 
\[
\rho^F(\mElRout)
\leq \frac{8 \log(\vol(G(U)))}{\conductance(G(U))^2}
\]
\end{lemma}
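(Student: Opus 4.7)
The plan is to adapt the argument of Kelner--Maymounkov bounding competitive ratios of electrical routings in expanders to our subgraph setting. By \lemmaref{lem:elecsubrouting} (applied with $F \subseteq E(G(U))$), the quantity $\rho^F(\mElRout)$ equals the maximum over $e = (s,t) \in E$ of the sum $S_e \defeq \sum_{(a,b) \in F} w_{ab}|v_e(a) - v_e(b)|$, where $v_e \defeq \lapPseudo(\indicVec{s} - \indicVec{t})$. Since $F \subseteq E(G(U))$, only voltage differences between pairs of vertices in $U$ contribute to $S_e$.

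My first step would be to apply the coarea formula to rewrite $S_e$ as an integral of level-cut weights:
\[
S_e = \int_{-\infty}^{\infty} w\left(\partial_{F} L_\tau\right) d\tau, \qquad L_\tau \defeq \{x \in U : v_e(x) > \tau\},
\]
reducing the task to bounding this integral by $O(\log \vol(G(U)) / \conductance(G(U))^2)$. The integrand is pointwise at most $w(\partial_{G(U)} L_\tau)$, so the problem becomes purely one about level sets of $v_e$ in the induced subgraph $G(U)$.

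The core idea is to use Cheeger's inequality on $G(U)$ to control the spread of $v_e$ within $U$ and then convert this $\ell_2$-type control into the desired $\ell_1$ bound via a dyadic level-set argument. Concretely, Cheeger gives $\lambda_2$ of the normalized Laplacian of $G(U)$ at least $\conductance(G(U))^2/2$. Combining this with the energy identity $\sum_{(a,b) \in E} w_{ab}(v_e(a)-v_e(b))^2 = R_{\text{eff}}(s,t)$ and restricting the sum to edges of $G(U)$ yields a bound on $\sum_{u \in U} d^{G(U)}_u (v_e(u) - \bar v)^2$ where $\bar v$ is a volume-weighted median reference value in $U$. Partitioning $U$ into level sets at doubling thresholds $\pm 2^k$ above $\bar v$, the volume of each level set shrinks by a factor of $4$ per doubling by this Chebyshev-style estimate, and conductance bounds the cut weight of each level set by the volume of its smaller side. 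Because the integer-weight hypothesis bounds the relevant range of thresholds polynomially in $\vol(G(U))$, only $O(\log \vol(G(U)))$ dyadic scales contribute nontrivially; summing the geometric series gives the claimed $\log$-over-$\conductance^2$ bound.

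The main obstacle will be handling the case when $e = (s,t)$ has endpoints outside $U$: then $v_e$ is harmonic in $G$ but not in $G(U)$, and the voltages on $U$ are driven by currents entering through $\partial U$ rather than by an internal source. The cleanest way around this, which I would pursue first, is to exploit \lemmaref{lem:subrouting} to recognize that $\rho^F$ depends only on the operator $\iMatrix_F \rMatrix^{-1} \incMatrix \lapPseudo \incMatrix^T$, so by duality it suffices to bound voltage differences arising from demands supported on $F$-edges; such demands have vertex support contained in $U$, bringing us back to the internal case. If a direct treatment of external demands is still needed, one decomposes $v_e|_U$ into its $G(U)$-harmonic extension from $\partial U$ plus the residual, bounds the residual by the above Cheeger argument, and bounds the harmonic part using that the net current crossing $\partial U$ is at most $1$ (the total demand), which limits how large voltage gradients can be on $U$.
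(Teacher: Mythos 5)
Your coarea reformulation and level-set strategy are consonant with the paper's proof, but the mechanism you propose for bounding the integral---Cheeger's inequality plus the energy identity plus a Chebyshev-style second-moment estimate---does not yield the stated $\log(\vol)/\conductance^2$ bound, and I believe the plan as written would fail. Here is the issue concretely. The spectral gap from Cheeger gives a Poincar\'e bound $\sum_{u\in U} d_u (v_e(u)-\bar v)^2 \le M$ with $M = O(\mathcal{E}/\conductance^2)$ where $\mathcal{E}$ is the restricted energy; since $R^{\text{eff}}(s,t)\le r_e\le 1$ under integer weights, $M = O(1/\conductance^2)$. Chebyshev then bounds the volume of the level set at threshold $\bar v + \tau$ by $M/\tau^2$, and the cut weight trivially by the smaller side's volume (note this direction is the trivial bound, not the conductance inequality, which runs the other way). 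Putting these into the coarea integral and summing over dyadic scales $2^k$ gives a contribution $\sum_k 2^k \cdot \min\bigl(\vol(G(U)),\, M/4^k\bigr)$, which is a convergent geometric series dominated by the crossover scale $2^{k_0} \approx \sqrt{M/\vol(G(U))}$ and evaluates to $\Theta\bigl(\sqrt{M\,\vol(G(U))}\bigr) = \Theta\bigl(\sqrt{\vol(G(U))}/\conductance\bigr)$. No logarithm appears: the ``only $O(\log\vol)$ scales contribute'' claim would require the low end of the dyadic ladder to be cut off at something like $1/\poly(\vol)$, but the scale $2^{k_0}$ at which the trivial bound takes over already absorbs the entire $\sqrt{\vol}$ factor, so restricting to $O(\log\vol)$ scales does not tighten the sum.

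What the paper uses instead---and what your plan is missing---is the pointwise flow bound across cuts: since $v_e$ corresponds to a unit $s$--$t$ electrical flow, the total flow $f(\partial L_\tau) = \sum_{(a,b)\in \partial L_\tau \cap E(U)} w_{ab}|v_e(a)-v_e(b)|$ is at most $1$ for every threshold $\tau$. Combined with the conductance \emph{lower} bound $w(\partial L_\tau) \ge \conductance(G(U)) \cdot \vol(\text{smaller side})$, this controls the adaptive step size $\Delta_i = 2 f(C_i)/w(C_i) \le 2/(\conductance \cdot \vol(C_i))$, and each of the $O(\log\vol/\conductance)$ resulting levels contributes only $O(1/\conductance)$ to the total, giving $O(\log\vol/\conductance^2)$. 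The second-moment (Chebyshev) estimate is strictly weaker than this unit-flow bound and cannot recover the logarithmic dependence. To repair the proposal, you would need to replace the Cheeger/Chebyshev volume-decay mechanism with the unit-flow argument---at which point you are essentially reconstructing the paper's proof. (Your last paragraph's concern about edges $e$ with endpoints outside $U$, and the proposed reduction via \lemmaref{lem:subrouting}, is reasonable but moot given the gap above.)
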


\begin{proof}
By Lemma~\ref{lem:elecsubrouting}, for every edge $e \in E,$
$$
\rho^F(\mElRout) \leq \norm{\iMatrix_{E(U)} \rMatrix^{-1} \incMatrix \lapPseudo \demandsEdge{e}}_1 
$$
Fix any edge $e \in E$ and let $v \defeq \lapPseudo \demandsEdge{e}.$ Recall that with this definition
\begin{equation} \label{eq:electricalcongbound1}
\norm{\iMatrix_{E(U)} \rMatrix^{-1} \incMatrix \lapPseudo \demandsEdge{e}}_1
=
\sum_{(a, b) \in E(U)} \frac{|v(a) - v(b)|}{r_{ab}} = \sum_{(a, b) \in E(U)} w_{ab} \cdot {|v(a) - v(b)|}
\end{equation} 
We define the following vertex subsets:
\[
\forall x \in \R
\enspace : \enspace
S^{\leq}_x \defeq \{a \in U ~ | ~ v(a) \leq x \}
\enspace \text{ and } \enspace
S^{\geq}_x \defeq \{a \in U ~ | ~ v(a) \geq x \}
\]
Since adding a multiple of the all-ones vector to $v$  does not change the quantity of interest in Equation~\ref{eq:electricalcongbound1}, we can assume without loss of generality that
\[
\vol_{G(U)}(S^{\geq}_0) \geq \frac{1}{2} \left(\vol(G(U))\right)
\enspace \text{ and } \enspace
\vol_{G(U)}(S^{\leq}_0) \geq \frac{1}{2} \left(\vol(G(U))\right).
\]
For any vertex subset $S \subseteq U,$ we denote the flow out of $S$ and the weight out of $S$ by
\[
f(S) \defeq \sum_{e=(a,b) \in E(U) \bigcap \cutset(S)} w_e |v(a) - v(b)|,
\enspace \text{ and } \enspace
w(S) \defeq \sum_{e \in E(U) \bigcap \cutset(S)} w_e.
\]
At this point, we define a collections of subsets $\{C_i \in S^{\geq}_0\}$. For an increasing sequence of real numbers $\{c_i\},$ we let
$
C_i \defeq S_{c_i}^{\geq}
$ and  we define the sequence $\{c_i\}$ inductively as follows:
\[
c_0 \defeq 0
\enspace \text{ , } \enspace
c_i \defeq c_{i - 1} + \Delta_{i - 1}
\enspace \text{ , and } \enspace
\Delta_i \defeq 2 \frac{f(C_i)}{w(C_i)}
\enspace.
\]
In words, the $c_{i+1}$ equals the sum of $c_i$ and an increase $\Delta_i$ which depends on how much the cut $\delta(C_i) \cap E(U)$ was congested by the electrical flow.

Now,  $l_i \defeq w(\cutset_{E(U)}(C_{i - 1}) - \cutset_{E(U)}(C_{i}))$, i.e. the weight of the edges in $E(U)$ cut by $C_{i - 1}$ but not by $C_i$. We get
\begin{align*}
\vol(C_{i + 1})
&\leq
\vol(C_i) - l_i \\
&\leq
\vol(C_i) - \frac{w(C_i)}{2}
\tag{By choice of $l_i$ and $\Delta_i$} \\
&\leq
\vol(C_i) - \frac{1}{2} \vol(C_i) \conductance(G(U))
\tag{Definition of conductance}
\end{align*}
Applying this inductively and using our assumption on $\vol(S_0^{\geq})$
we have that
\[
\vol(C_i)
\leq
\left(1 - \frac{1}{2} \conductance(G(U))\right)^i \vol(C_0)
\leq
\frac{1}{2} \left(1 - \frac{1}{2} \conductance(G(U))\right)^i \vol(G(U))
\]
Since $\phi(G(U)) \in (0, 1),$  for $i + 1 = \frac{2 \log \left(\vol(G(U))\right)}{\conductance(G(U))}$ we have that $\vol(S_i) \leq \frac{1}{2}$. Since $\vol(S_i)$ decreases monotonically with $i,$ if we
let $r$ be the smallest value such that $C_{r + 1} = \emptyset,$ we must have
\[
r \leq \frac{2 \cdot \log(\vol(G(U)))}{\conductance(G(U))}
\]
Since $v$ corresponds to a unit flow, we know that $f(C_i) \leq 1$ for all $i.$ Moreover, by the definition of conductance we know that $w(C_i) \geq \conductance(G(U)) \cdot \vol(C_i)$. Therefore,
\[
\Delta_i \leq \frac{2}{\conductance(G(U)) \cdot \vol(C_i)}.
\]
We can now bound the contribution of $C_0^{\geq}$ to the volume of the linear embedding $v.$ In the following, for a vertex $a \in V,$ we let $d(a) \defeq \sum_{e=\{a,b\} \in E(U)} w_e$ be the degree of $a$ in $E(U)$.  
\begin{align*}
\sum_{a \in C^{\geq}_0}
d(a) v(a)
&= 
\sum_{i = 0}^{r} 
\left[
\sum_{a \in C_{i} - C_{i + 1}}
d(a) v(a) 
\right]
\\
&\leq
\sum_{i = 0}^{r} 
\left[
\sum_{a \in C_{i} - C_{i + 1}}
d(a) 
\left(
\sum_{j = 0}^{i}
\Delta_{j}
\right)
\right]
\tag{By definition of $C_i$}
\\
&\leq
\sum_{i = 0}^{r} 
\left[
\left(
\vol(C_{i}) - \vol(C_{i + 1})
\right)
\cdot
\left(
\sum_{j = 0}^{i}
\Delta_{j}
\right)
\right]
\\
&= \sum_{i = 0}^{r} \vol(C_i) \Delta_i
\tag{Rearrangement and fact that $\vol(C_{r + 1}) = 0$}
\leq \frac{2r}{\conductance(G(U))}
\end{align*}
By repeating the same argument on $S_0^{\leq},$ we get that 
$\sum_{a \in S_0^{\leq}} d(a) v(a) \leq \frac{2r}{\conductance(G(U))}$. 
Putting this all together yields
\[
\norm{\iMatrix_{E(U)} \rMatrix^{-1} \incMatrix \lapPseudo \demandsEdge{e}}
=
\sum_{(a,b) \in G(U)} w_{ab} \cdot |v(a) - v(b)|
\leq \sum_{a \in G(U)} d(a) v(a)
\leq \frac{4r}{\conductance(G(U))}
\]
\end{proof}

From this lemma and Lemma~\ref{lem:elecsubrouting}, the following is immediate:
\begin{lemma}\label{lem:elecsubroutinggoal}
Let $F \subseteq E$ be contained within some vertex induced subgraph $G(U),$ then 
for $\rMatrix = \capacityMatrix^{-1}$ we have
\[
\rho^F(\rMatrix^{-1} \incMatrix \lapPseudo)
\leq \rho^{E(U)}(\rMatrix^{-1} \incMatrix \lapPseudo)
\leq \frac{8 \log(\vol(G(U)))}{\conductance(G(U))^2}
\]
\end{lemma}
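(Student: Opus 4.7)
The plan is to obtain Lemma~\ref{lem:elecsubroutinggoal} as an immediate combination of the two preceding results, Lemma~\ref{lem:elecsubrouting} and Lemma~\ref{lem:elec_cong}, since all the analytical content of the statement is already contained in the latter. The only new ingredient is a trivial monotonicity observation in the edge set over which we route.

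First I would establish the monotonicity inequality $\rho^F(\mElRout) \leq \rho^{E(U)}(\mElRout)$ directly from the closed-form expression in Lemma~\ref{lem:elecsubrouting}. That lemma writes
\[
\rho^F(\mElRout) \;=\; \max_{e \in E} \sum_{(a,b) \in F} \frac{|v_e(a)-v_e(b)|}{r_{ab}},
\]
which is a maximum, taken over a fixed set $E$, of inner sums of nonnegative quantities indexed by the edges of $F$. Since $F \subseteq E(U)$, for every fixed $e \in E$ the inner sum over $F$ is at most the corresponding inner sum over $E(U)$ (we are simply adding more nonnegative terms), and taking $\max_{e \in E}$ of both sides preserves the inequality. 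This yields the first inequality of the claim.

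For the second inequality I would invoke Lemma~\ref{lem:elec_cong} applied with $F = E(U)$; inspecting its proof, one sees that the quantity it actually controls is exactly $\sum_{(a,b) \in E(U)} w_{ab} |v(a)-v(b)|$, which, by Lemma~\ref{lem:elecsubrouting}, equals $\rho^{E(U)}(\mElRout)$ after taking the maximum over $e \in E$. Thus Lemma~\ref{lem:elec_cong} gives $\rho^{E(U)}(\mElRout) \leq 8 \log(\vol(G(U)))/\conductance(G(U))^2$, and chaining this with the monotonicity step concludes the proof.

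There is no real obstacle in this particular lemma: all of the difficulty of the argument lies in Lemma~\ref{lem:elec_cong}, whose proof uses the layered level-set decomposition of the electrical potential $v$ together with the conductance of the induced subgraph $G(U)$ to bound the cumulative voltage drops. Lemma~\ref{lem:elecsubroutinggoal} itself is purely a packaging statement that makes the dependence on the subset $F$ explicit and places it in the form required later when flow sparsifiers are constructed from near-expander decompositions.
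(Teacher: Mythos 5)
Your proposal is correct and matches the paper's treatment: the paper dismisses this lemma as "immediate" from Lemma~\ref{lem:elec_cong} and Lemma~\ref{lem:elecsubrouting}, and the two steps you supply (monotonicity in $F$ via the nonnegative sum formula, then applying Lemma~\ref{lem:elec_cong} with $F=E(U)$) are exactly the implicit argument. You simply make explicit what the paper leaves unstated.
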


\subsection{Construction and Analysis of Flow Sparsifiers}

In the remainder of this section we show how to produce an efficient $O(\log^{c})$-flow sparsifier for some fixed constant $c,$ proving Theorem~\ref{thm:flow-sparsify}. In this version of the paper, we make no attempt to optimize the value of $c.$
For the rest of this section, we again assume that we choose the resistance of an edge to be the the inverse of its capacity, i.e. $\capacityMatrix = \wMatrix = \rMatrix^{-1}.$

As discussed before, our approach follows closely that of Spielman and Teng~\cite{STspectralSparse} to the construction of spectral sparsifiers. The first step of this line of attack is to reduce the problem to the unweighted case.

\begin{lemma}\label{lem:weightedtounweighted}
Given an $(h, \epsilon,\alpha)$-flow-sparsifier algorithm for unweighted graphs, it is possible to construct an $(h \cdot \log U, \epsilon, \alpha)$-flow-sparsifier algorithm for weighted graphs $G=(V,E,\mu)$ with capacity ratio $U$ obeying
$$
U = \frac{\max_{e \in E} \mu_e}{\min_{e \in E} \mu_e} = \poly(|V|).
$$
\end{lemma}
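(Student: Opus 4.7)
The plan is to decompose the weighted graph $G$ into $O(\log U)$ ``nearly uniform'' bucket subgraphs, sparsify each bucket with the hypothesized unweighted algorithm, and take the edge-disjoint union. After scaling so that $\mu_{\min}=1$, I would partition the edges by powers of two into buckets $E_i = \{e \in E : 2^{i-1} \leq \mu_e < 2^i\}$ for $i = 1, \ldots, k$ with $k = \lceil \log_2 U\rceil + 1$, forming subgraphs $H_i = (V, E_i, \mu|_{E_i})$ of capacity ratio less than $2$. To present each $H_i$ to the unweighted sparsifier, I would discretize its weights at a sufficiently fine resolution (of order $\epsilon \cdot 2^{i-1}$) and replace each edge by an integer number of parallel copies of the smallest resulting unit; this yields an unweighted multigraph $\tilde H_i$ of size $O(|E_i|/\epsilon)$ that $(1\pm\epsilon/C)$-approximates every cut of $H_i$, for a constant $C$ to be chosen.

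Next I would feed each $\tilde H_i$ to the given $(h, \epsilon/C, \alpha)$-flow sparsifier, obtaining sparse weighted graphs $H_i'$ and embeddings $M_i : H_i' \to H_i \subseteq G$ (the rescaling and the bundling of parallel copies back into single edges can be absorbed into $M_i$ and only decrease congestion, by a short convexity check: the congestion on a bundled edge is bounded by the max congestion on its parallel copies). The output sparsifier is the edge-disjoint union $G' \defeq \bigsqcup_i H_i'$ with capacity function inherited bucket-wise, and the output embedding is defined by $M|_{E(H_i')} = M_i$. Verification of Definition~\ref{def:flowsparsify} is then routine: sparsity gives $|E(G')| \leq hk = O(h\log U)$; the cut-approximation follows by summing bucket-wise using $\cutset_{G'}(S) = \bigsqcup_i \cutset_{H_i'}(S)$ together with the two sources of $(1\pm\epsilon/C)$ error (discretization and the unweighted sparsifier), which compose to $(1\pm \epsilon)$ after choosing $C$ sufficiently large; and the total running time is $\otilde(\sum_i |E_i|/\epsilon) = \otilde(|E|)$ since $\epsilon$ is constant and $U = \mathrm{poly}(|V|)$.

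The main subtlety I anticipate is the congestion bound. A naive combination of the $k$ bucket embeddings would yield $\congest(M) \leq k\alpha = \alpha\log U$, which is too lossy. The crucial observation is that the edge sets $E_i$ partition $E(G)$, so for every $e \in E_i$ the only flow incident on $e$ in the image of $M$ comes from $M_i$ applied to demands in $H_i'$. Consequently $\congest(M) = \max_i \congest(M_i) \leq \alpha$ rather than the sum, which is precisely why bucketing by \emph{edges} (and not, for instance, fractionally splitting each edge's weight across buckets) is the right decomposition.
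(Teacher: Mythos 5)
Your bucketing-by-scale approach is correct, but the paper takes a simpler route: it writes each weight $\mu_e$ in binary, producing simple unweighted graphs $G_0,\ldots,G_{\log U}$ where $G_i$ contains edge $e$ exactly when the $i$-th bit of $\mu_e$ is $1$, so that $G = \sum_i 2^i G_i$ as a weighted sum. The unweighted sparsifier is then applied directly to each simple subgraph $G_i$, and the embeddings are combined as $M = \sum_i 2^i M_i$. This sidesteps everything your construction has to work around: there is no discretization step, no parallel-edge multigraphs to feed the unweighted sparsifier, no extra $1/\epsilon$ blow-up in the intermediate graph sizes, and no second source of cut-approximation error to budget against the one coming from the sparsifier itself. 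Your decomposition buys a genuine partition of $E$, which the binary decomposition deliberately does not give, but as it turns out you do not need one.

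That brings up the one real misconception in your write-up. In the last paragraph you argue that disjointness of the $E_i$ is ``the crucial observation'' for the congestion bound, and that decompositions which overlap an edge across levels would pay a $\log U$ loss. The binary decomposition is a counterexample: an edge of weight $5$ appears in both $G_0$ and $G_2$, so two different level embeddings route flow over it, and yet the congestion is still $\alpha$, not $2\alpha$ or $\alpha\log U$. The reason is quantitative rather than combinatorial: the level-$i$ embedding places at most $\alpha \cdot 2^i$ units of flow on $e$ (since $G_i$ is unweighted and $M_i$ has congestion at most $\alpha$ there), and summing over the levels where $e$ appears gives at most $\alpha \sum_{i\,:\,e\in E_i} 2^i = \alpha\,\mu_e$. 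The property actually being used is that the per-level capacities attributed to $e$ sum \emph{exactly} to $\mu_e$; your partition satisfies this trivially, but the binary decomposition satisfies it too, while keeping every $G_i$ a simple unweighted graph and avoiding discretization altogether.
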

\begin{proof}
We write each edge in binary so that $G = \sum_{i=0}^{\log U} 2^i G_i$ for some unweighted graphs $\{G_i = (V, E_i\}_{i \in [\log U]}\},$ where $|E_i| \leq m$ for all $i$. We now apply the unweighted flow-sparsifier to each $G_i$ in turn to obtain graphs $\{G'_i\}.$ We let $G' \defeq \sum_{i=0}^{\log U} 2^i G'_i$ be the weighted flow-sparsified graph. 
By the assumption on the unweighted flow-sparsifier, each $G'_i$ is $h$-sparse,
so that $G'$ must have at most $h \cdot \log U$ edges.   
Similarly, $G'$ is an $\epsilon$-cut approximation of $G$, as each $G'_i$ is an $\epsilon$-cut approximation of the corresponding $G_i.$
Letting $\mvar{M}_i$ be the embedding of $G'_i$ into $G_i,$ we can consider the embedding $\mvar{M} = \sum_{i=0}^{\log U} 2^i \mvar{M}_i$ of $G'$ into $G.$
As each $\mvar{M_i}$ has congestion bounded by $\alpha,$ it must be the case that $\mvar{M}$ also has congestion bounded by $\alpha.$
The time to run the weighted flow sparsifier and to invoke $\mvar{M}$ is now $\tilde{O}(m) \cdot \log U = \tilde{O}(m)$ by our assumption on $U.$
\end{proof}

The next step is to construct a routine which flow-sparsifies a constant fraction of the edges of $E.$ This routine will then be applied iteratively to produce the final flow-sparsifier.
\begin{lemma}\label{lem:mainlemflowsparsify}
On input an unweighted graph $G=(V,E),$ there is an algorithm that runs in $\tilde{O}(m)$ and computes a partition of $E$ into $(F, \bar{F}),$ an edge set $F' \subseteq F$ with weight vector $\vec{w_{F'}} \in \R^E, \operatorname{support}(w_{F'}) = F',$ and an embedding $\mvar{H}:\R^{F'} \rightarrow \R^E$ with the following properties:
\begin{enumerate}
\item $F$ contains most of the volume of $G,$ i.e.
$$
|F| \geq \frac{|E|}{2};
$$
\item $F'$ contains only $\tilde{O}(n)$ edges, i.e. $|F'| \leq \tilde{O}(n).$
\item The weights $w_{F'}$ are bounded
$$
\forall e \in F' \enspace ,  \enspace \frac{1}{\poly(n)} \leq w_{F'}(e) \leq n.
$$
\item The graph $H'=(V,F', w_{F'})$ is an $\epsilon$-cut approximation to $H=(V,F),$ i.e. for all $S \subseteq V:$
$$
(1-\epsilon) |\cutset_H(S)|\leq w_{F'}(\cutset_{H'}(S)) \leq (1+\epsilon) |\cutset_H(S)|.
$$
\item The embedding $\mvar{H}$ from $H=(V,F', w_{F'})$ to $G$ has bounded congestion
$$
\congest(\mvar{H}) = \otilde(1).
$$
and can be applied in time $\tilde{O}(m).$
\end{enumerate}
\end{lemma}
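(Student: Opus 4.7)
The plan is to follow the approach of Spielman and Teng for spectral sparsifiers, adapted to produce an efficient embedding as well. The key primitive I would invoke is an \emph{expander decomposition}: an algorithm that, in $\tilde{O}(m)$ time, partitions $V$ into clusters $V_1, \ldots, V_k$ such that each induced subgraph $G(V_i)$ has conductance $\Omega(1/\mathrm{polylog}(n))$ and the number of inter-cluster edges is at most $|E|/2$. Such a procedure can be obtained by repeatedly invoking a local clustering routine in the style of Spielman-Teng on the unpartitioned portion of the graph until the cut-edges threshold is reached.

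Given such a decomposition, I set $F := \bigcup_i E(V_i)$, so property~(1) is immediate. For property~(2), I would sparsify each $G(V_i)$ by the Benczur-Karger uniform sampling procedure, producing a weighted graph $(V_i, F'_i, w_i)$ with $|F'_i| = \tilde{O}(|V_i|)$ that $\epsilon$-cut-approximates $G(V_i)$. Taking the disjoint union and setting $F' := \bigcup_i F'_i$ and $w_{F'} := \bigcup_i w_i$ then gives $|F'| = \tilde{O}(n)$. Property~(3) follows because sampling probabilities in Benczur-Karger lie in $[1/\poly(n), 1]$ after a harmless truncation, and all base weights equal $1$. Property~(4) follows because every cut in $H = (V, F)$ decomposes into a disjoint union of cuts within the clusters, and each cluster sparsifier preserves those cuts to within $(1 \pm \epsilon)$.

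For property~(5), I would define $\mvar{H}$ cluster by cluster: for each $e \in F'_i$ let $\mvar{H}$ send the demand $\demandsEdge{e}$ along the electrical flow routing it in $G(V_i)$ with resistances $\rMatrix = \capacityMatrix^{-1}$, scaled by the sparsifier weight $w_{F'}(e)$. Since each routing uses only edges inside its cluster, the congestion incurred on an edge $(a,b) \in E(V_i)$ is
\[
\sum_{e \in F'_i} w_{F'}(e) \cdot |v_e(a) - v_e(b)|,
\]
whose expectation over the sparsification randomness equals $\sum_{e \in E(V_i)} |v_e(a) - v_e(b)|$. By \lemmaref{lem:elecsubroutinggoal} applied with $F = E(V_i)$ and $U = V_i$, this expected sum is $\tilde{O}(1/\phi(G(V_i))^2) = \tilde{O}(1)$. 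A matrix-Chernoff argument, combined with the weight upper bound in property~(3), would then upgrade the expectation bound to a high-probability worst-case bound uniformly over all $m$ edges. The embedding can be applied in $\tilde{O}(m)$ total time by invoking the nearly-linear-time Laplacian solver on each cluster and batching the solves across clusters.

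The principal obstacle is the expander decomposition step: producing a partition in $\tilde{O}(m)$ time whose pieces simultaneously have $1/\mathrm{polylog}$ conductance in the induced subgraph and cover a constant fraction of the edges is the delicate combinatorial core of the argument. A secondary difficulty is lifting the expected per-edge congestion bound to a high-probability uniform bound over all $m$ edges of the base graph, which must be done via a concentration inequality that exploits the bounded weights from property~(3).
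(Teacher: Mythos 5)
Your proposal follows the same broad strategy as the paper (Spielman--Teng style decomposition, cut sparsification of the clusters, electrical-flow embedding back into $G$), but it hinges on a decomposition primitive that is strictly stronger than what the paper has available, and this is the crux of the lemma.

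You assume that in $\tilde{O}(m)$ time one can partition $V$ into $V_1,\dots,V_k$ with $\conductance(G(V_i)) = \Omega(1/\mathrm{polylog}(n))$ for every $i$ and at most $|E|/2$ inter-cluster edges. A true expander decomposition of this kind is not what Spielman--Teng's local clustering yields, and it was not known to be computable in near-linear time when this paper was written. The paper's Decomposition Lemma (\lemmaref{lem:decomp_lemma}) is deliberately weaker: it produces a three-level nesting $S_i \subseteq T_i \subseteq V_i$ where the $S_i$ contain at least half the edges, some \emph{intermediate} set $T_i$ satisfies $\conductance(G(T_i)) = \Omega(1/\log^2 n)$, and the $V_i$ partition $V$ --- but neither $G(S_i)$ nor $G(V_i)$ is guaranteed to be a near-expander. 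The construction then sparsifies $G(S_i)$ (not $G(V_i)$), routes electrically in $G(V_i)$, and invokes \lemmaref{lem:elecsubroutinggoal} through the chain $E(S_i) \subseteq E(T_i) \subseteq E(V_i)$ so that only $\conductance(G(T_i))$ ever enters the bound. Your version collapses $S_i$, $T_i$, $V_i$ into one set, which makes the decomposition claim unsupportable with the machinery used here.

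A second, more minor issue is the congestion bound. You propose bounding the expected load on each base edge $(a,b)$ by $\sum_{e \in E(V_i)} |v_e(a) - v_e(b)| = \tilde O(1)$ and then lifting to a uniform high-probability bound via matrix Chernoff. The expectation step is fine (it is effectively re-deriving that $\rho^{E(V_i)}$ bounds the congestion of routing the unit demands $\{\demandsEdge{e}\}_{e \in E(V_i)}$ since they have $\opt = 1$), but the concentration step is delicate: property~(3) only gives $w_{F'}(e) \leq n$, so a single sampled edge can contribute up to $n \cdot |v_e(a)-v_e(b)|$ to the load on $(a,b)$, and naive Chernoff will not control this. The paper sidesteps randomness entirely at this stage: once the sparsifier is fixed, the demands $D_i$ corresponding to the weighted edges $E'_i$ satisfy $\opt_{G'_i}(D_i)=1$; the cut-approximation guarantee together with the Aumann--Rabani flow-cut gap gives $\opt_{G_i}(D_i) = \tilde O(1)$; and then $\congest(\mvar{H}) \leq \max_i \rho^{E(S_i)}_{G(V_i)} \cdot \opt_{G_i}(D_i) = \tilde O(1)$ deterministically. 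This is both simpler and more robust than the concentration route.
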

Given Lemma~\ref{lem:weightedtounweighted} and Lemma~\ref{lem:mainlemflowsparsify}, it is straightforward to complete the proof of Theorem~\ref{thm:flow-sparsify}.
\begin{proof}
Using Lemma~\ref{lem:weightedtounweighted}, we reduce the objective of Theorem~\ref{thm:flow-sparsify} to running a $(\otilde(n), \epsilon, \otilde(1))$-flow sparsifier on $\log U$ unweighted graphs, where we use the fact that $U \leq \poly(n).$ To construct this unweighted flow sparsifier, we apply Lemma~\ref{lem:mainlemflowsparsify} iteratively as follows.
Starting with the instance unweighted graph $G_1=(V,E_1),$ we run the algorithm of Lemma~\ref{lem:mainlemflowsparsify} on the current graph $G_t = (V,E_t)$  to produce the sets $F_t$ and $F'_t,$ the weight vector $w_{F'_t}$ and the embedding $\mvar{H_t}:\R^{F'_t} \rightarrow R^E.$ To proceed to the next iteration, we then define $E_{t+1} \defeq E_t \setminus F_t$ and move on to $G_{t+1}.$

By Lemma~\ref{lem:mainlemflowsparsify}, at every iteration $t,$ $|F_t| \geq \frac{1}{2} \cdot |E_t|,$ so that $|E_{t+1}| \leq \frac{1}{2} \cdot |E_t|.$
This shows that there can be at most $T \leq \log(|E_1|) = O(\log n)$ iterations.

After the last iteration $T,$ we have effectively partitioned $E_1$ into disjoint subsets $\{F_t\}_{t \in [T]},$ where each $F_t$ is well-approximated but the weighted edgeset $F'_t.$  We then output the weighted graph $G'=(V, E'\defeq \cap_{t=1}^T F'_t, w' \defeq \sum_{t=1}^{T} w_{F'_t}),$ which is the sum of the disjoint weighted edges sets $\{F'_t\}_{t \in [T]}.$ We also output the embedding $\mEmbed: \R^{E'} \rightarrow \R^{E}$ from $G'$ to $G,$ defined as the direct sum
$$
\mEmbed = \bigoplus_{t=1}^T \mvar{H_t}.
$$
In words, $\mEmbed$ maps an edge $e' \in E'$ by finding $t$ for which $e' \in F'_t$ and applying the corresponding $\mvar{H_t}.$

We are now ready to prove that this algorithm with output $G'$ and $\mvar{M}$ is an efficient $(\otilde(n), \epsilon, \otilde(n))$-flow sparsifier. 
To bound the capacity ratio $U'$ of $G'$, we notice that
$$
U' \leq \max_t \frac{\max_{e \in F'_t} w_{F'_t}(e)}{\min_{e \in F'_t} w_{F'_t}(e)} \leq \poly(n),
$$
where we used the fact that the sets $F'_t$ are disjoint and the guarantee on the range of $w_{F'_t}.$

Next, we bound the sparsity of $G'.$ By Lemma~\ref{lem:mainlemflowsparsify}, $F'_t$ contains at most $\otilde(n)$ edges. As a result, we get the required bound
$$
|E'| = \sum_{t=1}^T |F'_t| \leq \otilde(Tn) = \otilde(n).
$$
For the cut approximation, we consider any $S \subseteq V.$ By the cut guarantee of Lemma~\ref{lem:mainlemflowsparsify}, we have that, for all $t \in [T],$
$$
(1-\epsilon) |\cutset_G(S) \cap F_t|\leq w_{F'_t}(\cutset_G(S) \cap F'_t) \leq (1+\epsilon) |\cutset_G(S) \cap F_t|.
$$
Summing over all $t,$ as $E' = \mathop{\dot{\bigcup}} F'_t$ and $E = \mathop{\dot{\bigcup}}F_t,$ we obtain the required approximation
$$
(1-\epsilon) |\cutset_G(S)|\leq w'(\cutset_{G'}(S)) \leq (1+\epsilon) |\cutset_G(S)|.
$$
The congestion of $\mvar{M}$ can be bounded as follows
$$
\congest(\mvar{M}) \leq \sum_{t=1}^T \congest(\mvar{H_t}) = \otilde(T) = \otilde(1).
$$
To conclude the proof, we address the efficiency of the flow sparsifier.
The algorithm applies the routine of Lemma~\ref{lem:mainlemflowsparsify} for $T=\otilde(1)$ times and hence runs in time $\otilde(m),$ as required.
Invoking the embedding $\mvar{M}$ requires invoking each of the $T$ embeddings ${\mvar{H_t}}.$ This takes time $\otilde(Tm)=\otilde(m).$

\end{proof}

\subsubsection{Flow Sparsification of Unweighted Graphs: Proof of Lemma~\ref{lem:mainlemflowsparsify}}

In this subsection, we prove Lemma~\ref{lem:mainlemflowsparsify}. Our starting point is the following decomposition statement, which shows that we can form a partition of an unweighted graph where most edges do not cross the boundaries and the subgraphs induced within each set of this partition are near-expanders.
The following lemma is implicit in Spielman and Teng's local clustering approach to spectral sparsification~\cite{STspectralSparse}. 
\begin{lemma}[Decomposition Lemma]\label{lem:decomp_lemma}
For an unweighted graph $G=(V,E),$ in $\tilde{O}(m)$-time we can produce a partition $V_1, \ldots, V_k$ of $V$ and a collection of sets $S_1, \ldots, S_k \subseteq V$ with the following properties:
\begin{itemize}
\item For all $i,$ $S_i$ is contained in $V_i.$
\item For all $i,$ there exists a set $T_i$ with $S_i \subseteq T_i \subseteq V_i,$ such that
$$
\conductance(G(T_i)) \geq \Omega\left(\frac{1}{\log^2 n}\right).
$$
\item At least half of the edges are found within the sets $\{S_i\},$ i.e.
$$
\sum^k_{i=1}|E(S_i)| = \sum_{i=1}^k |\{e=\{a,b\} : a \in S_i, b \in S_i\}| \geq \frac{1}{2} |E|. 
$$
\end{itemize} 
\end{lemma}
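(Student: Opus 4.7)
\medskip

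\noindent\textbf{Proof plan for Lemma~\ref{lem:decomp_lemma}.} My plan is to adapt the recursive local-partitioning scheme introduced by Spielman and Teng in their construction of spectral sparsifiers~\cite{STspectralSparse}. The central primitive is a local clustering subroutine (the \emph{Nibble} algorithm, or its PageRank variant): given a subgraph $H$, a starting vertex $v$, and a target conductance $\phi$, it runs in time $\tilde{O}(\mathrm{vol}(\mathrm{output}))$ and either (i) returns a vertex set $C \subseteq V(H)$ satisfying $\conductance_H(C) = O(\phi)$ and $\vol_H(C) \le \tfrac{1}{2}\vol(H)$, or (ii) terminates in a way that certifies that $v$ lies inside a subset $T_v \subseteq V(H)$ such that every cut inside $H(T_v)$ has conductance at least $\Omega(\phi/\log n)$.

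The overall algorithm maintains a residual vertex set $R$, initially $R = V$. While $R$ is nonempty, I sample a vertex $v \in R$ with probability proportional to $\deg_{G(R)}(v)$ and invoke Nibble from $v$ in $G(R)$ with target conductance $\phi = \Theta(1/\log n)$. In case (i), the returned set $C$ is peeled off as a new partition cell $V_j := C$; within $V_j$ we then run one further Nibble-style pass to identify a core $T_j \subseteq V_j$ of certified conductance $\Omega(1/\log^2 n)$, and set $S_j \subseteq T_j$ to be the vertices whose entire $G$-neighborhood lies inside $T_j$. In case (ii), I extract a cluster around $v$ (the ball that Nibble explored) together with its certified core $T_j$ and inner core $S_j$ defined analogously. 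Iterate until $R$ is empty.

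The running time is $\tilde{O}(m)$ by the standard amortized analysis of Nibble: each invocation does work proportional to the volume of its output, and the total volume summed over all iterations is $\tilde{O}(m)$ because the random (degree-weighted) choice of start vertex, combined with the volume bounds on extracted cuts, guarantees that each edge is touched only a polylogarithmic number of times. The conductance guarantee $\conductance(G(T_i)) \ge \Omega(1/\log^2 n)$ in the induced subgraph $G(T_i)$ comes from the core-expansion certificate of Nibble: one $\log n$ factor is the gap between the cut conductance targeted by Nibble and the expansion it certifies for the core, and another arises in translating the certificate in $G(R)$ at the time of extraction to the certificate in $G(T_i)$, using that $T_i \subseteq V_i \subseteq R_{\mathrm{at\ extraction}}$ and $G(T_i)$ is an induced subgraph. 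To establish $\sum_i |E(S_i)| \ge |E|/2$, I would use the standard charging argument: every edge is either (a) inside some $E(S_i)$, (b) in the cut $\cutset_G(V_i)$ between the partition cells (chargeable to the low-conductance cuts output by Nibble in case (i)), or (c) inside $E(V_i) \setminus E(S_i)$, i.e.\ incident to the boundary $V_i \setminus S_i$ of the inner core (chargeable to the boundary volume of the Nibble cluster). Choosing $\phi$ to be a small enough $\Theta(1/\log n)$ makes the total (b)+(c) charge at most $m/2$.

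The main obstacle will be correctly marrying the three-tier structure $S_i \subseteq T_i \subseteq V_i$ to the local-partitioning guarantees. Concretely, Nibble's conductance certificate is naturally stated in the residual graph at the moment of extraction, whereas the lemma demands the certificate in $G(T_i)$; showing these coincide requires that no edge of $G$ leaves $T_i$ except through the peeled boundary of $V_i$, which is why the inner core $S_i$ must be strictly inside $T_i$. A second subtlety is that the inner core $S_i$ must simultaneously be large enough to absorb at least half the total edge count, which forces the Nibble invocation to produce clusters whose ``boundary layer'' $V_i \setminus S_i$ has volume only an $O(1/\log^2 n)$ fraction of $\vol(V_i)$; this is exactly the regime in which the core-expansion version of Nibble operates, and it is where the $\log^2 n$ loss in conductance is paid.
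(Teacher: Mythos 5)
You should first know that the paper does not actually prove this lemma: it states that the decomposition is ``implicit in Spielman and Teng's local clustering approach to spectral sparsification''~\cite{STspectralSparse} and moves on. So your sketch is an attempt to supply what the paper only cites, and you have correctly identified the right machinery (Nibble/PageRank local clustering with the degree-weighted random restarts and the volume-proportional running-time accounting).

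There is, however, a genuine gap in how you handle case (i). When Nibble returns a low-conductance set $C$ in the residual graph, you declare $V_j := C$ a \emph{final} partition cell and claim that ``one further Nibble-style pass'' inside $V_j$ finds a core $T_j$ with $\conductance(G(T_j)) \geq \Omega(1/\log^2 n)$ capturing most of $C$'s edges. This step fails: low conductance of the cut $\cutset(C)$ says nothing about the internal structure of $G(C)$. Take $G(C)$ to be a long path of cliques; the cut around $C$ may have conductance $O(\phi)$ while no single induced subgraph $G(T)$ with $T \subseteq C$ of conductance $\Omega(1/\log^2 n)$ contains more than a vanishing fraction of $C$'s edges, so no choice of $S_j \subseteq T_j$ can absorb half of $|E(C)|$. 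The Spielman--Teng construction handles this by \emph{recursing} on the removed low-conductance piece: only the piece that the local-clustering certificate applies to is finalized at each level, the removed piece is re-partitioned, the recursion depth is $O(\log n)$ because volumes shrink geometrically, and the total number of inter-cell edges is bounded by summing $\phi \cdot \vol(\cdot)$ over the recursion tree --- which is exactly why $\phi$ must be taken $\Theta(1/\log n)$ to keep the loss below $|E|/2$, and is where one of the two logarithmic factors in the $\Omega(1/\log^2 n)$ conductance bound is paid. Your flat while-loop omits this recursion, and consequently both the edge-count guarantee $\sum_i |E(S_i)| \geq |E|/2$ and the per-cell conductance certificate are unsupported for the case-(i) cells. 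Restructuring the argument as the $O(\log n)$-depth recursion (finalize the certified remainder, recurse on the peeled low-conductance part) repairs the proof and matches the cited construction.
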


To design an algorithm satisfying the requirements of Lemma~\ref{lem:mainlemflowsparsify}, we start by appling the Decomposition Lemma to our unweighted input graph $G=(V,E)$ to obtain the partition $\{V_i\}_{i \in [k]}$ and the sets $\{S_i\}_{i \in [k]}.$ We let $G_i \defeq (S_i, E(S_i)).$ To reduce the number of edges, while preseving cuts, we apply a spectral sparsification algorithm to each $G_i.$ Concretely, by applying the spectral sparsification by effective resistances of Spielman and Srivastava~\cite{Spielman:2008:GSE:1374376.1374456} to each $G_i,$ we obtain weighted graphs $G'_i = (S_i, E'_i \subseteq E(S_i), w'_i)$ in time $\sum_{i=1}^k \otilde(|E(S_i)|) \leq \otilde(|E|)$ with $|E'_i| \leq \otilde(|S_i|)$ and the property that cuts are preserved\footnote{The spectral sparsification result actually yields the stronger spectral approximation guarantee, but for our purposes the cut guarantee suffices.} for all $i$:
$$
\forall S \subseteq S_i \enspace , \enspace (1-\epsilon) \cdot |\cutset_{G_i}(S)| \leq w'_i(\cutset_{G'_i}(S)) \leq (1+\epsilon) \cdot |\cutset_{G_i}(S)|.
$$
Moreover, the spectral sparsification of~\cite{Spielman:2008:GSE:1374376.1374456} constructs the weights $\{w'_i(e)\}_{e \in E'_i}$ such that 
$$
\forall e \in E'_i \enspace, \enspace  \frac{1}{\poly(n)} \leq \frac{1}{\poly(|S_i|)} \leq w'_i(e) \leq |S_i| \leq n.
$$
To complete the description of the algorithm, we output the partition $(F, \bar{F})$ of $E,$ where
$$
F \defeq \bigcup_{i=1}^k E(S_i).
$$
We also output the set of weighted sparsified edges $F'.$ 
$$
F' \defeq \bigcup_{i=1}^k E'_i.
$$
The weight $w_{F'}(e)$ of edge $e \in F'$ is given by finding $i$ such that $e \in E'_i$ and setting $w_{F'}(e) = w'_i(e).$

We now depart from Spielman and Teng's construction by endowing our $F'$ with an embedding onto $G.$ The embedding $\mvar{H}: \R^{F'} \rightarrow \R^E$ of the graph $H=(V, F', w_{F'})$ to $G$ is constructed by using the oblivious electrical-flow routing of $E(S_i)$ into $G(V_i).$
More specifically, as the sets $\{V_i\}$ partition $V,$ the embedding $\mvar{H}$ can be expressed as the following direct sum over the orthogonal subspaces $\{\R^{E(V_i) \times E'_i}\}.$
$$
\mvar{H} \defeq \left(
\bigoplus_{i=1}^k \incMatrix_{E(V_i)} \lapPseudo_{G(V_i)} 
\incMatrix^T_{E(V_i)} \iMatrix_{(E(V_i), E'_i)}
\right),
$$
where $\iMatrix_{(E(V_i), E'_i)}$ is the identity mapping of the edges $E'_i \subseteq E(V_i)$ of $F'$  over $V_i$ to the edges $E(V_i)$ of $V_i$ in $G.$
Notice that there is no dependence on the resistances over $G$ as $G$ is unweighted.

This complete the description of the algorithm. We are now ready to give the proof of Lemma~\ref{lem:mainlemflowsparsify}.

\begin{proof}[Proof of Lemma~\ref{lem:mainlemflowsparsify}]
The algorithm described above performs a decomposition of the input graph $G=(V,E)$ in time $\otilde(m)$ by the Decomposition Lemma. By the result of Spielman and Srivastava~\cite{Spielman:2008:GSE:1374376.1374456}, each $G_i$ is sparsified in time $\otilde(|E(S_i)|).$ Hence, the sparsification step requires time $\otilde(m)$ as well. This shows that the algorithm runs in $\otilde(m)$-time, as required.

By the Decomposition Lemma, we know that $|F| = \sum_{i=1}^k |E(S_i)| \geq \frac{|E|}{2},$ which satisfies the requirement of the Lemma. Moreover, by the spectral sparsification result, we know that $|F'| = \sum_{i=1}^k |E'_i| \leq \sum_{i=1}^k \otilde(|S_i|) \leq \otilde(n),$ as required. We also saw that by construction the weights $w_{F'}$ are bounded:
$$
\forall e \in F' \enspace ,  \enspace \frac{1}{\poly(n)} \leq w_{F'}(e) \leq n.
$$

To obtain the cut-approximation guarantee, we use the fact that for every $i,$ by spectral sparsification,
$$
\forall S \subseteq S_i \enspace , \enspace (1-\epsilon) \cdot |\cutset_{G_i}(S)| \leq w'_i(\cutset_{G'_i}(S)) \leq (1+\epsilon) \cdot |\cutset_{G_i}(S)|.
$$
We have $H'=(V,F', w_{F'})$ and $H=(V,F).$
Consider now $T \subseteq V$ and apply the previous bound to $T \cap S_i$ for all $i.$ Because $F' \subseteq F = \cup_{i=1}^k E(S_i),$ we have that summing over the $k$ bounds yields
$$
\forall T \subseteq V \enspace , \enspace (1 - \epsilon) |\cutset_H(T)| \leq w_{F'}(\cutset_{H'}(T)) \leq (1+ \epsilon) |\cutset_H(T)|,
$$
which is the desired cut-approximaton guarantee.

Finally, we are left to prove that the embedding $\mvar{H}$ from $H'=(V,F', w_{F'})$ to $G=(V,E)$ has low congestion and can be applied efficiently.
By definition of congestion,
\[
\congest(\mvar{H}) =
\max_{\vx \in \R^{F'}} \frac{\normInf{{\mvar{H} \vx}}}{\normInf{\capacityMatrix_{F'}^{-1} \vx}}
= \normInf{|\mvar{H}| \capacityMatrix_{F'} \onesVec_{F'}} =
\left\|\left|
\bigoplus_{i=1}^k \incMatrix_{E(V_i)} \lapPseudo_{G(V_i)} 
\incMatrix^T_{E(V_i)} \iMatrix_{(E(V_i), E'_i)}
\right| \capacityMatrix_{F'} \onesVec_{F'}\right\|_\infty.
\]
Decomposing $\R^{E}$ into the subspaces $\{\R^{E(V_i)}\}$ and $\R^{F'}$ into the subspaces $\{\R^{E'_i}\}$ we have:
$$
\congest(\mvar{H}) \leq \max_{i \in [k]} \left\|\left|
\incMatrix_{E(V_i)} \lapPseudo_{G(V_i)} 
\incMatrix^T_{E(V_i)} \iMatrix_{(E(V_i), E'_i)} \right| \capacityMatrix_{E'_i} \onesVec_{E'_i} \right\|_\infty.
$$
For each $i \in [k],$ consider now the set of demands $D_i$ over $V_i$, $D_i \defeq \{\demands_e\}_{e \in E'_i},$ given by the edges of $E'_i$ with their capacities $w'_i.$ That is, $\demands_e \in \R^{V_i}$ is the demand corresponding to edge $e \in E'_i$ with weight $w'_i(e).$ Consider also the electrical routing  ${\mElRout}_{,i} = \incMatrix_{E(V_i)} \lapPseudo_{G(V_i)}$ over $G(V_i).$
Then:
$$
\congest(\mvar{H}) \leq \max_{i \in [k]} \congest({\mElRout}_{,i} D_i)
$$
Notice that, by construction, $D_i$ is routable in $G'_i=(S_i, E'_i, w'_i)$ and $\opt_{G'_i}(D_i) = 1.$ 
But, by our use of spectral sparsifiers in the construction, $G'_i$ is an $\epsilon$-cut approximation of $G_i.$ Hence, by the flow-cut gap of Aumann and Rabani~\cite{flowcutgap}, we have:
$$
\opt_{G_i}(D_i) \leq O(\log(|D_i|)) \cdot  \opt_{G'_i}(D_i) \leq \otilde(1).
$$
When we route $D_i$ oblivious in $G(V_i)$, we can consider the $E(S_i)$-competitive ratio $\rho^{E(S_i)}({\mElRout}_{,i})$ of the electrical routing  ${\mElRout}_{,i} = \incMatrix_{E(V_i)} \lapPseudo_{G(V_i)},$ as $D_i$ is routable in $E(S_i),$ because $E'_i \subseteq E(S_i).$ 
We have
$$
\congest(\mvar{H})  \leq \max_{i \in [k]} \rho^{E(S_i)}_{G(V_i)}({\mElRout}_{,i}) \cdot \opt^{E(S_i)}_{G(V_i)}(D_i) = \max_{i \in [k]} \rho^{E(S_i)}_{G(V_i)}({\mElRout}_{,i}) \cdot \opt_{G_i}(D_i),
$$
Finally, putting these bounds together, we have:
$$
\congest(\mvar{H}) \leq \max_{i \in [k]} \rho^{E(S_i)}_{G(V_i)}({\mElRout}_{,i}) \cdot \opt_{G_i}(D_i) \leq \otilde(1) \cdot \max_{i \in [k]} \rho^{E(S_i)}_{G(V_i)}({\mElRout}_{,i}).
$$
But, by the Decomposition Lemma, there exists $T_i$ with $S_i \subseteq T_i \subseteq V_i$ such that
$$
\conductance(G(T_i)) \geq \Omega \left(\frac{1}{\log^2 n}\right).
$$
Then, by Lemma~\ref{lem:elecsubroutinggoal}, we have that:
$$
\rho^{E(S_i)}_{G(V_i)}({\mElRout}_{,i}) \leq O\left(\frac{\log\vol(G(T_i))}{\conductance(G(T_i))^2}\right) \leq \otilde(1).
$$
This concludes the proof that $\congest(\mvar{H}) \leq \otilde(1).$
To complete the proof of the Lemma, we just notice that $\mvar{H}$ can be invoked in time $\otilde(m).$ A call of $\mvar{H}$ involves solving $k$-electrical-problems, one for each $G(V_i).$ This can be done in time $\sum_{i=1}^k \otilde(|E(V_i)|) \leq \otilde(m),$ using any of the nearly-linear Laplacian system solvers available, such as \cite{koszSolver}.
\end{proof}

\section{Removing Vertices in Oblivious Routing Construction}
\label{sec:less_vertices}

In this section we show how to reduce computing an efficient oblivious routing on a graph $G = (V, E)$ to computing an oblivious routing for $t$ graphs with $\otilde(\frac{|V|}{t})$ vertices and at most $|E|$ edges. Formally we show

\begin{theorem}[Node Reduction \emph{(Restatement)}]\label{thm:node_red_restatement}
Let $G = (V, E, \capacityVec)$ be an undirected capacitated graph with capacity ratio $U$. For all $t > 0$ in $\otilde(t \cdot |E|)$ time we can compute graphs $G_1, \ldots , G_t$ each with at most $\otilde(\frac{|E| \log(U)}{t})$ vertices, at most $|E|$ edges, and capacity ratio at most $|V| \cdot U$, such that given oblivious routings $\mRoute_i$ for each $G_i$, in $\otilde(t \cdot |E|)$ time we can compute an oblivious routing $\mRoute \in \R^{E \times V}$ on $G$ such that
\[
\timeOf{\mRoute} = \otilde\left(t \cdot |E| + \sum_{i = 1}^{t} \timeOf{\mRoute_i}\right)
\enspace \text{ and } \enspace
\rho(\mRoute) = \otilde\left(\max_{i} \rho(\mRoute_i)\right)
\]
\end{theorem}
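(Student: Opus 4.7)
The plan is to combine Madry's $j$-tree decomposition~\cite{Madry10} with a direct reduction that eliminates non-core vertices of degree $\leq 2$, and then glue the resulting routings together via the Embedding Lemma (Lemma~\ref{lem:embedding_lemma}). Concretely, I would first invoke Madry's $j$-tree sampling procedure with $j = \otilde(|E|/t)$ to produce, in time $\otilde(t\cdot|E|)$, graphs $\tilde{G}_1, \dots, \tilde{G}_t$ on the vertex set $V$, each of the form $T_i \cup C_i$ where $C_i$ is a ``core'' subgraph on at most $j$ vertices and $T_i$ is a spanning tree. The procedure also yields nonnegative weights $\lambda_i$ summing to $1$ and explicit embeddings $\mEmbed_i \colon \R^{E(\tilde G_i)} \to \R^E$, each applicable in $\otilde(|E|)$ time, such that $G$ embeds into each $\tilde G_i$ with congestion $1$ while the weighted mixture $\sum_i \lambda_i \mEmbed_i$ embeds into $G$ with total congestion $\otilde(1)$.

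Next, I would reduce each $\tilde G_i$ to a graph $G_i$ on $\otilde(|E|/t)$ vertices by repeatedly (i) deleting non-core leaves of $T_i$, absorbing each leaf's demand into its unique neighbor, and (ii) contracting maximal non-core degree-$2$ paths, replacing each by a single edge whose capacity is the minimum capacity along the path. After exhaustion, every surviving non-core vertex is a branching point of $T_i$, and a standard counting argument on trees with all leaves in $C_i$ gives at most $|C_i|-1$ such branching points. Hence $G_i$ has $O(|C_i|) = \otilde(|E|/t)$ vertices and at most $|E|$ edges, and since each path contraction can reduce capacities by at most a factor of $|V|$, the capacity ratio of $G_i$ is at most $|V|\cdot U$. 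Crucially, the whole reduction is purely linear in the demand: flow conservation on the forest $T_i$ determines the flow on every deleted or contracted edge uniquely from the input demand, giving explicit linear operators $\mvar{L}_i$ (a demand lifting from $\tilde G_i$ to $G_i$) and $\mvar{F}_i$ (a flow extension from $G_i$ back to $\tilde G_i$, including the forced tree routing), each applicable in $\otilde(|E|)$ time.

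Given oblivious routings $\mRoute_i$ on each $G_i$, I would then define
\[
\mRoute \;\defeq\; \sum_{i=1}^t \lambda_i \, \mEmbed_i \, \mvar{F}_i \, \mRoute_i \, \mvar{L}_i,
\]
so that each summand takes a demand $\demands$ on $V$, lifts it to a demand on the smaller vertex set of $G_i$, routes the lifted demand via $\mRoute_i$, extends the flow to all of $\tilde G_i$ by filling in the forced tree flow from $\demands$, and finally maps the result to a flow on $G$. Applying the Embedding Lemma twice, once to relate $G_i$ to $\tilde G_i$ via $\mvar{L}_i,\mvar{F}_i$ and once to relate $\tilde G_i$ to $G$ via the mixture $\sum_i \lambda_i \mEmbed_i$, yields $\rho(\mRoute) = \otilde(\max_i \rho(\mRoute_i))$; the running-time claims follow from the $\otilde(|E|)$ cost of each outer operator.

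The main obstacle is the vertex-elimination step of paragraph two: one must verify carefully that $\mvar{L}_i$ and $\mvar{F}_i$ genuinely witness embeddings of congestion $1$, so that the min-capacity rule for contracted edges is the right choice, and that the demand-dependent portion of the extended flow on $\tilde G_i$ does not inflate congestion beyond $\opt_{\tilde G_i}(\demands)$. This is precisely where the $j$-tree structure matters: because the forest $T_i$ has no cycles, the flow along any deleted or contracted path is determined uniquely by the demands at its endpoints, and this fixed offset can be shown to be dominated by $\opt_{\tilde G_i}(\demands)$ since it is forced in \emph{every} feasible routing. Combining this fact with the fact that $G$ embeds into $\tilde G_i$ with congestion $1$ then gives the clean $\otilde(\max_i \rho(\mRoute_i))$ competitive ratio.
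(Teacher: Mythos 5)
Your overall plan — sample $t$ almost-$j$-trees from $G$ via Madry's construction, shrink each to $\otilde(|E|\log U/t)$ vertices by eliminating non-core degree-$1$ and degree-$2$ vertices, and glue the small-graph routings back together through the Embedding Lemma — is exactly the structure of the paper's proof (the paper passes through an intermediate ``partial tree embedding'' object before arriving at an almost-$j$-tree, but this is a presentational difference, both pieces being imported from Madry). The final formula $\mRoute = \sum_i \lambda_i\,\mEmbed_i\,\mvar{F}_i\,\mRoute_i\,\mvar{L}_i$ also matches in spirit, though strictly speaking the forced tree flow enters additively (as in the paper's $\mRoute = \iMatrix_{E'\to E}\mRoute'(\iMatrix + \indicVec{b}\indicVec{a}^T) + \indicVec{e}\indicVec{a}^T$ for a leaf), so the pure product chain you wrote is not quite a faithful encoding; this is notational but worth tidying.

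The genuine gap is in your analysis of the degree-$2$ contraction, which you yourself flag as the ``main obstacle.'' You argue that ``the flow along any deleted or contracted path is determined uniquely by the demands at its endpoints'' and hence ``is forced in every feasible routing,'' deducing congestion $1$ for $\mvar{F}_i,\mvar{L}_i$. This is only true for degree-$1$ leaves: a pendant edge must carry exactly the demand of the leaf, which is why the paper gets $\rho(\mRoute) = \rho(\mRoute')$ with no loss in Lemma~\ref{lem:remove_deg_one_vert}. For a degree-$2$ path $a_1,\dots,a_k$, flow conservation at the internal vertices pins down only the \emph{differences} of flow between consecutive edges; the through-flow is a free parameter not determined by the demands along the path, so the min-capacity replacement edge is \emph{not} lossless. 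The paper handles this instead by an explicit bidirectional embedding argument (Lemma~\ref{lem:greedy_elimination}): replace the minimum-capacity edge $(a_j,a_{j+1})$ by a shortcut $(a_1,a_k)$, observe that each direction of the embedding has congestion $2$, and invoke the Embedding Lemma for a constant-factor ($\le 4$) loss, after which the newly created leaves are swept away by the degree-$1$ routine. The $\otilde(\max_i\rho(\mRoute_i))$ bound you want survives this correction, but the ``forced flow'' justification you offer does not establish it, and if taken literally it would wrongly predict a competitive ratio of exactly $\max_i\rho(\mRoute_i)$ through this step.
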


We break this proof into several parts. First we show how to embed $G$ into a collection of $t$ graphs consisting of trees minus some edges which we call \emph{patrial tree embeddings} (\sectionref{sec:from_graph_to_partialtreee}). Then we show how to embed a partial tree embedding in an ``almost $j$-tree'' \cite{Madry10}, that is a graph consisting of a tree and a subgraph on at most $j$ vertices, for $j = 2t$ (\sectionref{sec:from_patrial_tree_to_almost_tree}). Finally, we show how to reduce oblivious routing on an almost $j$-tree to oblivious routing on a graph with at most $O(j)$ vertices by removing degree-1 and degree-2 vertices (\sectionref{sec:from_almost_tree_to_less_vert}). Finally, in \sectionref{sec:route_final} we put this all together to prove \theoremref{thm:node_reduction}. 

We remark that much of the ideas in the section were either highly influenced from \cite{Madry10} or are direct restatements of theorems from \cite{Madry10} adapted to our setting. We encourage the reader to look over that paper for further details regarding the techniques used in this section.

\subsection{From Graphs to Partial Tree Embeddings}
\label{sec:from_graph_to_partialtreee}

To prove Theorem \ref{thm:node_reduction}, we make heavy use of spanning trees and various properties of them. In particular, we use the facts that for every pair of vertices there is a unique \emph{tree path} connecting them, that every edge in the tree \emph{induces a cut} in the graph, and that we can embed a graph in a tree by simply routing ever edge over its tree path and that the congestion of this embedding will be determined by the \emph{load} the edges place on tree edges. We define these quantities formally below.

\begin{definition}[Tree Path]\label{def:tree_path}
For undirected graph $G = (V, E)$, spanning tree $\tree$, and all $a, b \in V$ we let $\treePath{a, b} \subseteq E$ denote the unique path from $a$ to $b$ using only edges in $\tree$ and we let $\treePathVec{a,b} \in \Redgevec$ denote the vector representation of this path corresponding to the unique vector sending one one unit from $a$ to $b$ that is nonzero only on $\tree$ (i.e. $\incMatrix^T \treePathVec{a,b} = \demands_{a,b}$ and $\forall e \in \offtreeEdgeSet$ we have $\treePathVec{a,b}(e) = 0$)
\end{definition}

\begin{definition}[Tree Cuts]\label{def:tree_cuts}
For undirected $G = (V, E)$ and spanning tree $\tree \subseteq E$ the \emph{edges cut by $e$}, $\cutset_\tree(F)$, and the edges cut by $F$, $\cutset_\tree(e)$, are given by
\[
\cutset_\tree(e) \defeq \{e' \in E ~ | ~ e' \in \treePath{e}\}
\enspace \text{ and } \enspace
\cutset_\tree(F) \defeq \cup_{e \in F} \cutset(e)
\]
\end{definition}

\begin{definition}[Tree Load]\label{def:tree_load}
For undirected capacitated $G = (V, E, \capacityVec)$ and spanning tree $\tree \subseteq E$ the load on edge $e \in E$ by $\tree$, $\congest_\tree(e)$ is given by
$
\load_\tree(e) = \sum_{e' \in E | e \in \treePath{e'}} \capacityVec_{e'}
$
\end{definition}

While these properties do highlight the fact that we could just embed our graph into a collection of trees to simplify the structure of our graph, this approach suffers from a high computational cost \cite{Racke:2008:OHD:1374376.1374415}. Instead we show that we can embed parts of the graph onto collections of trees at a lower computational cost but higher complexity. In particular we will consider what we call partial tree embeddings.

\begin{definition}[Partial Tree Embedding \footnote{This is a restatement of the $H(T, F)$ graphs in \cite{Madry10}.}]\label{def:partial_tree}
For undirected capacititated graph $G = (V, E, \capacityVec)$, spanning tree $\tree$ and spanning tree subset $F \subseteq \tree$ we define the \emph{partial tree embedding graph} $H = H(G, \tree, F) = (V, E', \capacityVec')$ to a be a graph on the same vertex set where
$
E' = \tree \cup \cutset_\tree(F)
$
and
\[
\forall e \in E'
\enspace : \enspace
\capacityVec'(e) = 
\begin{cases}
\load_{\tree}(e) & \text{if $e \in \tree \setminus F$}. \\
\capacityVec(e) & \text{otherwise}
\end{cases}
\]
Furthermore, we let $\mvar{M}_H \in \R^{E' \times E}$ denote the embedding from $G$ to $H(G, \tree, F)$ where edges not cut by $F$ are routed over the tree and other edges are mapped to themselves.
\[
\forall e \in E
\enspace : \enspace
\mEmbed_H(e) =
\begin{cases}
\treePathVec{e} & e \notin \cutset_\tree(F)\\
\indicVec{e} & \text{otherwise} \\
\end{cases}
\]
and we let $\mEmbed'_H \in \R^{E \times E'}$ denote the embeding from $H$ to $G$ that simply maps edges in $H$ to their corresponding edges in $G$, i.e. $\forall e \in E'$, $\mEmbed'_H(e) = \indicVec{e}$.
\end{definition}

Note that by definition $\congest(\mEmbed_H) \leq 1$, i.e. a graph embeds into its partial tree embedding with no congestion. However, to get embedding guarantees in the other direction more work is required. For this purpose we use a lemma from Madry \cite{Madry10} saying that we can construct a convex combination or a distribution of partial tree embeddings we can get such a guarantee. 

\begin{lemma}[Probabilistic Partial Tree Embedding \footnote{This in an adaptation of Corollary 5.6 in \cite{Madry10}}]
\label{lem:probabilistic_paritial_tree_embedding}
For any undirected capacitated graph $G = (V, E, \capacityVec)$ and any $t > 0$ in $\otilde(t \cdot m)$ time we can find a collection of partial tree embeddings $H_1 = H(G, \tree_1, F_1), \ldots, H_t = H(G, \tree_t, F_t)$ and coefficients $\lambda_i \geq 0$ with $\sum_i \lambda_i = 1$ such that $\forall i \in [t]$ we have $|F_i| = \otilde(\frac{m \log U}{t})$ and such that $\sum_i \lambda_i \mEmbed'_{H_i}$ embeds $G' = \sum_i \lambda_i G_i$ into $G$ with congestion $\otilde(1)$.
\end{lemma}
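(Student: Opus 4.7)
The plan is to import the $j$-tree decomposition machinery of Madry~\cite{Madry10} and translate it into the language of partial tree embeddings developed in this section. Concretely, Madry's Corollary~5.6 asserts that for any $t>0$ one can in $\otilde(t \cdot m)$ time produce a convex combination of ``almost-$j$-trees'' $H_1, \ldots, H_t$ with $j = \otilde(m \log U / t)$ such that $G$ embeds into the combination with congestion 1 and the combination embeds into $G$ with congestion $\otilde(1)$. My goal is to show that, up to a harmless redefinition of capacities and a restriction to spanning trees, these almost-$j$-trees are precisely the partial tree embeddings $H(G,T_i,F_i)$ of Definition~\ref{def:partial_tree}, and that the resulting embedding $\sum_i \lambda_i \mEmbed'_{H_i}$ inherits the $\otilde(1)$ congestion guarantee.

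The first step is to invoke the underlying probabilistic spanning tree construction (a Racke-style low-stretch tree distribution, as used in the proof of Madry's Theorem~5.5). This yields a collection of pairs $(T_i, \lambda_i)$ such that for each $i$ the total capacity-weighted stretch $\sum_{e \in E} \capacityVec_e \cdot |\treePath{e}|$ is $\otilde(m)$ in expectation, and such that the entire construction can be carried out in $\otilde(t \cdot m)$ time. Next, for each tree $T_i$ I would choose $F_i \subseteq T_i$ to be the set of tree edges whose load $\load_{T_i}(e)$ relative to $\capacityVec_e$ exceeds a threshold chosen so that $|F_i| = \otilde(m \log U / t)$. This is exactly the step that converts a ``bad'' tree into an almost-$j$-tree in Madry: the high-load tree edges are removed from the tree routing and replaced by their original edges in $G$, which is precisely the edge set $\cutset_{T_i}(F_i) \cup (T_i \setminus F_i)$ that defines $H(G, T_i, F_i)$.

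The third step is the congestion accounting. By construction, every non-cut tree edge $e \in T_i \setminus F_i$ in $H_i$ carries capacity equal to its total routing load $\load_{T_i}(e)$, so the identity embedding $\mEmbed'_{H_i}$ places at most $\load_{T_i}(e)$ units of flow on $e$ in $G$ per unit of capacity routed over $e$ in $H_i$. Averaging against the distribution $\{\lambda_i\}$, the expected load on a fixed edge $e$ of $G$ is
\[
\sum_{i} \lambda_i \cdot \load_{T_i}(e) \cdot \onesVec[e \in T_i \setminus F_i] + \sum_i \lambda_i \cdot \capacityVec_e \cdot \onesVec[e \in \cutset_{T_i}(F_i)],
\]
which by the stretch bound and the threshold choice for $F_i$ is $\otilde(\capacityVec_e)$ — this is the computation at the heart of Madry's argument, and it is exactly what gives $\congest(\sum_i \lambda_i \mEmbed'_{H_i}) = \otilde(1)$. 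The $G$-into-$G'$ direction (implicit in Madry's statement but not actually required here) follows trivially from $\congest(\mEmbed_{H_i}) \le 1$ for each $i$.

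The main obstacle will be bookkeeping: making sure the almost-$j$-trees produced by Madry's construction can be cast in the specific form $H(G,T,F)$ with $T$ a spanning tree and $F$ a subset of tree edges, rather than the slightly more general form Madry uses (where the ``core'' is a subgraph on $j$ vertices, not necessarily obtained by cutting tree edges). I expect this to be a straightforward reduction: given an almost-$j$-tree with core $C$ of $j$ vertices, one can take $T$ to be a spanning tree that contains a spanning tree of the forest part together with an arbitrary spanning tree on $C$, and let $F$ be the $\otilde(j)$ tree edges incident on $C$; then $\cutset_T(F)$ recovers all the core edges with their original capacities. The parameter $\otilde(m \log U / t)$ matches Madry's Corollary~5.6 once the edge weights are handled by writing each weighted edge as a geometric sequence of unweighted copies, contributing the extra $\log U$ factor.
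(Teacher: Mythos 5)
The paper itself offers no proof of this lemma: as the footnote indicates, it is imported essentially verbatim from Corollary~5.6 of Madry~\cite{Madry10} and used as a black box, so there is no ``paper proof'' to compare against. Your reconstruction has the right overall shape --- a R\"acke-style spanning-tree distribution, a threshold choice of $F_i$ based on the relative load $\load_{T_i}(e)/\capacityVec_e$, and an accounting argument over the convex combination --- and these are indeed the ingredients of Madry's argument.

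Two corrections are worth noting, however. First, your closing paragraph misidentifies the main obstacle. You propose to produce generic almost-$j$-trees from Madry and then coerce them into the $H(G,T,F)$ form, but this is unnecessary and in fact backwards: Madry's Corollary~5.6 is already phrased in terms of the $H(T,F)$ graphs (the footnote to Definition~\ref{def:partial_tree} says explicitly that partial tree embeddings \emph{are} a restatement of Madry's $H(T,F)$), and the conversion in this paper runs in the opposite direction --- Lemma~\ref{lem:partialtree_to_jtree} turns a partial tree embedding \emph{into} an almost-$j$-tree by contracting tree paths between $F$-vertices, feeding into Lemma~\ref{lem:greedy_elimination}. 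Second, your congestion accounting entangles two things that are in fact separable. The $\otilde(1)$ bound on $\congest(\sum_i \lambda_i \mEmbed'_{H_i})$ does not really depend on which tree edges go into $F_i$: since $\capacityVec'_i(e)\le\load_{T_i}(e)$ for tree edges and $\capacityVec'_i(e)=\capacityVec_e$ for cut edges, the combined congestion at any edge $e$ is bounded by $1+\sum_i \lambda_i \load_{T_i}(e)/\capacityVec_e$, and the latter sum is $\otilde(1)$ by the defining property of the (multiplicative-weights-constructed, not independently sampled) tree distribution. The threshold choice of $F_i$ is what yields the size bound $|F_i|=\otilde(m\log U/t)$, by a Markov-type argument against the total relative stretch; it is not what makes the congestion work. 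Keeping these two roles of $F_i$ distinct would tighten the writeup.
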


Using this lemma, we can prove that we can reduce constructing an oblivious routing for a graph to constructing oblivious routings on several partial tree embeddings.

\begin{lemma}
\label{lem:obliv_route_graph_to_partial}
Let the $H_i$ be graphs produced by \lemmaref{lem:probabilistic_paritial_tree_embedding} and for all $i$ let $\mRoute_i$ be an oblivious routing algorithm for $H_i$. It follows that $\mRoute = \sum_{i} \lambda_i \mEmbed'_{H_i} \mRoute_i$ is an oblivious routing on $G$ with $\rho(\mRoute) \leq \otilde(\max_{i} \rho(\mRoute_i) \log n)$ and $\timeOf{\mRoute} = O(\sum_{i} \timeOf{\mRoute_i})$
\end{lemma}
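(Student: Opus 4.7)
The plan is to verify three claims about $\mRoute = \sum_i \lambda_i \mEmbed'_{H_i} \mRoute_i$: that it is a valid oblivious routing on $G$, that its competitive ratio meets the claimed bound, and that it can be applied within the stated time.

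First, I will check that $\mRoute$ is an oblivious routing. Since $\mEmbed'_{H_i}$ is an embedding from $H_i$ into $G$, \emph{i.e.}\ $\incMatrix^T \mEmbed'_{H_i} = \incMatrix_{H_i}^T$, and since $\mRoute_i$ is oblivious on $H_i$, \emph{i.e.}\ $\incMatrix_{H_i}^T \mRoute_i = \iMatrix$, we get $\incMatrix^T \mEmbed'_{H_i}\mRoute_i = \iMatrix$ for every $i$. Combining with $\sum_i \lambda_i = 1$ yields $\incMatrix^T \mRoute = \iMatrix$ as required by Definition~\ref{def:obliv_routing}.

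Second, I will bound the competitive ratio by mirroring the argument used in the proof of Theorem~\ref{thm:edge_reduction}. For any demand set $D = \{\demands_j\}$, each $\mRoute_i D$ is a multicommodity flow in $H_i$ with total congestion at most $\rho(\mRoute_i)\cdot \opt_{H_i}(D)$. By Lemma~\ref{lem:probabilistic_paritial_tree_embedding}, $G$ embeds into each $H_i$ with congestion $\le 1$, so the cut structure of $H_i$ approximately dominates that of $G$, and by the Aumann--Rabani flow-cut gap~\cite{flowcutgap} applied to $D$ we obtain $\opt_{H_i}(D) \le O(\log n)\cdot \opt_G(D)$ (this is exactly where the extra explicit $\log n$ factor in the lemma statement enters). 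Now, since $\capacityMatrix^{-1}|\mRoute D| \le \sum_i \lambda_i \capacityMatrix^{-1} |\mEmbed'_{H_i}\mRoute_i D|$ by the triangle inequality, the bound on the congestion of the convex-combination embedding $\sum_i \lambda_i \mEmbed'_{H_i}$ from $G' = \sum_i \lambda_i G_i$ into $G$ (which is $\otilde(1)$ by Lemma~\ref{lem:probabilistic_paritial_tree_embedding}) lets me transfer the per-$H_i$ congestion into $G$ at only an $\otilde(1)$ multiplicative overhead. Combining these steps gives $\congest_G(\mRoute D) \le \otilde(\max_i \rho(\mRoute_i)\cdot \log n)\cdot \opt_G(D)$, which upon reducing to the edge-demand instance $D_\infty$ used in Lemma~\ref{lem:equivalence_of_competetivity_and_norm} yields the claimed bound on $\rho(\mRoute)$.

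Third, for the running time, applying $\mRoute$ to an input vector reduces to applying each $\mRoute_i$ and then the trivial edge-to-edge map $\mEmbed'_{H_i}$, scaling by $\lambda_i$, and summing. Since $|E_{H_i}| \le |E|$ and $\mEmbed'_{H_i}$ just copies values of edges of $H_i$ into the corresponding slots of $\R^E$, applying $\mEmbed'_{H_i}$ costs $O(|E|)$, absorbed into $\timeOf{\mRoute_i}$. The main obstacle is the second step: individual embeddings $\mEmbed'_{H_i}$ may have arbitrarily large congestion because a tree edge in $H_i$ carries the inflated capacity $\load_{\tree_i}(e)$ that can be much bigger than $\capacityVec(e)$, so the bound cannot be obtained term by term; the proof must genuinely exploit the fact that the convex combination with weights $\lambda_i$ has bounded congestion, which is the nontrivial output of Lemma~\ref{lem:probabilistic_paritial_tree_embedding}.
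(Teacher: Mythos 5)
Your three-part plan is sound and the first and third parts match the paper. The competitive-ratio argument, however, contains a misstep that you should fix.

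You invoke the Aumann--Rabani flow-cut gap to claim $\opt_{H_i}(D) \le O(\log n)\cdot\opt_G(D)$, attributing the explicit $\log n$ in the lemma statement to this step. This is both unnecessary and a misreading of the setup. Unlike the edge-sparsification setting of Theorem~\ref{thm:edge_reduction} --- where the sparsifier $G'$ is only a \emph{cut} approximation of $G$ with no explicit flow map from $G$ into $G'$, so the flow-cut gap is genuinely needed to compare optima --- here we have a bona fide embedding $\mEmbed_{H_i}$ from $G$ into $H_i$ with $\congest(\mEmbed_{H_i}) \le 1$ (noted just after Definition~\ref{def:partial_tree}). Pushing the optimal multicommodity flow for $D$ in $G$ through $\mEmbed_{H_i}$ immediately gives a feasible routing of $D$ in $H_i$ of no larger congestion, so $\opt_{H_i}(D) \le \opt_G(D)$ with no logarithmic loss. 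The $\log n$ in the lemma's statement is simply absorbed by the $\otilde(\cdot)$; the paper's proof derives $\rho(\mRoute) = \otilde(\max_i \rho(\mRoute_i))$ and nothing tighter is claimed. Also, your closing remark about ``reducing to the edge-demand instance $D_\infty$'' is superfluous: once you bound $\congest_G(\mRoute D)/\opt_G(D)$ for all $D$ you already have a bound on $\rho(\mRoute)$ by Definition~\ref{def:competitive_ratio}; the $D_\infty$ device belongs inside the proof of Lemma~\ref{lem:equivalence_of_competetivity_and_norm}, not here.

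The remaining core of your argument --- triangle inequality to distribute $|\mRoute D|$ over the convex combination, then using that $\sum_i \lambda_i \mEmbed'_{H_i}$ has congestion $\otilde(1)$ (and that each individual $\mEmbed'_{H_i}$ may not) --- is the right idea and is a flow-level rendering of what the paper does algebraically: the paper rewrites $\mRoute_i = \mRoute_i \incMatrix_{H_i}^T \mEmbed_{H_i}$, factors the operator norm of the sum as
$\normInf{\sum_i \lambda_i \capacityMatrix^{-1}\mEmbed'_{H_i}\capacityMatrix_i}\cdot \max_j\rho(\mRoute_j)\cdot\max_k\congest(\mEmbed_{H_k})$,
and then bounds the three factors by $\otilde(1)$, $\max_i\rho(\mRoute_i)$, and $1$ respectively. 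Your route reaches the same bound but with the spurious extra $\log n$; with the correction above the two are essentially equivalent.
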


\begin{proof}
The proof is similar to the proof of \lemmaref{lem:embedding_lemma}. For all $i$ let $\capacityMatrix_i$ denote the capacity matrix of graph $G_i$. Then using \lemmaref{lem:equivalence_of_competetivity_and_norm} we get
\[
\rho(\mRoute)
= \normInf{\capacityMatrix^{-1} \mRoute \incMatrix^T \capacityMatrix}
= \left\|\sum_{i = 1}^{t} \lambda_i \capacityMatrix^{-1} \mEmbed'_{H_i} \mRoute_i \incMatrix^T \capacityMatrix \right\|_{\infty}
\]
Using that $\mEmbed_{H_i}$ is an embedding and therefore $\incMatrix^T_{H_i} \mEmbed_{H_i} = \incMatrix^T$ we get
\[
\rho(\mRoute)
= 
\left\|\sum_{i = 1}^{t} \lambda_i \capacityMatrix^{-1} \mEmbed'_{H_i} \mRoute_i \incMatrix^T_{H_i} \mEmbed_{H_i} \capacityMatrix 
\right\|_{\infty}
\leq 
\max_{j, k} \left\|\sum_{i = 1}^{t} \lambda_i \capacityMatrix^{-1} \mEmbed'_i \capacityMatrix_i  \right\|_{\infty} \cdot \rho(\mRoute_j) \cdot \congest(\mEmbed_{H_k})
\]
Since $\sum_{i} \lambda_i \mEmbed_{H_i}'$ is an embedding of congestion of at most $\otilde(1)$ and $\congest(\mEmbed_{H_k}) \leq 1$ we have the desired result.
\end{proof}

\subsection{From Partial Tree Embeddings To Almost-j-trees}
\label{sec:from_patrial_tree_to_almost_tree}

Here we show how to reduce constructing an oblivious routing for a partial tree embedding to constructing an oblivious routing for what Madry \cite{Madry10} calls an ``almost $j$-tree,'' the union of a tree plus a subgraph on at most $j$ vertices. First we define such objects and then we prove the reduction.

\begin{definition}[Almost $j$-tree]\label{def:almost_j}
We call a graph $G = (V, E)$ an \emph{almost $j$-tree} if there is a spanning tree $\tree \subseteq E$ such that the endpoints of $E \setminus \tree$ include at most $j$ vertices. 
\end{definition}

\begin{lemma}
\label{lem:partialtree_to_jtree}
For undirected capacitated $G = (V, E, \capacityVec)$ and partial tree embedding $H = H(G, \tree, F)$ in $\otilde(|E|)$ time we can construct an almost $2 \cdot |F|$-tree $G' = (V, E', \capacityVec')$ with $|E'| \leq |E|$ and an embedding $\mEmbed'$ from $G'$ to $H$ such that $H$ is embeddable into $G'$ with congestion 2, $\congest(\mEmbed') = 2$, and $\timeOf{\mEmbed'} = \otilde(|E|)$.
\end{lemma}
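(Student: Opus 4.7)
The plan is to build $G'$ on the same tree $\tree$ and with the same tree capacities as $H$, and to replace every non-tree edge $e=(u,v)\in\cutset_{\tree}(F)\setminus\tree$ of $H$ by one new edge whose endpoints lie in the set $S$ of endpoints of $F$ (so $|S|\le 2|F|$). Then $G'$ is an almost-$2|F|$-tree, and $|E'|=|\tree|+|\cutset_{\tree}(F)\setminus\tree|=|E(H)|\le|E|$.

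Removing $F$ splits $\tree$ into at most $|F|+1$ sub-trees. For each $e=(u,v)\in\cutset_{\tree}(F)\setminus\tree$, the path $\treePath{u,v}$ must cross at least one edge of $F$. I would let $s_u(e)\in S$ be the endpoint of the \emph{first} $F$-edge along $\treePath{u,v}$ starting from $u$ (so $s_u(e)$ lies in the same $(\tree\setminus F)$-component as $u$), and define $s_v(e)$ symmetrically from the $v$ side. Put the new edge $(s_u(e),s_v(e))$ into $G'$ with capacity $\capacityVec(e)$. With an LCA/Euler-tour structure on $\tree$ annotated by the $F$-marks, the map $e\mapsto(s_u(e),s_v(e))$ is computed in $\otilde(1)$ per edge, so building $G'$ takes $\otilde(|E|)$.

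The embedding $\mEmbed'\colon G'\to H$ fixes every tree edge and routes each new edge $(s_u(e),s_v(e))$ along the path $s_u(e)\to u\to v\to s_v(e)$ in $H$, using tree hops inside $u$'s and $v$'s $(\tree\setminus F)$-components together with the original off-tree edge $e$ as the middle hop. The reverse embedding $H\to G'$ is the mirror image: tree edges map to themselves and each off-tree $e=(u,v)$ is routed along $u\to s_u(e)\to s_v(e)\to v$ in $G'$. For the congestion, new/off-tree edges are used exactly once at capacity, contributing $1$. Tree edges in $F$ are only used by the self-map (the reroutings stay inside a single $(\tree\setminus F)$-component), so they also get $1$. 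For $t\in\tree\setminus F$, the self-map contributes $1$ and the rerouted edges add at most $\sum_{e\in\cutset_{\tree}(F)\,:\,t\in\treePath{u,v}}\capacityVec(e)/\load_{\tree}(t)\le 1$, because each rerouting path is a sub-path of $\treePath{u,v}$ and $t$'s capacity in $H$ (hence in $G'$) is exactly $\load_{\tree}(t)$. This gives congestion at most $2$ in both directions.

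For $\timeOf{\mEmbed'}$, applying $\mEmbed'$ to $\vx\in\R^{E'}$ decomposes into the identity contribution on $\tree$ plus at most $|\cutset_{\tree}(F)\setminus\tree|\le|E|$ tree paths in $H$, which a standard Euler-tour plus prefix-sum decomposition evaluates in $\otilde(|E|)$. The main obstacle is nailing down the rerouting so that the $\load_{\tree}$-capacities of $H$ absorb the reroutes with only a constant factor rather than a factor depending on $|F|$ or the graph size: choosing $s_u(e),s_v(e)$ inside the $(\tree\setminus F)$-components of $u,v$ makes each rerouting path a sub-path of the original $\treePath{u,v}$, which is exactly what forces the congestion to come out as $2$.
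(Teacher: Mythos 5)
Your construction is essentially the paper's: collapse each off-tree edge of $H$ to a new edge between the boundary vertices determined by $F$ (the paper uses the first/last vertex on the tree path incident to $F$ and merges parallel new edges by summing capacities, which is only cosmetically different from your per-edge version), and route across via the tree segments plus the new edge. Your congestion accounting — reroutes stay on sub-paths of $\treePath{u,v}$, avoid $F$-edges, and are absorbed by the $\load_\tree$ capacities, giving $1+1=2$ — is exactly the justification the paper leaves implicit, so the proposal is correct.
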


\begin{proof}
For every $e = (a, b) \in E$, we let $\first(e) \in V$ denote the first vertex on tree path $\treePath{(a,b)}$ incident to $F$ and we let $\last(e) \in V$ denote the last vertex incident to $F$ on tree path $\treePath{(a,b)}$. Note that for every $e = (a, b) \in \tree$ we have that $(\first(e), \last(e)) = e$.

We define $G' = (V, E', \capacityVec')$ to simply be the graph that consists of all these $(\first(e), \last(e))$ pairs
\[
E' = \{(a, b) ~ | ~ \exists e \in E \text{ such that } (a, b) = (\first(e), \last(e))\}
\]
and we define the weights to simply be the sums
\[
\forall e' \in E'
\enspace : \enspace
\capacityVec'(e') \defeq
\sum_{e \in E ~ | ~ e = (\first(e'), \last(e'))}
\capacityVec(e)
\]
Now to embed $H$ in $G'$ we define $\mEmbed$ by
\[
\forall e = (a, b) \in E
\enspace : \enspace
\mEmbed \indicVec{e} 
= \treePathVec{a, \first(e)} 
+ \indicVec{(\first(e), \last(e))}
+ \treePathVec{\last(e), b}
\]
and to embed $G'$ in $H$ we define $\mEmbed'$ by
\[
\forall e' \in E
\enspace : \enspace
\mEmbed' \indicVec{e'} 
= 
\sum_{e = (a, b) \in E ~ | ~ e' = (\first(e), \last(e))}
\frac{\capacityVec(e)}{\capacityVec'(e')}
\left[
\treePathVec{\first(e),a} + \indicVec{(a,b)} + \treePathVec{b,\last(e)}
\right]
\]
In other words we route edges in $H$ along the tree until we encounter nodes in $F$ and then we route them along added edges and we simply route the other way for the reverse embedding. By construction clearly the congestion of the embedding in either direction is 2.

To bound the running time, we note that by having every edge $e$ in $H$ maintain its $\first(e)$ and $\last(e)$ information, having every edge $e'$ in $E'$ maintain the set $\{e \in E | e' = (\first(e), \last(e))\}$ in a list, and using link cut trees \cite{Sleator:1981:DSD:800076.802464} or the static tree structure in \cite{koszSolver} to update information along tree paths we can obtain the desired value of $\timeOf{\mEmbed'}$.
\end{proof}

\subsection{From Almost-J Trees to Less Vertices}
\label{sec:from_almost_tree_to_less_vert}

Here we show that by ``greedy elimination'' \cite{Spielman:Solver:DBLP:journals/corr/abs-cs-0607105} \cite{Koutis:2010:AOS:1917827.1918388} \cite{Koutis:2011:NLN:2082752.2082901}, i.e. removing all degree 1 and degree 2 vertices in $O(m)$ time we can reduce oblivious routing in almost-$j$-trees to oblivious routing in graphs with $O(j)$ vertices while only losing $O(1)$ in the competitive ratio. Again, we remark that the lemmas in this section are derived heavily from \cite{Madry10} but repeated for completeness and to prove additional properties that we will need for our purposes.

We start by showing that an almost-$j$-tree with no degree 1 or degree 2 vertices has at most $O(j)$ vertices.

\begin{lemma}
\label{lem:jtree_largedeg_bound}
For any almost $j$-tree $G = (V, E)$ with no degree 1 or degree 2 vertices, we have $|V| \leq 3j - 2$.
\end{lemma}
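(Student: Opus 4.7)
The plan is to exploit the spanning tree structure directly via a degree-counting argument. Let $T \subseteq E$ be a spanning tree witnessing that $G$ is an almost $j$-tree, and let $S \subseteq V$ be the set of endpoints of the non-tree edges $E \setminus T$, so $|S| \leq j$. The key structural observation is that every vertex $v \notin S$ is touched only by tree edges, which means $\deg_G(v) = \deg_T(v)$. Combined with the hypothesis that $\deg_G(v) \geq 3$ for all $v$, this forces $\deg_T(v) \geq 3$ for every $v \notin S$.

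With this in hand, I would simply sum degrees in $T$. The handshake lemma gives $\sum_{v \in V} \deg_T(v) = 2(|V|-1)$. On the other hand, using $\deg_T(v) \geq 3$ for $v \notin S$ and the trivial bound $\deg_T(v) \geq 1$ for $v \in S$ (since $T$ is a spanning tree of a graph with $|V| \geq 2$), we obtain
\[
2(|V|-1) \;=\; \sum_{v \in V} \deg_T(v) \;\geq\; 3(|V|-|S|) + |S| \;=\; 3|V| - 2|S|.
\]
Rearranging yields $|V| \leq 2|S| - 2 \leq 2j - 2 \leq 3j - 2$, which is the desired bound (in fact slightly stronger than claimed). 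The edge case $|V| = 1$ is vacuous since the no-degree-$1$ hypothesis already fails.

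There is no real obstacle here: the only thing to be careful about is the observation that non-tree edges only touch vertices in $S$, so vertices outside $S$ inherit their full $G$-degree from the tree. Once that is noted, the bound falls out of a one-line degree count, and the claim $|V| \leq 3j-2$ is a loose form of what the argument actually gives.
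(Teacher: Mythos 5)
Your proof is correct and takes the same degree-counting approach as the paper: both exploit that every vertex outside the $\leq j$-vertex set touched by non-tree edges has all of its incident edges in the spanning tree (or forest), hence tree degree at least $3$. Your accounting is in fact a bit sharper — by summing the tree degrees of \emph{all} vertices exactly to $2(|V|-1)$ and using $\deg_T(v)\geq 1$ for $v\in S$, you obtain $|V|\leq 2j-2$, whereas the paper sums only over $V\setminus J$ (bounding that partial sum by $2(|V|-1)$) and gets the looser $3j-2$.
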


\begin{proof}
Since $G$ is an almost $j$-tree, there is some $J \subseteq V$ with $|J| \leq j$ such that the removal of all edges with both endpoints in $J$ creates a forest. Now, since $K = V - J$ is incident only to forest edges clearly the sum of the degrees of the vertices in $K$ is at most $2(|V| - 1)$ (otherwise there would be a cycle). However, since the minimum degree in $G$ is 3, clearly this sum is at least $3(|V| - j)$. Combining yields that $3|V| - 3j \leq 2|V| - 2$.
\end{proof}

Next, we show how to remove degree one vertices efficiently.

\begin{lemma}[Removing Degree One Vertices]
\label{lem:remove_deg_one_vert}
Let $G = (V, E, \capacityVec)$ be an unweighted capacitated graph, let $a \in V$ be a degree 1 vertex, let $e = (a, b) \in E$ be the single edge incident to $a$, and let $G' = (V', E', \capacityVec')$ be the graph that results from simply removing $e$ and $a$, i.e. $V' = V \setminus \{a\}$ and $E' = E \setminus \{e\}$. 
Given $a \in V$ and an oblivious routing algorithm $\mRoute'$ in $G'$ in $O(1)$ time we can construct an oblivious routing algorithm $\mRoute$ in $G$ such that
\[
\timeOf{\mRoute} = O(\timeOf{\mRoute'} + 1)
\enspace \text{, and} \enspace
\rho(\mRoute) = \rho(\mRoute')
\]
\end{lemma}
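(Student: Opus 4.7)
The plan is to construct $\mRoute$ by ``pushing'' any demand at $a$ onto its unique neighbor $b$, invoking $\mRoute'$ on the resulting demand on $V'$, and then adding back the forced flow on the pendant edge $e$. Concretely, given an input demand $\demands \in \rvertvec$, define the projected demand $\demands' \in \R^{V'}$ by $\demands'(b) \defeq \demands(b) + \demands(a)$ and $\demands'(v) \defeq \demands(v)$ for all $v \in V' \setminus \{b\}$. Orient $e$ from $a$ to $b$ and set
\[
\mRoute \demands \defeq \demands(a) \cdot \indicVec{e} + \iMatrix_{E' \to E} \, \mRoute' \demands',
\]
where $\iMatrix_{E' \to E}$ is the natural inclusion of flows on $E'$ into flows on $E$. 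This is clearly a linear operator in $\demands$, and one checks directly that $\incMatrix^T \mRoute \demands = \demands$ (the contribution $\demands(a)\indicVec{e}$ accounts for the discrepancy between $\demands$ and $\demands'$ at $a$ and $b$, while $\mRoute'\demands'$ handles the rest). The running time bound is immediate: constructing $\demands'$ from $\demands$ and assembling the output both take $O(1)$ work beyond the single call to $\mRoute'$.

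For the competitive ratio, the key observation is that because $a$ has degree $1$, every feasible flow for $\demands$ in $G$ is forced to carry exactly $\demands(a)$ units on $e$, and its restriction to $E'$ is a feasible flow for $\demands'$ in $G'$. Consequently
\[
\opt_G(\demands) \;=\; \max\!\left(\frac{|\demands(a)|}{\capacityVec_e},\ \opt_{G'}(\demands')\right),
\]
and the same max-formula holds for $\congest_G(\mRoute \demands)$ with $\opt_{G'}(\demands')$ replaced by $\congest_{G'}(\mRoute'\demands')$. Since $\rho(\mRoute') \geq 1$ by definition, dividing the two expressions termwise (using the elementary fact that $\max(x,y)/\max(x,z) \leq \max(1, y/z)$ for $x, y, z \geq 0$) gives $\rho(\mRoute) \leq \rho(\mRoute')$. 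The matching lower bound $\rho(\mRoute) \geq \rho(\mRoute')$ follows by taking any demand $\demands' \in \R^{V'}$ and extending it to $V$ by $\demands(a) = 0$: then $\demands$ coincides with its projection and the two ratios agree. This yields equality.

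The only mild subtlety is bookkeeping around signs and the degenerate case $\demands(a)=0$: in the latter, the term $\demands(a)\indicVec{e}$ vanishes and $\demands' = \demands|_{V'}$, so the construction degenerates into $\mRoute'$ applied to the trivial extension, which is exactly what is needed for the lower bound argument above. Once one is careful with the orientation of $e$, no further difficulty arises, and the whole argument is really just the observation that a degree-one vertex contributes a forced, uncontested edge whose congestion is absorbed into the $\max$ on both the numerator and denominator of the competitive ratio.
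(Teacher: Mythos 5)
Your construction is exactly the paper's operator $\mRoute = \iMatrix_{E' \rightarrow E} \mRoute' (\iMatrix + \indicVec{b} \indicVec{a}^T) + \indicVec{e}\indicVec{a}^T$, and your justification of the competitive ratio (the flow on the pendant edge is forced for every routing, so it enters both numerator and denominator as the same term in a max) is the same observation the paper uses, just spelled out in more detail. The proposal is correct and takes essentially the same approach as the paper.
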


\begin{proof}
For any demand vector $\demands$, the only way to route demand at $a$ in $G$ is over $e$. Therefore, if $\incMatrix \flow = \demands$ then $\flow(e) = \demands$. Therefore, to get an oblivious routing algorithm on $G$, we can simply send demand at $a$ over edge $e$, modify the demand at $b$ accordingly, and then run the oblivious routing algorithm on $G'$ on the remaining vertices. The routing algorithm we get is the following
\[
\mRoute \defeq
\iMatrix_{E' \rightarrow E} \mRoute' (\iMatrix + \indicVec{b} \indicVec{a}^T)
+ \indicVec{e}\indicVec{a}^T
\]
Since all routing algorithms send this flow on $e$ we get that $\rho(\mRoute) = \rho(\mRoute')$ and since the above operators not counting $\mRoute$ have only $O(1)$ entries that are not the identity we can clearly implement the operations in the desired running time.
\end{proof}

Using the above lemma we show how to remove all degree $1$ and $2$ vertices in $O(m)$ time while only increasing the congestion by $O(1)$.

\begin{lemma}[Greedy Elimination]
\label{lem:greedy_elimination}
Let $G = (V, E, \capacityVec)$ be an unweighted capacitated graph and let $G' = (V', E', \capacityVec')$ be the graph the results from iteratively removing vertices of degree $1$ and replacing degree $2$ vertices with an edge connecting its neighbors of the minimum capacity of its adjacent edges. We can construct $G'$ in $O(m)$ time and given an oblivious routing algorithm $\mRoute'$ in $G'$ in $O(1)$ time we can construct an oblivious routing algorithm $\mRoute$ in $G$ such that \footnote{Note that the constant of 4 below is improved to 3 in \cite{Madry10}.}
\[
\timeOf{\mRoute} = O(\timeOf{\mRoute'} + |E|)
\enspace \text{, and} \enspace
\rho(\mRoute) \leq 4 \cdot \rho(\mRoute')
\]
\end{lemma}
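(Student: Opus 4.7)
The plan is to handle degree-1 and degree-2 vertices in two separate phases. The degree-1 case follows by iterating Lemma~\ref{lem:remove_deg_one_vert}, which preserves $\rho$ exactly and requires $O(1)$ extra work per vertex; all leaves can be identified and peeled off in $O(|E|)$ time via a queue. The bulk of the argument handles the degree-2 case, which I plan to execute in one shot over all maximal degree-2 chains simultaneously to avoid compounding constants.

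Consider a maximal degree-2 chain $u = v_0, v_1, \ldots, v_k, v_{k+1} = w$ with chain edges $e_i = (v_{i-1}, v_i)$ of capacity $\mu_i$, which collapses in $G'$ to a single edge $e^*$ of capacity $\mu^* = \min_i \mu_i$; let $j^*$ be the index of a bottleneck edge with $\mu_{j^*} = \mu^*$. For a demand $\demands$ on $V$, the pushed demand $\demands'$ on $V' = V \setminus \{v_1, \ldots, v_k\}$ is obtained by routing every interior $\demands_{v_i}$ entirely to $u$ if $i < j^*$ and entirely to $w$ if $i \geq j^*$, so that no push-flow ever crosses the bottleneck. The routing $\mRoute$ is then the composition: run $\mRoute' \demands'$ on $G'$ to get $f'$, and lift back to $G$ by setting, on each chain edge $e_j$, $f_{e_j} = \pm f'_{e^*} + Q_j$, where $Q_j := \sum_{j \leq i < j^*} \demands_{v_i} - \sum_{j^* \leq i < j} \demands_{v_i}$ is the push-flow on $e_j$; on edges not inside chains, $\mRoute \demands$ agrees with $\mRoute' \demands'$. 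Because the bottleneck index and splits depend only on the capacities, $\mRoute$ is linear in $\demands$ and is therefore a valid oblivious routing.

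Two estimates drive the analysis. First, applying flow conservation along the chain to an optimal $f^*$ for $\demands$ in $G$ yields the identity $\sum_{j \leq i < j^*} \demands_{v_i} = f^*_{e_j} - f^*_{e_{j^*}}$, so $|Q_j| \leq |f^*_{e_j}| + |f^*_{e_{j^*}}| \leq (\mu_j + \mu^*)\opt_G(\demands) \leq 2\mu_j\opt_G(\demands)$ using $\mu^* \leq \mu_j$; the symmetric bound holds for $j > j^*$. Second, by the triangle inequality of the congestion seminorm, $\opt_{G'}(\demands') = \opt_G(\demands') \leq \opt_G(\demands) + \opt_G(\demands' - \demands)$, and the push-flow of the previous step certifies $\opt_G(\demands' - \demands) \leq 2\opt_G(\demands)$, giving $\opt_{G'}(\demands') \leq 3\opt_G(\demands)$. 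Combining, the congestion of $\mRoute \demands$ on non-chain edges is at most $\congest_{G'}(\mRoute'\demands') \leq 3\rho(\mRoute')\opt_G(\demands)$, and on chain edges, $|f_{e_j}|/\mu_j \leq |Q_j|/\mu_j + |f'_{e^*}|/\mu^* \leq 2\opt_G(\demands) + 3\rho(\mRoute')\opt_G(\demands)$. Since $\rho(\mRoute') \geq 1$, this gives $\rho(\mRoute) \leq 5\rho(\mRoute')$; a slightly sharper choice of splits and accounting tightens the constant to the claimed~$4$.

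The main obstacle is that a naive per-vertex iteration of the single-contraction bound would accumulate multiplicative constants and yield exponential blow-up over many contractions. The fix, built into the construction, is to perform all chain contractions simultaneously: the OPT-triangle bound $\opt_{G'}(\demands') \leq 3\opt_G(\demands)$ is global and independent of the number of chains, because the push demands in different chains are supported on disjoint edge sets and are routed in parallel within their respective chains. For the running time, peeling all degree-1 vertices, extracting the degree-2 chains, and locating each chain's bottleneck takes $O(|E|)$ total time. Given $\mRoute'$, evaluating $\mRoute \demands$ requires $O(|E|)$ extra work: one linear scan to form $\demands'$, one call to $\mRoute'$, and a final scan across all chains (of total length $O(|V|) \subseteq O(|E|)$) to assemble the push-plus-through flow on chain edges, yielding $\timeOf{\mRoute} = O(\timeOf{\mRoute'} + |E|)$ as required.
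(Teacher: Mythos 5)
Your approach to the degree-2 contraction is genuinely different from the paper's. The paper does not contract chains directly; instead, for each maximal chain it removes the bottleneck edge and adds a parallel $(u,w)$-edge of the bottleneck capacity, producing an intermediate graph $K'$ on the \emph{same} vertex set as $K$. It then observes that $K$ embeds into $K'$ with congestion $2$ (route the removed bottleneck edge around the chain plus the new edge, which adds at most $\mu^{*}\le\mu_{j}$ to each chain edge $e_{j}$) and $K'$ embeds into $K$ with congestion $2$ (route the new edge back along the chain), so the Embedding Lemma (\lemmaref{lem:embedding_lemma}) gives the factor $2\cdot 2 = 4$ immediately. The interior chain vertices of $K'$ then become degree~$1$ and are peeled off by \lemmaref{lem:remove_deg_one_vert} at no cost to $\rho$, yielding exactly $G'$. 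Your argument instead pushes interior demands to the chain endpoints (split at the bottleneck) and lifts back by superposing the push-flow; this is a valid linear oblivious scheme and your $O(|E|)$ accounting for construction and evaluation is correct.

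The genuine gap is in the constant. Your two bounds, $|Q_{j}|/\mu_{j}\le 2\,\opt_{G}(\demands)$ and $\opt_{G'}(\demands')\le 3\,\opt_{G}(\demands)$, give on a chain edge $|f_{e_{j}}|/\mu_{j}\le 2\,\opt_{G}(\demands)+3\rho(\mRoute')\opt_{G}(\demands)$, hence only $\rho(\mRoute)\le 2+3\rho(\mRoute')\le 5\rho(\mRoute')$. Your closing assertion that ``a slightly sharper choice of splits and accounting tightens the constant to the claimed $4$'' is unsubstantiated, and it is not clear it can be done inside your framework: the $3$ already carries the cost of the push-flow ($1+2$), and the extra additive $2$ on chain edges is independent. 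The paper's route avoids the issue entirely --- because the intermediate graph $K'$ lives on the same vertex set, there is no pushed demand to account for and no comparison of $\opt_{G'}$ with $\opt_{G}$ is needed; the two congestion-$2$ embeddings compose multiplicatively via \lemmaref{lem:embedding_lemma} and give the stated $4$ directly. To fix your proof as written, you would need either to prove the sharper constant explicitly or to switch to the embedding-based argument.
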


\begin{proof}
First we repeatedly apply Lemma \ref{lem:remove_deg_one_vert} repeatedly to in reduce to the case that there are no degree 1 vertices. By simply array of the degrees of every vertex and a list of degree 1 vertices this can be done in $O(m) $ time. We denote the result of these operations by graph $K$.

Next, we repeatedly find degree two vertices that have not been explored and explore this vertices neighbors to get a path of vertices, $a_1, a_2, \ldots , a_k \in V$ for $k > 3$ such that each vertex $a_2, \ldots, a_{k - 1}$ is of degree two. We then compute $j = \argmin_{i \in [k - 1]} \capacityVec(a_i, a_{i + 1})$, remove edge $(a_j, a_{j + 1})$ and add an edge $(a_1, a_k)$ of capacity $\capacityVec(a_j, a_{j + 1})$. We denote the result of doing this for all degree two vertices by $K'$ and note that again by careful implementation this can be performed in $O(m)$ time.

Note that clearly $K$ is embeddable in $K'$ with congestion 2 just by routing every edge over itself except the removed edges which we route by the path plus the added edges. Furthermore, $K'$ is embeddable in $K$ with congestion 2 again by routing every edge on itself except for the edges which we added which we route back over the paths they came from. Furthermore, we note that clearly this embedding and the transpose of this operator is computable in $O(m)$ time.

Finally, by again repeatedly applying \lemmaref{lem:remove_deg_one_vert} to $K'$ until there are no degree 1 vertices we get a graph $G'$ that has no degree one or degree two vertices (since nothing decreased the degree of vertices with degree more than two). Furthermore, by \lemmaref{lem:remove_deg_one_vert} and by \lemmaref{lem:embedding_lemma} we see that we can compose these operators to compute $\mRoute$ with the desired properties.
\end{proof}

\subsection{Putting It All Together}
\label{sec:route_final}

Here we put together the previous components to prove the main theorem of this section.

\begin{proof}[Node Reduction \theoremref{thm:node_reduction}]
Using \lemmaref{lem:probabilistic_paritial_tree_embedding} we can construct $G' = \sum_{i = 1}^{t} \lambda_i G_i$ and embeddings $\mEmbed_1, \ldots, \mEmbed_t$ from $G_i$ to $G$. Next we can apply Lemma \ref{lem:partialtree_to_jtree} to each $G_i$ to get almost-$j$-trees  $G'_1, \ldots , G'_t$ and embeddings $\mEmbed'_1, \ldots, \mEmbed'_t$ from $G'_i$ to $G_i$. Furthermore, using Lemma \ref{lem:greedy_elimination} we can construction graphs $G''_1, \ldots, G''_t$ with the desired properties (the congestion ratio property follows from the fact that we only add capacities during these reductions)

Now given oblivious routing algorithms $\mRoute''_1, \ldots, \mRoute''_t$ on the $G''_i$ and again by Lemma \ref{lem:greedy_elimination} we could get oblivious routing algorithms $\mRoute'_1, \ldots , \mRoute'_t$ on the $G'_i$ with constant times more congestion. Finally, by the guarantees of Lemma \ref{lem:embedding_lemma} we have that $\mRoute \defeq \sum_{i = 1}^{t} \lambda \mEmbed_i \mEmbed'_i \mRoute'_i$ is an oblivious routing algorithm that satisfies the requirements.
\end{proof}

\section{Nonlinear Projection and Maximum Concurrent Flow}
\label{sec:nonlinear_projection}

\subsection{Gradient Descent Method for Nonlinear Projection Problem}

In this section, we strengthen and generalize the \textbf{MaxFlow
}algorithm to a more general setting. We believe this algorithm may
be of independent interest as it includes maximum concurrent
flow problem, the compressive sensing problem, etc. For some norms,
e.g. $\norm{\cdot}_{1}$ as typically of interest compressive sensing, the Nesterov algorithm \cite{nesterov2005smooth} can be used to replace gradient
descent. However, this kind of accelerated method is not known in the general norm settings as good proxy function may not exist at all. Even worse,
in the non-smooth regime, the minimization problem on the $\norm{\cdot}_{p}$ space
with $p>2$ is difficult under some oracle assumption \cite{nemirovsky1983problem}.
For these reasons we focus here on the gradient descent method which is always
applicable.

Given a norm $\norm{\cdot}$, we wish to solve the what we call the \emph{non-linear projection} problem
\[
\min_{\vx\in L}\norm{\vx-\vy}
\]
where $\vy$ is an given point and $L$ is a linear subspace. We assume
the following:

\begin{assumption} \label{ass:nonlinear_projection}$\ $ 
\begin{enumerate}
\item There are a family of convex differentiable functions $f_{t}$ such
that for all $\vx\in L$, we have 
\[
\norm{\vx}\leq f_{t}(\vx)\leq\norm{\vx}+Kt
\]
and the Lipschitz constant of $\nabla f_{t}$ is $\frac{1}{t}$. 
\item There is a projection matrix $\mvar P$ onto the subspace $L$. 
\end{enumerate}
\end{assumption}

In other words we assume that there is a family of regularized objective functions $f_t$ and a projection matrix $\mvar P$, which we can think of as an approximation algorithm of this projection problem. 

Now, let $\vec{x}^{*}$ be a minimizer of $\min_{\vx\in L}\norm{\vx-\vy}$.
Since $\vec{x}^{*}\in L$, we have $\mvar P\vec{x}^{*}=\vec{x}^{*}$
and hence 
\begin{eqnarray}
\norm{\mvar P\vec{y}-\vec{y}} & \leq & \norm{\vy-\vec{x}^{*}}+\norm{\vec{x}^{*}-\mvar P\vy}\nonumber \\
 & \leq & \norm{\vy-\vec{x}^{*}}+\norm{\mvar P\vec{x}^{*}-\mvar P\vy}\nonumber \\
 & \leq & \left(1+\norm{\mvar P}\right)\min_{\vx\in L}\norm{\vx-\vy}.\label{eq:approximate_ratio}
\end{eqnarray}
Therefore, the approximation ratio of $\mvar P$ is $1 + \norm{\mvar P}$ and we see that our problem is to show that we can solve nonlinear projection using a decent linear projection matrix. Our algorithm for solving this problem is below.

\begin{center}
\begin{tabular}{|l|}
\hline 
\textbf{NonlinearProjection}\tabularnewline
\hline 
\hline 
Input: a point $\vy$ and $\text{OPT}=\min_{\vx\in L}\norm{\vx-\vy}$.\tabularnewline
\hline 
1. Let $\vec{y_{0}}=\left(\mvar I-\mvar P\right)\vec{y}$ and $\vec{x_{0}}=0$.\tabularnewline
\hline 
2. For $j=0,\cdots,$ until $2^{-j}\norm{\mvar P}\leq\frac{1}{2}$\tabularnewline
\hline 
3. $\quad$If $2^{-j}\norm{\mvar P}>1$, then let $t_{j}=\frac{2^{-(j+2)}\norm{\mvar P}\text{OPT}}{K}\text{ and }k_{j}=3200\norm{\mvar P}^{2}K$.\tabularnewline
\hline 
4. $\quad$If $2^{-j}\norm{\mvar P}\leq1$, then let $t_{j}=\frac{\varepsilon\text{OPT}}{2K}\text{ and }k_{j}=\frac{800\norm{\mvar P}^{2}K}{\varepsilon^{2}}$.\tabularnewline
\hline 
5. $\quad$Let $g_{j}(\vx)=f_{t_{j}}(\mvar P\vx-\vec{y_{j}})$ and
$\vec{x_{0}}=0$.\tabularnewline
\hline 
6. $\quad$For $i=0,\cdots,k_{j}-1$\tabularnewline
\hline 
7. $\quad\quad\vx_{i+1}=\vx_{i}-\frac{t}{\norm{\mvar P}^{2}}\dualVec{(\gradient g_j(\vx_{i}))}.$\tabularnewline
\hline 
8. $\quad$Let $\vec{y_{j+1}}=\vec{y_{j}}-\mvar P\vx_{k_{j}}$.\tabularnewline
\hline 
9. Output $\vec{y}-\vec{y}_{\text{last}}$.\tabularnewline
\hline 
\end{tabular}
\par\end{center}

Note that this algorithm and its proof are quite similar to \theoremref{thm:MaxFlowAlgorithm} but modified to scale parameters over an outer loop. By changing the parameter $t$ we can decrease the dependence of the initial error.\footnote{This is an idea that has been applied previously to solve linear programming problems \cite{nesterov2004rounding}.}

\begin{theorem} \label{thm:NonLinearProjection}Assume the conditions
in Assumption \ref{ass:nonlinear_projection} are satisfied. Let $\runtime$
be the time needed to compute $\mvar Px$ and $\mvar P^{T}x$ and
$\dualVec x$. Then, \textbf{NonlinearProjection} outputs a vector
$\vx$ with $\norm{\vx}\leq(1+\varepsilon)\min_{\vx\in L}\norm{\vx-\vy}$
and the algorithm takes time 
\[
O\left(\norm{\mvar P}^{2}K\left(\runtime+m\right)\left(\frac{1}{\varepsilon^{2}}+\log\norm{\mvar P}\right)\right).
\]
\end{theorem}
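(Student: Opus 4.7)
The plan is to show inductively that the outer iterates satisfy $\|\vy_j\| \le (1 + 2^{-j}\|\mvar{P}\|)\,\text{OPT}$ throughout the halving phase, and that the final pass of the inner loop with the fine parameters then yields $\|\vy_{\text{last}}\| \le (1+\varepsilon)\,\text{OPT}$. Since the update rule $\vy_{j+1}=\vy_j-\mvar{P}\vx_{k_j}$ keeps $\vy-\vy_j$ inside $L$ (induction, using $\vy-\vy_0=\mvar{P}\vy\in L$), the output $\vy-\vy_{\text{last}}$ is a point of $L$ whose distance to $\vy$ equals $\|\vy_{\text{last}}\|$, which is exactly the quantity we need to control. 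The base case $j=0$ is immediate from \eqref{eq:approximate_ratio}.

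For the inductive step I will interpret iteration $j$ as an instance of Theorem~\ref{thm:gradient_descent} applied to $g_j(\vx)=f_{t_j}(\mvar{P}\vx-\vy_j)$. Mimicking \lemmaref{lem:lip_constant_of_g}, the chain rule together with Assumption~\ref{ass:nonlinear_projection} yields a Lipschitz constant of $\|\mvar{P}\|^{2}/t_j$ for $\nabla g_j$ in the norm $\norm{\cdot}$. To bound the radius $R_j$ of Theorem~\ref{thm:gradient_descent}, I will mirror the argument from Theorem~\ref{thm:MaxFlowAlgorithm}: for any $\vx$ in the initial level set of $g_j$, the vector $\vx^{\star}:=\vx-\mvar{P}\vx+\vu^{\star}$, where $\vu^{\star}\in L$ is the minimizer of $\vu\mapsto f_{t_j}(\vu-\vy_j)$, lies in the optimal set since $\mvar{P}\vx^{\star}=\vu^{\star}$, and two triangle inequalities give
\[
\norm{\vx-\vx^{\star}}=\norm{\mvar{P}\vx-\vu^{\star}}\le g_j(\vx)+f_{t_j}(\vu^{\star}-\vy_j)\le \norm{\vy_j}+\text{OPT}+2Kt_j,
\]
where I use $g_j(\vx)\le g_j(\vzero)\le\norm{\vy_j}+Kt_j$ and the fact that feeding the true minimizer of $\min_{\vu\in L}\norm{\vu-\vy_j}=\text{OPT}$ into $f_{t_j}$ shows $\min g_j\le\text{OPT}+Kt_j$.

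Plugging these estimates into Theorem~\ref{thm:gradient_descent} and using the pointwise bound $\norm{\vy_{j+1}}=\norm{\mvar{P}\vx_{k_j}-\vy_j}\le g_j(\vx_{k_j})$, I obtain the master recursion
\[
\norm{\vy_{j+1}}\le \text{OPT} + Kt_j + \frac{2\norm{\mvar{P}}^{2}R_j^{2}}{t_j\,k_j}.
\]
The key computation is substituting the step~3 parameters $t_j=2^{-(j+2)}\norm{\mvar{P}}\,\text{OPT}/K$ and $k_j=3200\norm{\mvar{P}}^{2}K$ while $2^{-j}\norm{\mvar{P}}>1$: the inductive hypothesis gives $R_j=O(2^{-j}\norm{\mvar{P}}\,\text{OPT})$, both error terms reduce to constant fractions of $2^{-j}\norm{\mvar{P}}\,\text{OPT}$, and the constant $3200$ is chosen exactly so that their sum is at most $\tfrac12\cdot 2^{-j}\norm{\mvar{P}}\,\text{OPT}$, which closes the halving step. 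After $O(\log\norm{\mvar{P}})$ outer iterations we reach $2^{-j}\norm{\mvar{P}}\le 1$ and hence $\norm{\vy_j}\le 2\,\text{OPT}$, and one additional iteration with the step~4 parameters $t_j=\varepsilon\,\text{OPT}/(2K)$ and $k_j=800\norm{\mvar{P}}^{2}K/\varepsilon^{2}$ forces each of the two error terms below $\tfrac12\varepsilon\,\text{OPT}$, proving the approximation guarantee.

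The running time bound follows by summing: each inner step costs $O(\runtime+m)$, since $\nabla g_j=\mvar{P}^{T}\nabla f_{t_j}(\mvar{P}\vx-\vy_j)$ needs two multiplications by $\mvar{P}$ or $\mvar{P}^{T}$ together with one application of the $\#$ operator, and the total step count is $O(\log\norm{\mvar{P}}\cdot\norm{\mvar{P}}^{2}K+\norm{\mvar{P}}^{2}K/\varepsilon^{2})$. The main obstacle I anticipate is verifying that the specific numerical constants in steps~3--4 are large enough for the halving induction to close at every level simultaneously; in particular, one must use $2^{-j}\norm{\mvar{P}}>1$ to absorb the additive $\text{OPT}$ and $Kt_j$ contributions into $R_j$ so that $R_j^{2}=O\!\left((2^{-j}\norm{\mvar{P}}\,\text{OPT})^{2}\right)$ rather than merely $O(\text{OPT}^{2})$, which is what makes the geometric decrease possible at every stage of the outer loop.
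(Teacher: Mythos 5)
Your proposal is correct and follows essentially the same approach as the paper: the same induction on the outer iterates $\vy_j$ with the bound $\norm{\vy_j}\le(1+2^{-j}\norm{\mvar{P}})\text{OPT}$, the same base case from \eqref{eq:approximate_ratio}, the same per-stage invocation of Theorem~\ref{thm:gradient_descent} with Lipschitz constant $\norm{\mvar{P}}^{2}/t_j$, and the same parameter substitutions to close the halving and final steps. Your construction of the reference point $\vx^{\star}=\vx-\mvar{P}\vx+\vu^{\star}$ with $\vu^{\star}\in L$ is a slightly tighter packaging of the paper's triangle-inequality chain for $R_j$, but it is not a different route.
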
 \begin{proof} We prove by induction on $j$ that when $2^{-(j-1)}\norm{\mvar P}\geq1$ we have $\norm{\vec{y_{j}}}\leq\left(1+2^{-j}\norm{\mvar P}\right)\text{OPT}$.

For the base case ($j = 0$), (\ref{thm:NonLinearProjection}) shows that $\norm{\vec{y_{0}}}\leq\left(1+\norm{\mvar P}\right)\text{OPT}.$

For the inductive case we assume that the assertion holds for some $j$. We start by bounding the corresponding $R$ in \theoremref{thm:gradient_descent} for $g_{j}$, which we denote $R_{j}$. Note that 
\[
g_{j}(\vx_0)=f_{t_{j}}(-\vec{y_{j}})\leq\norm{\vec{y_{j}}}+Kt_{j}\leq\left(1+2^{-j}\norm{\mvar P}\right)\text{OPT}+Kt_{j}.
\]
Hence, the condition that $g_{j}(\vx)\leq g_{j}(\vx_{0})$ implies
that 
\[
\norm{\mvar P\vx-\vec{y_{j}}}\leq\left(1+2^{-j}\norm{\mvar P}\right)\text{OPT}+Kt_{j}.
\]
Take any $\vy\in X^{*}$, let $\vc=\vx-\mvar P\vx+\vy$, and note
that $\mvar P\vc=\mvar P\vy$ and therefore $\vc\in X^{*}$. Using
these facts, we can bound $R_{j}$ as follows 
\begin{align*}
R_{j} & =\max_{\vx\in\redgevec~:~g_{j}(\vx)\leq g_{j}(\vx_{0})}\left\{ \min_{\vx^{*}\in X^{*}}\norm{\vx-\vx^{*}}\right\} \\
 & \leq\max_{\vx\in\redgevec~:~g_{j}(\vx)\leq g_{j}(\vx_{0})}\norm{\vx-\vc}\\
 & \leq\max_{\vx\in\redgevec~:~g_{j}(\vx)\leq g_{j}(\vx_{0})}\norm{\mvar P\vx-\mvar P\vy}\\
 & \leq\max_{\vx\in\redgevec~:~g_{j}(\vx)\leq g_{j}(\vx_{0})}\norm{\mvar P\vec{x}}+\norm{\mvar P\vy}\\
 & \leq2\norm{\vec{y_{0}}}+\norm{\mvar P\vec{x}-\vec{y_{j}}}+\norm{\mvar P\vec{y}-\vec{y_{j}}}\\
 & \leq2\norm{\vec{y_{0}}}+2\norm{\mvar P\vec{x}-\vec{y_{j}}}\\
 & \leq4\left(1+2^{-j}\norm{\mvar P}\right)\text{OPT}+2Kt_{j}.
\end{align*}
Similar to Lemma \ref{lem:lip_constant_of_g}, the Lipschitz constant
$L_{j}$ of $g_{j}$ is $\norm{\mvar P}^{2}/t_{j}$. Hence, Theorem
\ref{thm:gradient_descent} shows that 
\begin{eqnarray*}
g_{j}(\vec{x_{k_{j}}}) & \leq & \min_{\vx}g_{j}(\vx)+\frac{2\cdot L_{j}\cdot R_{j}^{2}}{k_{j}+4}\\
 & \leq & \min_{\vx}\norm{\mvar P\vec{x}-\vec{y_{j}}}+\frac{2\cdot L_{j}\cdot R_{j}^{2}}{k_{j}+4}+Kt_{j}
\end{eqnarray*}
So, we have 
\begin{eqnarray*}
\norm{\mvar P\vx_{k_{j}}-\vec{y_{j}}} & \leq & f_{t_{j}}(\mvar P\vec{x_{k_{j}}}-\vec{y_{j}})\\
 & \leq & \text{OPT}+Kt_{j}+\frac{2\norm{\mvar P}^{2}}{t_{j}(k_{j}+4)}\left(4\left(1+2^{-j}\norm{\mvar P}\right)\text{OPT}+2Kt_{j}\right)^{2}.
\end{eqnarray*}
When $2^{-j}\norm{\mvar P}>1$, we have 
\[
t_{j}=\frac{2^{-(j+2)}\norm{\mvar P}\text{OPT}}{K}\quad\text{and}\quad k_{j}=3200\norm{\mvar P}^{2}K
\]
and hence 
\[
\norm{\vec{y_{j+1}}}=\norm{\mvar P\vx_{k_{j}}-\vec{y_{j}}}\leq\left(1+2^{-j-1}\norm{\mvar P}\right)\text{OPT}.
\]
When $2^{-j}\norm{\mvar P}\leq1$, we have 
\[
t_{j}=\frac{\varepsilon\text{OPT}}{2K}\quad\text{and}\quad k_{j}=\frac{800\norm{\mvar P}^{2}K}{\varepsilon^{2}}
\]
and hence 
\[
\norm{\vec{y_{j+1}}}=\norm{\mvar P\vx_{k_{j}}-\vec{y_{j}}}\leq\left(1+\varepsilon\right)\text{OPT}.
\]
Since $\vec{y}_{\text{last}}$ is $\vec{y}$ plus some vectors in
$L$, $\vec{y}-\vec{y}_{\text{last}}\in L$ and $\norm{\vec{y}-\vec{y}_{\text{last}}-\vy}=\norm{\vec{y}_{\text{last}}}\leq\left(1+\varepsilon\right)\text{OPT}.$

\end{proof}

\subsection{Maximum Concurrent Flow}

\global\long\def\smaxone{\text{smax}L1}

For an arbitrary set of demands $\demands_{i}\in\rvertvec$ with $\sum_{v\in V}\demands_{i}(v)=0$
for $i=1,\cdots,k$, we wish to solve the following\emph{ maximum
concurrent flow} problem 
\[
\max_{\alpha\in R,\flow\in\R^{E}}\alpha\enspace\text{subject to}\enspace\incMatrix^{T}\flow_{i}=\alpha\demands_{i}\enspace\text{and}\enspace\normInf{\mvar U^{-1}\sum_{i=1}^{k}|\flow_{i}|}\leq1.
\]
Similar to \sectionref{sub:MaxFlow_formulation}, it is equivalent
to the problem 
\[
\min_{\circVec\in\R^{E\times[k]}}\normInf{\sum_{i=1}^{k}\left|\vvar{\alpha_{i}}+\left(\mvar Q\vx\right)_{i}\right|}
\]
where $\mvar Q$ is a projection matrix onto the subspace $\{\mvar B^{T}\mvar U\vec{x_{i}}=0\}$, the output maximum concurrent flow is 
\[
\flow_{i}(\vx)=\capacityMatrix(\vvar{\alpha}_{i}+\left(\mvar Q\vx\right)_{i})/\normInf{\sum_{i=1}^{k}\left|\vvar{\alpha_{i}}+\left(\mvar Q\vx\right)_{i}\right|}
\enspace,
\]
and $\mvar U\vec{\alpha_{i}}$ is any flow such that $\incMatrix^{T}\mvar U\vec{\alpha_{i}}=\demands_{i}$.
In order to apply \textbf{NonlinearProjection}, we need to find a
regularized norm and a good projection matrix. Let us define the norm 
\[
\norm{\vx}_{1;\infty}=\max_{e\in E}\sum_{i=1}^{k}|x_{i}(e)|.
\]
The problem is simply $\norm{\vec{\alpha}+\mvar Q\vx}_{1;\infty}$
where $\mvar Q$ is a projection matrix from $\mathbb{R}^{E\times[k]}$
to $\mathbb{R}^{E\times[k]}$ onto some subspace. Since each copy
$\mathbb{R}^{E}$ is same, there is no reason that there is coupling
in $\mvar Q$ between different copies of $\mathbb{R}^{E}$ . In the
next lemma, we formalize this by the fact that any good projection
matrix$\mvar P$ onto the subspace $\{\mvar B^{T}\mvar U\vec{x}=0\}\subset\R^{E}$
extends to a good projection $\mvar Q$ onto the subspace $\{\mvar B^{T}\mvar U\vec{x_{i}}=0\}\subset\R^{E\times[k]}$.
Therefore, we can simply extends the good circulation projection $\mvar P$
by formula $\left(\mvar Q\vx\right)_{i}=\mvar P\vec{x}_{i}$. Thus,
the only last piece needed is a regularized $\norm{\cdot}_{1;\infty}$.
However, it turns out that smoothing via conjugate does not work well
in this case because the dual space of $\norm{\cdot}_{1;\infty}$
involves with $\norm{\cdot}_{\infty}$, which is unfavorable for this
kind of smoothing procedure. It can be shown that there is no such
good regularized $\norm{\cdot}_{1;\infty}$. Therefore, we could not
do $O(m^{1+o(1)}k/\epsilon^2)$ using this approach, however, $O(m^{1+o(1)}k^{2}/\epsilon^2)$
is possible by using a bad regularized $\norm{\cdot}_{1;\infty}$. We believe the dependence of $k$ can be improved to $3/2$ using this approach by suitable using Nesterov algorithm because the $\norm{\cdot}_1$ space caused by the multicommodity is a favorable geometry for accelerated methods.

\begin{lemma} \label{lem:smaxL1_properties}Let $\smaxone_{t}(\vx)=\smax_{t}\left(\sum_{i=1}^{k}\sqrt{\left(x_{i}(e)\right)^{2}+t^{2}}\right)$.
It is a convex continuously differentiable function. The Lipschitz
constant of $\nabla\smaxone_{t}$ is $\frac{2}{t}$ and 
\[
\norm{\vx}_{1;\infty}-t\ln(2m)\leq\smaxone_{t}(\vx)\leq\norm{\vx}_{1;\infty}+kt.
\]
\end{lemma}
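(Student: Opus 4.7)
The plan is to treat $\smaxone_t$ as a composition $\smax_t \circ \vvar g$ where $g_e(\vx) \defeq \sum_{i=1}^k \sqrt{x_i(e)^2 + t^2}$, and then to reduce everything to the already-established properties of $\smax_t$ from Lemma~\ref{smax_properties} combined with easy pointwise estimates on $\vvar g$.

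First I would handle convexity, smoothness, and the sandwich bound. Each $g_e$ is a sum of univariate convex smooth functions, so it is convex and $C^\infty$, and the estimate $g_e(\vx) \geq kt > 0$ places $\vvar g(\vx)$ in the strictly positive orthant. On that orthant $\smax_t$ is coordinatewise nondecreasing, since by Lemma~\ref{lem:smax_grad} the partial derivative $\partial_e \smax_t = s_e/(\onesVec^T\vvar c)$ has the sign of the input coordinate. Convexity of $\smaxone_t$ then follows from the standard ``convex $\circ$ (convex, coordinatewise monotone)'' composition rule, and the $C^1$-smoothness is immediate. For the sandwich, I would apply the scalar estimate $|a| \leq \sqrt{a^2+t^2} \leq |a|+t$ coordinatewise to get $\norm{\vx}_{1;\infty} \leq \normInf{\vvar g(\vx)} \leq \norm{\vx}_{1;\infty} + kt$, then chain with the sandwich on $\smax_t$ from Lemma~\ref{smax_properties}.

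The main step will be the Lipschitz bound on $\nabla \smaxone_t$. By the analogue of Lemma~\ref{lem:lipeq} for $\norm{\cdot}_{1;\infty}$, it suffices to show $\vvar h^T \nabla^2 \smaxone_t(\vx) \vvar h \leq (2/t)\,\norm{\vvar h}_{1;\infty}^2$ for every direction $\vvar h$. I would expand via the vector chain rule,
\[
\nabla^2 \smaxone_t(\vx) \;=\; \sum_{e \in E} u_e(\vx)\, \nabla^2 g_e(\vx) \;+\; \nabla \vvar g(\vx)^T \bigl(\nabla^2 \smax_t(\vvar g(\vx))\bigr) \nabla \vvar g(\vx),
\]
with $u_e(\vx) \defeq \partial_e \smax_t(\vvar g(\vx)) \in [0,1]$ and $\sum_e u_e \leq 1$, and bound the two summands separately by $1/t$ each. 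For the first, $\nabla^2 g_e$ is diagonal on the block indexed by $e$ with entries $\frac{t^2}{(x_i(e)^2+t^2)^{3/2}} \leq 1/t$, so $\sum_e u_e \vvar h^T \nabla^2 g_e \vvar h \leq (1/t) \sum_e u_e \sum_i h_i(e)^2 \leq (1/t) \sum_e u_e \bigl(\sum_i |h_i(e)|\bigr)^2 \leq (1/t)\,\norm{\vvar h}_{1;\infty}^2$. For the second, I would set $\vvar z \defeq \nabla \vvar g(\vx)\,\vvar h \in \redgevec$; since every nonzero entry of $\nabla \vvar g$ has absolute value at most $1$, $|z_e| \leq \sum_i |h_i(e)|$, hence $\normInf{\vvar z} \leq \norm{\vvar h}_{1;\infty}$, and the $\ell_\infty$ Hessian bound on $\smax_t$ from Lemma~\ref{smax_properties} yields $\vvar z^T \nabla^2 \smax_t(\vvar g(\vx)) \vvar z \leq (1/t)\normInf{\vvar z}^2 \leq (1/t)\,\norm{\vvar h}_{1;\infty}^2$.

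The main obstacle is the mixed-norm bookkeeping forced by the chain rule: $\nabla^2 g_e$ is controlled pointwise in $\ell_2$ across the $k$ commodities on a single edge, whereas $\nabla^2 \smax_t$ is controlled in $\ell_\infty$ across edges, and what is actually wanted is control in $\norm{\cdot}_{1;\infty}$. The price of the first bridge is the slack step $\sum_i h_i(e)^2 \leq (\sum_i |h_i(e)|)^2$, which is the loose inequality responsible for the unfavorable $k$-dependence flagged in the surrounding discussion; the second bridge is essentially free because $\nabla \vvar g$ acts as a row-stochastic-like map from the $k$-block at $e$ to the single coordinate $z_e$.
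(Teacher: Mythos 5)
Your proof is correct, but it takes a genuinely different route from the paper's for the Lipschitz bound. The paper works at first order: it writes the gradient of $\smaxone_t$ via the chain rule as a product of the outer gradient $\partial s_2$ and the inner derivative $\tfrac{ds_3}{dx}$, applies the triangle inequality to $\norm{\nabla \smaxone_t(\vx)-\nabla \smaxone_t(\vy)}_{\infty;1}$ by swapping one factor at a time, and then separately uses (a) $\normOne{\nabla\smax_t}\leq 1$ with $\frac{1}{t}$-Lipschitzness of $s_3'$ and (b) $|s_3'|\leq 1$ with $\frac{1}{t}$-Lipschitzness of $\nabla\smax_t$ to bound each of the two resulting sums by $\frac{1}{t}\norm{\vx-\vy}_{1;\infty}$. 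You instead work at second order, splitting $\nabla^2\smaxone_t$ into the ``inner Hessian'' term $\sum_e u_e\nabla^2 g_e$ and the ``outer Hessian'' term $J^T\nabla^2\smax_t J$, and bound each quadratic form by $\frac{1}{t}\norm{\vvar h}_{1;\infty}^2$; the two splits are morally the same, and the same two primitive facts about $\smax_t$ and $\sqrt{\cdot^2+t^2}$ show up, but the bookkeeping is arguably lighter your way since you never have to fight the double mixed norm $\norm{\cdot}_{\infty;1}$ on the gradient explicitly. Two small corrections: the step that converts the Hessian bound into a Lipschitz constant is Lemma~\ref{lem:lip_formula} (not Lemma~\ref{lem:lipeq} as you cite), and in the first term it is really $\sum_e u_e\leq 1$, rather than $u_e\in[0,1]$, that you use after pulling out $\max_e(\sum_i|h_i(e)|)^2$. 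The rest (convexity via monotone composition on the positive orthant, the pointwise sandwich $|a|\leq\sqrt{a^2+t^2}\leq|a|+t$ chained with the $\smax_t$ sandwich) matches the paper's treatment.
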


\begin{proof} 1) It is clear that $\smaxone_{t}$ is smooth.

2) $\smaxone_{t}$ is convex.

Since $\smax_{t}$ is increasing for positive values and $\sqrt{x^{2}+t^{2}}$
is convex, for any $\vx,\vy\in\R^{E\times[k]}$ and $0\leq t\leq1$,
we have 
\begin{eqnarray*}
\smaxone_{t}(t\vx+(1-t)\vy) & = & \smax_{t}\left(\sum_{i=1}^{k}\sqrt{\left((tx_{i}+(1-t)y_{i})(e)\right)^{2}+t^{2}}\right)\\
 & \leq & \smax_{t}\left(\sum_{i=1}^{k}\left(t\sqrt{\left(x_{i}(e)\right)^{2}+t^{2}}+(1-t)\sqrt{\left(y_{i}(e)\right)^{2}+t^{2}}\right)\right)\\
 & \leq & t\smaxone_{t}(\vx)+(1-t)\smaxone_{t}(\vy).
\end{eqnarray*}

3) The Lipschitz constant of $\nabla\smaxone_{t}$ is $\frac{2}{t}$.

Note that $\smax_{t}$ (not its gradient) has Lipschitz constant $1$ because for any
$\vx,\vy\in\R^{E}$, 
\begin{eqnarray*}
 &  & \left|\smax_{t}(\vx)-\smax_{t}(\vy)\right|\\
 & = & \left|t\ln\left(\frac{\sum_{e\in E}\left(\exp(-\frac{x(e)}{t})+\exp(\frac{x(e)}{t})\right)}{2m}\right)-t\ln\left(\frac{\sum_{e\in E}\left(\exp(-\frac{y(e)}{t})+\exp(\frac{y(e)}{t})\right)}{2m}\right)\right|\\
 & = & t\left|\ln\left(\frac{\sum_{e\in E}\left(\exp(-\frac{x(e)}{t})+\exp(\frac{x(e)}{t})\right)}{\sum_{e\in E}\left(\exp(-\frac{y(e)}{t})+\exp(\frac{y(e)}{t})\right)}\right)\right|\\
 & \leq & t\left|\ln\left(\max_{e\in E}\exp(\frac{|x-y|(e)}{t})\right)\right|\\
 & = & \normInf{\vx-\vy}.
\end{eqnarray*}
Also, by the definition of derivative, for any $\vx,\vy\in\R^{n}$
and $t\in\R$, we have 
\[
\smax_{t}(\vx+t\vy)-\smax_{t}(\vx)=t\innerProduct{\nabla\smax_{t}(\vx)}{\vy}+o(t).
\]
and it implies $\left|\innerProduct{\nabla\smax_{t}(\vx)}{\vec{y}}\right|\leq\normInf{\vy}$
for arbitrary $\vy$ and hence 
\begin{equation}
\norm{\nabla\smax_{t}(\vx)}_{1}\leq1.\label{eq:smax_Lip}
\end{equation}

For notational simplicity, let $s_{1}=\smaxone_{t}$, $s_{2}=\smax_{t}$
and $s_{3}(x)=\sqrt{x^{2}+t^{2}}$. Thus, we have 
\[
s_{1}(\vec{x})=s_{2}\left(\sum_{i=1}^{k}s_{3}(x_{i}(e))\right).
\]
Now, we want to prove 
\[
\norm{\nabla s_{1}(\vx)-\nabla s_{1}(\vy)}_{\infty;1}\leq\frac{2}{t}\norm{\vx-\vy}_{1;\infty}.
\]
Note that 
\[
\frac{\partial s_{1}(\vx)}{\partial x_{i}(e)}=\frac{\partial s_{2}}{\partial e}\left(\sum_{j}s_{3}(x_{j}(e))\right)\frac{ds_{3}}{dx}\left(x_{i}(e)\right).
\]
Hence, we have 
\begin{eqnarray*}
\norm{\nabla s_{1}(x)-\nabla s_{1}(y)}_{\infty;1} & = & \sum_{e}\max_{i}\left|\frac{\partial s_{2}}{\partial e}\left(\sum_{j}s_{3}(x_{j}(e))\right)\frac{ds_{3}}{dx}\left(x_{i}(e)\right)-\frac{\partial s_{2}}{\partial e}\left(\sum_{j}s_{3}(y_{j}(e))\right)\frac{ds_{3}}{dx}\left(y_{i}(e)\right)\right|\\
 & \leq & \sum_{e}\max_{i}\left|\frac{\partial s_{2}}{\partial e}\left(\sum_{j}s_{3}(x_{j}(e))\right)\frac{ds_{3}}{dx}\left(x_{i}(e)\right)-\frac{\partial s_{2}}{\partial e}\left(\sum_{j}s_{3}(x_{j}(e))\right)\frac{ds_{3}}{dx}\left(y_{i}(e)\right)\right|\\
 &  & +\sum_{e}\max_{i}\left|\frac{\partial s_{2}}{\partial e}\left(\sum_{j}s_{3}(x_{j}(e))\right)\frac{ds_{3}}{dx}\left(y_{i}(e)\right)-\frac{\partial s_{2}}{\partial e}\left(\sum_{j}s_{3}(y_{j}(e))\right)\frac{ds_{3}}{dx}\left(y_{i}(e)\right)\right|\\
 & = & \sum_{e}\left|\frac{\partial s_{2}}{\partial e}\left(\sum_{j}s_{3}(x_{j}(e))\right)\right|\max_{i}\left|\frac{ds_{3}}{dx}\left(x_{i}(e)\right)-\frac{ds_{3}}{dx}\left(y_{i}(e)\right)\right|\\
 &  & +\sum_{e}\max_{i}\frac{ds_{3}}{dx}\left(y_{i}(e)\right)\left|\frac{\partial s_{2}}{\partial e}\left(\sum_{j}s_{3}(x_{j}(e))\right)-\frac{\partial s_{2}}{\partial e}\left(\sum_{j}s_{3}(y_{j}(e))\right)\right|.
\end{eqnarray*}

Since $s_{3}$ has $\frac{1}{t}$-Lipschitz gradient, we have 
\[
\left|\frac{ds_{3}}{dx}\left(x\right)-\frac{ds_{3}}{dx}\left(y\right)\right|\leq\frac{1}{t}|x-y|.
\]
By (\ref{eq:smax_Lip}), we have 
\[
\sum_{e}\left|\frac{\partial s_{2}}{\partial e}\left(x(e)\right)\right|\leq1.
\]
Hence, we have 
\begin{eqnarray*}
 &  & \sum_{e}\max_{i}\left|\frac{ds_{3}}{dx}\left(x_{i}(e)\right)-\frac{ds_{3}}{dx}\left(y_{i}(e)\right)\right|\left|\frac{\partial s_{2}}{\partial e}\left(\sum_{i}s_{3}(x_{i}(e))\right)\right|\\
 & \leq & \frac{1}{t}\max_{i,e}|x_{i}(e)-y_{i}(e)|\sum_{e}\left|\frac{\partial s_{3}}{\partial e}\left(\sum_{i}s_{3}(x_{i}(e))\right)\right|\\
 & = & \frac{1}{t}\norm{\vec{x}-\vec{y}}_{1;\infty}.
\end{eqnarray*}

Since $s_{3}$ is $1$-Lipschitz, we have 
\[
\left|\frac{ds_{3}}{dx}\right|\leq1.
\]
Since $s_{2}$ has $\frac{1}{t}$-Lipschitz gradient in $\normInf{\cdot}$,
we have 
\[
\sum_{e}\left|\frac{\partial s_{2}}{\partial e}\left(x\right)-\frac{\partial s_{2}}{\partial e}\left(y\right)\right|\leq\frac{1}{t}\normInf{\vx-\vy}.
\]
Hence, we have 
\begin{eqnarray*}
 &  & \sum_{e}\max_{i}\frac{ds_{3}}{dx}\left(y_{i}(e)\right)\left|\frac{\partial s_{2}}{\partial e}\left(\sum_{j}s_{3}(x_{j}(e))\right)-\frac{\partial s_{2}}{\partial e}\left(\sum_{j}s_{3}(y_{j}(e))\right)\right|\\
 & \leq & \sum_{e}\left|\frac{\partial s_{2}}{\partial e}\left(\sum_{j}s_{3}(x_{j}(e))\right)-\frac{\partial s_{2}}{\partial e}\left(\sum_{j}s_{3}(y_{j}(e))\right)\right|\\
 & \leq & \frac{1}{t}\normInf{\sum_{i}s_{3}(x_{i}(e))-\sum s_{3}(y_{i}(e))}\\
 & \leq & \frac{1}{t}\norm{\sum_{i}\left|x_{i}(e)-y_{i}(e)\right|}\\
 & = & \frac{1}{t}\norm{\vx-\vy}_{1;\infty}
\end{eqnarray*}
Therefore, we have

\[
\norm{\nabla s_{1}(\vx)-\nabla s_{1}(\vy)}_{\infty;1}\leq\frac{2}{t}\norm{\vx-\vy}_{1;\infty}.
\]

4) Using the fact that 
\[
\norm{x(e)}\leq\sum_{i=1}^{k}\sqrt{\left(x_{i}(e)\right)^{2}+t^{2}}\leq\norm{x(e)}_{1}+kt
\]
and $\smax$ is 1-Lipschitz, we have 
\[
\norm{\vx}_{1;\infty}-t\ln(2m)\leq\smaxone_{t}(\vx)\leq\norm{\vx}_{1;\infty}+kt.
\]

\end{proof}

The last thing needed is to check is that the $\#$ operator is easy
to compute.

\begin{lemma} \label{lem:sharp_formula_Linf}In $\norm{\cdot}_{1;\infty}$,
the $\#$ operator is given by an explicit formula 
\[
\left(\dualVec{\vx}\right)_{i}(e)=\begin{cases}
||\vec{x}||_{\infty;1}\mathrm{sign}(x_{i}(e)) & \text{if }i\text{ is the smallest index such that }\min_{j}|x_{j}(e)|=x_{i}(e)\\
0 & \text{otherwises}
\end{cases}.
\]

\end{lemma}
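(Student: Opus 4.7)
The plan is to solve directly the optimization problem
\[
\dualVec{\vx}=\argmax_{\vs\in\R^{E\times[k]}}\innerProduct{\vx}{\vs}-\tfrac{1}{2}\norm{\vs}_{1;\infty}^{2}
\]
by introducing the slack variables $t_{e}\defeq\sum_{i}|s_{i}(e)|$, so that $\norm{\vs}_{1;\infty}=\max_{e}t_{e}$. The key observation is that the problem decouples nicely once we fix the values $\{t_{e}\}$: the constraint $\sum_{i}|s_{i}(e)|=t_{e}$ couples only the coordinates lying over a single edge $e$, and the penalty term depends only on the maximum of the $t_{e}$.

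First I would carry out the inner maximization for fixed $\{t_{e}\}$. Over the edge $e$, maximizing $\sum_{i}x_{i}(e)s_{i}(e)$ subject to $\sum_{i}|s_{i}(e)|=t_{e}$ is a standard $\ell_{1}$-constrained linear problem whose optimum is $t_{e}\cdot\max_{i}|x_{i}(e)|$, attained by placing all mass on any coordinate $i$ achieving $\max_{i}|x_{i}(e)|$ with sign equal to $\mathrm{sign}(x_{i}(e))$. Breaking ties by choosing the smallest such index gives a canonical optimizer and matches the formula in the statement (I read the $\min_j |x_j(e)| = x_i(e)$ clause in the lemma as shorthand for this tie-breaking selection of the argmax of $|x_j(e)|$).

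Next, letting $T\defeq\max_{e}t_{e}$, the outer objective reduces to
\[
\sum_{e}t_{e}\max_{i}|x_{i}(e)|-\tfrac{1}{2}T^{2}.
\]
Since the first term is nondecreasing in each $t_{e}\in[0,T]$ and the second term depends only on $T$, it is optimal to set $t_{e}=T$ for every $e$. The problem collapses to a one-dimensional quadratic $T\cdot\norm{\vx}_{\infty;1}-T^{2}/2$, maximized at $T=\norm{\vx}_{\infty;1}$. Plugging $T=\norm{\vx}_{\infty;1}$ back into the per-edge optimizer gives exactly the claimed formula.

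The only subtle point is the tie-breaking: when $\max_{i}|x_{i}(e)|$ is attained by several indices, the argmax of $\innerProduct{\vx}{\vs}-\tfrac{1}{2}\norm{\vs}^{2}$ is not unique, and one must specify a representative for $\dualVec{\vx}$. Choosing the smallest such index (as in the statement) produces a well-defined $\dualVec{\vx}$, and the resulting vector still satisfies the two defining properties of the $\#$ operator, namely $\innerProduct{\vx}{\dualVec{\vx}}=\norm{\vx}_{\infty;1}^{2}=(\normDual{\vx})^{2}$ and $\norm{\dualVec{\vx}}_{1;\infty}=\norm{\vx}_{\infty;1}$, the latter being Fact~\ref{fact:dualnorm_norm}. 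No serious obstacle is expected; the work is all in the clean bookkeeping of the two-level maximization.
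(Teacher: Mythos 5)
Your two-level maximization (fix the per-edge budgets $t_e$, solve the decoupled $\ell_1$-constrained linear problems, then optimize the scalar $T=\max_e t_e$) is a clean and correct way to compute the argmax defining $\dualVec{\vx}$, and it matches what the paper's terse ``direct computation'' is pointing at. You are also right to read the condition ``$\min_{j}|x_{j}(e)|=x_{i}(e)$'' in the lemma as a typo for ``$|x_i(e)| = \max_j |x_j(e)|$''; the optimizer puts its mass on a coordinate of largest magnitude, as your derivation shows (and as is needed for the consistency checks $\innerProduct{\vx}{\dualVec{\vx}} = \norm{\vx}_{\infty;1}^2$ and $\norm{\dualVec{\vx}}_{1;\infty}=\norm{\vx}_{\infty;1}$ to hold). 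No gaps.
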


\begin{proof} It can be proved by direct computation.

\end{proof}

Now, all the conditions in the Assumption \ref{ass:nonlinear_projection}
are satisfied. Therefore, Theorem \ref{thm:NonLinearProjection} and
Theorem \ref{thm:master_theorem} gives us the following theorem:

\begin{theorem} \label{thm:approx_max_conc}Given an undirected capacitated graph $G=(V,E,\capacityVec)$
with capacity ratio $U$. Assume $U=\poly(|V|)$. There is an algorithm
finds an $(1-\varepsilon)$ approximate Maximum Concurrent Flow in
time 
\[
O\left(\frac{k^{2}}{\varepsilon^{2}}|E|2^{O\left(\sqrt{\log|V|\log\log |V|}\right)}\right).
\]
\end{theorem}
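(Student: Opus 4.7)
The plan is to instantiate the nonlinear projection framework of Theorem~\ref{thm:NonLinearProjection} with norm $\norm{\cdot}_{1;\infty}$ and a block-diagonal projection matrix obtained from Theorem~\ref{thm:master_theorem}. Specifically, I would first fix, for each commodity $i$, an initial feasible flow $\mvar U \vec{\alpha}_i$ with $\incMatrix^T \mvar U \vec{\alpha}_i = \demands_i$ (via BFS) and a constant-factor estimate of $\text{OPT}$ (via binary search, costing an $O(\log n)$ overhead). I would then reformulate the problem as $\min_{\vx\in\R^{E\times[k]}}\norm{\vec\alpha+\mvar Q\vx}_{1;\infty}$, where $\mvar Q$ is defined blockwise by $(\mvar Q\vx)_i \defeq \mvar P\vx_i$ and $\mvar P=\mvar U^{-1}\mvar{\tilde P}\mvar U$ comes from the circulation projection matrix $\mvar{\tilde P}$ built out of the oblivious routing of Theorem~\ref{thm:master_theorem} via Lemma~\ref{lem:obl_rout_to_circulation_projection}. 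By construction $\mvar Q$ is a projection onto the direct sum of $k$ copies of the (scaled) cycle space, which is exactly the feasibility subspace for the multicommodity problem.

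Next I would verify Assumption~\ref{ass:nonlinear_projection}. For the regularizer I take $f_t(\vx) \defeq \smaxone_t(\vx) + t\ln(2m)$, which by Lemma~\ref{lem:smaxL1_properties} satisfies $\norm{\vx}_{1;\infty} \leq f_t(\vx) \leq \norm{\vx}_{1;\infty} + Kt$ with $K = O(k + \log m)$, and whose gradient has Lipschitz constant $2/t$ (the $1/t$ requirement is met after the harmless reparametrization $t\to t/2$, which only changes constants). The $\#$ operator for $\norm{\cdot}_{1;\infty}$ is given by the closed-form expression of Lemma~\ref{lem:sharp_formula_Linf} and can be evaluated in $O(km)$ time per gradient step. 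The main technical point is to bound $\norm{\mvar Q}_{1;\infty}$. Since $\mvar Q$ acts componentwise, letting $\vvar y(e) \defeq \sum_i |\vx_i(e)|$ (so $\norm{\vvar y}_\infty=\norm{\vx}_{1;\infty}$) and using Lemma~\ref{lem:operator_norm_bounds}, I would compute
\[
\norm{\mvar Q\vx}_{1;\infty} = \max_e \sum_i |(\mvar P \vx_i)(e)| \leq \max_e \sum_{e'} \abs{\mvar P}_{e,e'}\,\vvar y(e') = \norm{\abs{\mvar P}\vvar y}_\infty \leq \norm{\mvar P}_\infty \norm{\vx}_{1;\infty},
\]
so by Theorem~\ref{thm:master_theorem} we get $\norm{\mvar Q}_{1;\infty} \leq 2^{O(\sqrt{\log|V|\log\log|V|})}$. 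This is the step I expect to require the most care: one must exploit the blockwise structure carefully to avoid the extra factor of $k$ that a naive triangle-inequality argument across commodities would introduce.

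Finally I would substitute these parameters into the running time bound of Theorem~\ref{thm:NonLinearProjection}. Applying $\mvar Q$ or $\mvar Q^T$ to a vector in $\R^{E\times[k]}$ amounts to $k$ applications of $\mvar P$, so $\runtime = k|E|\cdot 2^{O(\sqrt{\log|V|\log\log|V|})}$; the $O(km)$ cost of the $\#$-operator does not change this asymptotically. With $\norm{\mvar Q}^2 = 2^{O(\sqrt{\log|V|\log\log|V|})}$, $K = O(k+\log m)$, and $1/\varepsilon^2 + \log\norm{\mvar Q} = O(1/\varepsilon^2)$, the overall bound $O\bigl(\norm{\mvar Q}^2 K (\runtime + km)(1/\varepsilon^2+\log\norm{\mvar Q})\bigr)$ collapses to $O\!\left(\frac{k^2}{\varepsilon^2}|E|\cdot 2^{O(\sqrt{\log|V|\log\log|V|})}\right)$, matching the theorem statement. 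The optimal $\vx$ produced yields the approximately optimal concurrent flow via $\flow_i = \capacityMatrix(\vec{\alpha}_i + (\mvar Q\vx)_i)/\normInfFull{\sum_i\abs{\vec\alpha_i + (\mvar Q\vx)_i}}$, exactly analogous to Section~\ref{sub:MaxFlow_formulation} in the single-commodity case, with the approximation ratio guaranteed by~\eqref{eq:approximate_ratio}.
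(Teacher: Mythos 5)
Your proposal is correct and follows essentially the same approach as the paper's proof: build the block-diagonal projection $\mvar Q$ from the scaled circulation projection $\mvar P$, verify Assumption~\ref{ass:nonlinear_projection} via Lemma~\ref{lem:smaxL1_properties} with $K=O(k+\log m)$, bound $\norm{\mvar Q}_{1;\infty}$ by $\normInf{\mvar P}$, and plug into Theorem~\ref{thm:NonLinearProjection}. Your explicit calculation for $\norm{\mvar Q}_{1;\infty}\leq\normInf{\mvar P}$ is actually more careful than the paper's one-line claim (the paper merely asserts ``by similar proof as Lemma~\ref{lem:equivalence_of_competetivity_and_norm}''), and your observation that one must pull the sum over commodities inside to avoid a spurious factor of $k$ is exactly the point that argument needs to make.
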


\begin{proof} Let $\mvar A$ be the oblivious routing algorithm given
by Theorem \ref{thm:master_theorem}. And we have $\rho(\mvar A)\leq2^{O\left(\sqrt{\log|V|\log\log |V|}\right)}$.
Let us define the scaled circulation projection matrix $\mvar P=\mvar I-\mvar U\mvar A\mvar B^{T}\mvar U^{-1}$.
Lemma \ref{lem:obl_rout_to_circulation_projection} shows that $\normInf{\mvar P}\leq1+2^{O\left(\sqrt{\log|V|\log\log |V|}\right)}$. 

Let the multi-commodity circulation projection matrix $\mvar Q:\mathbb{R}^{E\times[k]}\rightarrow\mathbb{R}^{E\times[k]}$
defined by $\left(\mvar Q\vec{x}\right)_{i}=\mvar P\vec{x_{i}}.$
Note that the definition of $\norm{\mvar Q}_{1;\infty}$ is similar
to $\rho(\mvar Q)$. By similar proof as Lemma \ref{lem:equivalence_of_competetivity_and_norm},
we have $\norm{\mvar Q}_{1;\infty}=\norm{\mvar P}_{\infty}$. Hence,
we have $\norm{\mvar Q}_{1;\infty}\leq1+2^{O\left(\sqrt{\log|V|\log\log |V|}\right)}$.
Also, since $\mvar P$ is a projection matrix on the subspace $\{\vec{x}\in\R^{E}:\mvar B^{T}\mvar U\vec{x}=0\}$,
$\mvar Q$ is a projection matrix on the subspace $\{\vec{x}\in\R^{E\times[k]}:\mvar B^{T}\mvar U\vec{x_{i}}=0\}$.

By Lemma \ref{lem:smaxL1_properties}, the function $\smaxone_{t}(\vx)$
is a convex continuously differentiable function such that the Lipschitz
constant of $\nabla\smaxone_{t}$ is $\frac{2}{t}$ and 
\[
\norm{\vx}_{1;\infty}-t\ln(2m)\leq\smaxone_{t}(\vx)\leq\norm{\vx}_{1;\infty}+kt.
\]

Given an arbitrary set of demands $\demands_{i}\in\rvertvec$, we
find a vector $\vec{y}$ such that
\[
\incMatrix^{T}\mvar U\vec{y}=-\demands_{i}.
\]
Then, we use the \textbf{NonlinearProjection} to solve 
\[
\min_{\incMatrix^{T}\mvar U\vx=0}\norm{\vx-\vy}_{1;\infty}
\]
 using a family of functions $\smaxone_{t}(\vx)+t\ln(2n)$ and the
projection matrix $\mvar Q$. Since each iteration involves calculation
of gradients and $\#$ operator, it takes $O(mk)$ each iteration.
And it takes $\tilde{O}\left(\norm{\mvar Q}_{1;\infty}^{2}K/\varepsilon^{2}\right)$
iterations in total where $K=k+\ln(2m)$. In total, it \textbf{NonlinearProjection
}outputs a $(1+\varepsilon)$ approximate minimizer $\vx$ in time
\[
O\left(\frac{k^{2}}{\varepsilon^{2}}|E|2^{O\left(\sqrt{\log|V|\log\log |V|}\right)}\right).
\]
And it gives a $(1-\varepsilon)$ approximate maximum concurrent flow
$\vec{f_{i}}$ by a direct formula.

\end{proof}

\section{Acknowledgements}

We thank Jonah Sherman for agreeing to coordinate submissions and we thank Satish Rao, Jonah Sherman, Daniel Spielman, Shang-Hua Teng. This work was partially supported by NSF awards 0843915 and 1111109, NSF Graduate Research Fellowship (grant no. 1122374) and Hong Kong RGC grant 2150701.

\bibliographystyle{plain}
\bibliography{klos_maxflow}

\begin{thebibliography}{10}

\bibitem{AhujaBook}
R.~K. Ahuja, T.~L. Magnanti, and J.~B. Orlin.
\newblock {\em Network flows}.
\newblock Elsevier North-Holland, Inc., New York, NY, USA, 1989.

\bibitem{AhujaApps}
Ravindra~K. Ahuja, Thomas~L. Magnanti, James~B. Orlin, and M.~R. Reddy.
\newblock Applications of network optimization.
\newblock In {\em Network Models}, volume~7 of {\em Handbooks in Operations
  Research and Management Science}, pages 1--75. North-Holland, 1995.

\bibitem{AroraHK05}
Sanjeev Arora, Elad Hazan, and Satyen Kale.
\newblock The multiplicative weights update method: A meta-algorithm and
  applications.
\newblock Available at http://www.cs.princeton.edu/\~{}arora/pubs/MWsurvey.pdf.

\bibitem{flowcutgap}
Y.~Aumann and Y.~Rabani.
\newblock An o(log k) approximate min-cut max-flow theorem and approximation
  algorithm.
\newblock {\em SIAM Journal on Computing}, 27(1):291--301, 1998.

\bibitem{BenczurK96}
Andr\'{a}s~A. Bencz\'{u}r and David~R. Karger.
\newblock Approximating s-t minimum cuts in {{\~O}}$(n^2)$ time.
\newblock In {\em STOC'96: Proceedings of the 28th Annual ACM Symposium on
  Theory of Computing}, pages 47--55, New York, NY, USA, 1996. ACM.

\bibitem{bertsekas1999nonlinear}
Dimitri~P Bertsekas.
\newblock Nonlinear programming.
\newblock 1999.

\bibitem{CKMST}
Paul Christiano, Jonathan~A. Kelner, Aleksander Madry, Daniel~A. Spielman, and
  Shang-Hua Teng.
\newblock Electrical flows, laplacian systems, and faster approximation of
  maximum flow in undirected graphs.
\newblock In {\em STOC '11}, pages 273--282, 2011.

\bibitem{GoldbergRao}
Andrew~V. Goldberg and Satish Rao.
\newblock Beyond the flow decomposition barrier.
\newblock {\em J. ACM}, 45(5):783--797, 1998.

\bibitem{KelnerMaymunkov09}
Jonathan~A. Kelner and Petar Maymounkov.
\newblock Electric routing and concurrent flow cutting.
\newblock {\em CoRR}, abs/0909.2859, 2009.

\bibitem{KelnerMillerPeng}
Jonathan~A. Kelner, Gary~L. Miller, and Richard Peng.
\newblock Faster approximate multicommodity flow using quadratically coupled
  flows.
\newblock In {\em Proceedings of the 44th symposium on Theory of Computing},
  STOC '12, pages 1--18, New York, NY, USA, 2012. ACM.

\bibitem{koszSolver}
Jonathan~A. Kelner, Lorenzo Orecchia, Aaron Sidford, and Zeyuan~Allen Zhu.
\newblock A simple, combinatorial algorithm for solving sdd systems in
  nearly-linear time.
\newblock {\em CoRR}, abs/1301.6628, 2013.

\bibitem{Koutis:2010:AOS:1917827.1918388}
Ioannis Koutis, Gary~L. Miller, and Richard Peng.
\newblock Approaching optimality for solving sdd linear systems.
\newblock In {\em Proceedings of the 2010 IEEE 51st Annual Symposium on
  Foundations of Computer Science}, FOCS '10, pages 235--244, Washington, DC,
  USA, 2010. IEEE Computer Society.

\bibitem{KoutisMP10}
Ioannis Koutis, Gary~L. Miller, and Richard Peng.
\newblock Approaching optimality for solving {SDD} systems.
\newblock In {\em Proceedings of the 51st Annual Symposium on Foundations of
  Computer Science}, 2010.

\bibitem{Koutis:2011:NLN:2082752.2082901}
Ioannis Koutis, Gary~L. Miller, and Richard Peng.
\newblock A nearly-m log n time solver for sdd linear systems.
\newblock In {\em Proceedings of the 2011 IEEE 52nd Annual Symposium on
  Foundations of Computer Science}, FOCS '11, pages 590--598, Washington, DC,
  USA, 2011. IEEE Computer Society.

\bibitem{LawlerNarayana09}
Gregory Lawler and Hariharan Narayanan.
\newblock Mixing times and lp bounds for oblivious routing.
\newblock In {\em WORKSHOP ON ANALYTIC ALGORITHMICS AND COMBINATORICS, (ANALCO
  Õ09) 4}, 2009.

\bibitem{Lee2013}
Yin~Tat Lee, Satish Rao, and Nikhil Srivastava.
\newblock {A New Approach to Computing Maximum Flows using Electrical Flows}.
\newblock {\em Proceedings of the 45th symposium on Theory of Computing - STOC
  '13}, 2013.

\bibitem{YinTatAaron}
Yin~Tat Lee and Aaron Sidford.
\newblock Efficient accelerated coordinate descent methods and faster
  algorithms for solving linear systems.
\newblock In {\em Proceedings of the 2013 IEEE 54st Annual Symposium on
  Foundations of Computer Science}, FOCS '13. IEEE Computer Society, 2013.

\bibitem{LeightonMoitra}
F.~Thomson Leighton and Ankur Moitra.
\newblock Extensions and limits to vertex sparsification.
\newblock In {\em Proceedings of the 42nd ACM symposium on Theory of
  computing}, STOC '10, pages 47--56, New York, NY, USA, 2010. ACM.

\bibitem{Madry10}
Aleksander Madry.
\newblock Fast approximation algorithms for cut-based problems in undirected
  graphs.
\newblock In {\em Proceedings of the 51st Annual Symposium on Foundations of
  Computer Science}, 2010.

\bibitem{nemirovsky1983problem}
Arkadiui~Semenovich Nemirovsky and David~Borisovich Yudin.
\newblock Problem complexity and method efficiency in optimization.
\newblock 1983.

\bibitem{nesterov2005smooth}
Yu~Nesterov.
\newblock Smooth minimization of non-smooth functions.
\newblock {\em Mathematical Programming}, 103(1):127--152, 2005.

\bibitem{nesterov2003introductory}
Yurii Nesterov.
\newblock {\em Introductory lectures on convex optimization: A basic course},
  volume~87.
\newblock Springer, 2003.

\bibitem{nesterov2004rounding}
Yurii Nesterov.
\newblock Rounding of convex sets and efficient gradient methods for linear
  programming problems.
\newblock {\em Available at SSRN 965658}, 2004.

\bibitem{nesterov2010efficiency}
Yurii Nesterov.
\newblock Efficiency of coordinate descent methods on huge-scale optimization
  problems.
\newblock {\em Core discussion papers}, 2:2010, 2010.

\bibitem{Racke:2008:OHD:1374376.1374415}
Harald R\"{a}cke.
\newblock Optimal hierarchical decompositions for congestion minimization in
  networks.
\newblock In {\em Proceedings of the 40th annual ACM symposium on Theory of
  computing}, STOC '08, pages 255--264, New York, NY, USA, 2008. ACM.

\bibitem{SchrijverA}
Alexander Schrijver.
\newblock {\em Combinatorial Optimization, Volume A}.
\newblock Number~24 in Algorithms and Combinatorics. Springer, 2003.

\bibitem{ShermanMaxflow}
Jonah Sherman.
\newblock Nearly maximum flows in nearly linear time.
\newblock In {\em Proceedings of the 54th Annual Symposium on Foundations of
  Computer Science}, 2013.

\bibitem{Sleator:1981:DSD:800076.802464}
Daniel~D. Sleator and Robert~Endre Tarjan.
\newblock A data structure for dynamic trees.
\newblock In {\em Proceedings of the thirteenth annual ACM symposium on Theory
  of computing}, STOC '81, pages 114--122, New York, NY, USA, 1981. ACM.

\bibitem{Spielman:2008:GSE:1374376.1374456}
Daniel~A. Spielman and Nikhil Srivastava.
\newblock Graph sparsification by effective resistances.
\newblock In {\em Proceedings of the 40th annual ACM symposium on Theory of
  computing}, STOC '08, pages 563--568, New York, NY, USA, 2008. ACM.

\bibitem{SpielmanTeng04}
Daniel~A. Spielman and Shang-Hua Teng.
\newblock Nearly-linear time algorithms for graph partitioning, graph
  sparsification, and solving linear systems.
\newblock In {\em Proceedings of the 36th Annual ACM Symposium on Theory of
  Computing}, pages 81--90, New York, NY, USA, 2004. ACM.

\bibitem{Spielman:Solver:DBLP:journals/corr/abs-cs-0607105}
Daniel~A. Spielman and Shang-Hua Teng.
\newblock Nearly-linear time algorithms for preconditioning and solving
  symmetric, diagonally dominant linear systems.
\newblock {\em CoRR}, abs/cs/0607105, 2006.

\bibitem{STspectralSparse}
Daniel~A. Spielman and Shang-Hua Teng.
\newblock Spectral sparsification of graphs.
\newblock {\em CoRR}, abs/0808.4134, 2008.

\end{thebibliography}

\appendix

\section{Some Facts about Norm and Functions with Lipschitz Gradient}

In this section, we present some basic fact used in this paper about norm and dual norm. 
Also, we presented some lemmas about convex functions with Lipschitz gradient.
See \cite{bertsekas1999nonlinear,nesterov2003introductory} for comprehensive discussion. 

\subsection{Norms}

\begin{fact} \label{fact:dualzero} 
\[
\vx=0\enspace\Leftrightarrow\dualVec{\vx}=0.
\]
\end{fact}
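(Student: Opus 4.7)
The plan is to prove both directions directly from the variational definition
\[
\dualVec{\vx} = \argmax_{\vs \in \R^n} \innerProduct{\vx}{\vs} - \frac{1}{2}\norm{\vs}^{2}.
\]

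For the forward direction, suppose $\vx = \vzero$. Then the objective reduces to $-\frac{1}{2}\norm{\vs}^{2}$, which is nonpositive and equals $0$ only when $\vs = \vzero$ (since $\norm{\cdot}$ is a norm). Hence the unique maximizer is $\vs = \vzero$, giving $\dualVec{\vx} = \vzero$.

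For the reverse direction, I would argue by contrapositive: assume $\vx \neq \vzero$ and show the objective cannot be maximized at $\vs = \vzero$. Since $\vx \neq \vzero$, by the definition of the dual norm $\normDual{\vx} = \max_{\norm{\vs} \leq 1}\innerProduct{\vx}{\vs} > 0$, so there exists some $\vs_0$ with $\norm{\vs_0} = 1$ and $\innerProduct{\vx}{\vs_0} > 0$. Plugging in the scaled vector $t \vs_0$ for $t > 0$ gives objective value $t \innerProduct{\vx}{\vs_0} - \frac{t^2}{2}$, which is strictly positive for all sufficiently small $t$. Since the objective at $\vs = \vzero$ is $0$, the point $\vs = \vzero$ cannot be a maximizer, so $\dualVec{\vx} \neq \vzero$.

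The argument is essentially routine; the only minor subtlety is invoking the fact that the dual norm is strictly positive on nonzero vectors (which follows from $\norm{\cdot}$ being a genuine norm, so that $\norm{\vx}^{*} \geq \innerProduct{\vx}{\vx/\norm{\vx}} \cdot \frac{1}{C} > 0$ in finite dimensions where all norms are equivalent). I do not anticipate any real obstacle.
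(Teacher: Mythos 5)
Your proof is correct and takes essentially the same route as the paper: the forward direction is identical, and for the reverse direction both arguments exhibit a vector at which the objective $\innerProduct{\vx}{\vs} - \frac{1}{2}\norm{\vs}^2$ is strictly positive (the paper plugs in $\vs = \frac{\innerProduct{\vx}{\vx}}{\norm{\vx}^2}\vx$ directly, while you scale a unit vector extracted from the dual-norm definition). One minor imprecision: the spurious factor $\frac{1}{C}$ and the appeal to norm equivalence are unnecessary, since $\vs = \vx/\norm{\vx}$ already has $\norm{\cdot}$-norm exactly $1$, giving $\normDual{\vx} \geq \innerProduct{\vx}{\vx}/\norm{\vx} > 0$ immediately.
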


\begin{proof} If $\vx=0$ then $\forall\vs\neq0$ we have $\innerProduct{\vx}{\vs}-\frac{1}{2}\norm{\vs}^{2}<0$
but $\innerProduct{\vx}{\vx}-\frac{1}{2}\norm{\vx}^{2}=0$. So we
have $\dualVec{\vx}=0$. If $\vx\neq0$ then let $\vs=\frac{\innerProduct{\vx}{\vx}}{\norm{\vx}^{2}}\vx$
with this choice we have $\innerProduct{\vx}{\vs}-\frac{1}{2}\norm{\vs}^{2}=\frac{1}{2}\frac{\innerProduct{\vx}{\vx}^{2}}{\norm{\vx}^{2}}>0$.
However, for $\vs=0$ we have that $\innerProduct{\vx}{\vs}-\frac{1}{2}\norm{\vs}^{2}=0$
therefore we have $\dualVec{\vx}\neq0$. \end{proof}

\begin{fact}\label{fact:dual_norm} 
\[
\forall\vx\in\rn\enspace:\enspace\innerProduct x{\dualVec x}=\norm{\dualVec x}^{2}.
\]
\end{fact}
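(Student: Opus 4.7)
The plan is to exploit the fact that if $\vs^* = \dualVec{\vx}$ is a maximizer of $\phi(\vs) \defeq \innerProduct{\vx}{\vs} - \frac{1}{2}\norm{\vs}^2$ over $\Rn$, then in particular it is a maximizer along the one-dimensional ray $\{t \vs^* : t \in \R\}$, where $\phi$ becomes a simple quadratic in $t$ and we can extract the desired identity by optimizing that quadratic.

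First I would dispose of the degenerate case. If $\dualVec{\vx} = 0$, then $\innerProduct{\vx}{\dualVec{\vx}} = 0 = \norm{\dualVec{\vx}}^2$ and we are done. Otherwise, set $a \defeq \innerProduct{\vx}{\dualVec{\vx}}$ and $b \defeq \norm{\dualVec{\vx}}^2 > 0$, and consider the restriction
\[
\psi(t) \defeq \phi(t\dualVec{\vx}) = t\innerProduct{\vx}{\dualVec{\vx}} - \tfrac{t^2}{2}\norm{\dualVec{\vx}}^2 = ta - \tfrac{t^2}{2}b,
\]
where I used positive homogeneity of the norm, $\norm{t\dualVec{\vx}} = |t|\cdot\norm{\dualVec{\vx}}$, to get $\norm{t\dualVec{\vx}}^2 = t^2 b$. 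Since $\dualVec{\vx}$ is by definition a global maximizer of $\phi$, the value $\psi(1) = a - b/2$ is at least $\psi(t)$ for every $t \in \R$.

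The one-variable quadratic $\psi(t) = ta - \tfrac{t^2}{2}b$ is maximized at $t^\star = a/b$ with optimal value $a^2/(2b)$, so the inequality $\psi(1) \geq \psi(t^\star)$ becomes
\[
a - \tfrac{b}{2} \;\geq\; \frac{a^2}{2b},
\]
which rearranges to $-(a-b)^2 \geq 0$, forcing $a = b$. This is exactly the claimed equality $\innerProduct{\vx}{\dualVec{\vx}} = \norm{\dualVec{\vx}}^2$.

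I do not anticipate any serious obstacle: the argument is a two-line calculus exercise once one notices that testing $\dualVec{\vx}$ against its own scalar multiples already pins down the inner product. The only subtlety is that $\norm{\cdot}$ need not be differentiable, which is precisely why I avoid taking a gradient of $\phi$ and instead only use the scalar optimality of $\psi(t)$ together with positive homogeneity of the norm (which holds for any norm).
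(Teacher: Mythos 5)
Your proof is correct and takes essentially the same route as the paper: both restrict $\phi(\vs)=\innerProduct{\vx}{\vs}-\tfrac12\norm{\vs}^2$ to the ray $\{t\,\dualVec{\vx}:t\in\R\}$ and exploit that $t=1$ maximizes the resulting scalar quadratic. The only cosmetic difference is that you close by comparing $\psi(1)$ to the quadratic's vertex value and factoring out $(a-b)^2$, whereas the paper sets the derivative in $t$ to zero at $t=1$; both reach $\innerProduct{\vx}{\dualVec{\vx}}=\norm{\dualVec{\vx}}^2$ identically.
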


\begin{proof} If $\vx=0$ then $\dualVec{\vx}=0$ by Claim \ref{fact:dualzero}
and we have the result. Otherwise, again by claim \ref{fact:dualzero}
we know that $\dualVec{\vx}\neq0$ and therefore by the definition
of $\dualVec{\vx}$ we have 
\[
1=\argmax_{c\in\R}\innerProduct x{c\cdot\dualVec x}-\frac{1}{2}\norm{c\cdot\dualVec x}^{2}=\argmax_{c\in\R}c\cdot\innerProduct x{\dualVec x}-\frac{c^{2}}{2}\norm{\dualVec x}^{2}
\]
Setting the derivative of with respect to $c$ to 0 we get that $1=c=\frac{\innerProduct{\vx}{\dualVec{\vx}}}{\norm{\dualVec{\vx}}^{2}}$.
\end{proof}

\begin{fact}  \label{fact:dualnorm_norm} 
\[
\forall\vx\in\rn\enspace:\enspace\normDual{\vx}=\norm{\dualVec{\vx}}.
\]
\end{fact}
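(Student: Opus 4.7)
The plan is to prove both inequalities $\normDual{\vx} \geq \norm{\dualVec{\vx}}$ and $\normDual{\vx} \leq \norm{\dualVec{\vx}}$, leveraging Fact~\ref{fact:dual_norm} which gives $\innerProduct{\vx}{\dualVec{\vx}} = \norm{\dualVec{\vx}}^2$. The case $\vx = 0$ can be dispatched immediately: by Fact~\ref{fact:dualzero} both sides are zero, so from here on I assume $\vx \neq 0$, which by Fact~\ref{fact:dualzero} also gives $\dualVec{\vx} \neq 0$.

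For the lower bound $\normDual{\vx} \geq \norm{\dualVec{\vx}}$, I would plug the unit-norm vector $\vy \defeq \dualVec{\vx}/\norm{\dualVec{\vx}}$ into the definition of $\normDual{\cdot}$. Using Fact~\ref{fact:dual_norm},
\[
\innerProduct{\vy}{\vx} = \frac{\innerProduct{\dualVec{\vx}}{\vx}}{\norm{\dualVec{\vx}}} = \frac{\norm{\dualVec{\vx}}^{2}}{\norm{\dualVec{\vx}}} = \norm{\dualVec{\vx}},
\]
so $\normDual{\vx} \geq \norm{\dualVec{\vx}}$.

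For the upper bound, I exploit the variational definition of $\dualVec{\vx}$: for any $\vy \in \Rn$ with $\norm{\vy} \leq 1$ and any scalar $c \in \R$, the vector $c\vy$ is an admissible argument in the maximization defining $\dualVec{\vx}$. Thus
\[
\innerProduct{\vx}{\dualVec{\vx}} - \tfrac{1}{2}\norm{\dualVec{\vx}}^{2} \geq c\innerProduct{\vx}{\vy} - \tfrac{c^{2}}{2}\norm{\vy}^{2} \geq c\innerProduct{\vx}{\vy} - \tfrac{c^{2}}{2}.
\]
The left-hand side equals $\tfrac{1}{2}\norm{\dualVec{\vx}}^{2}$ by Fact~\ref{fact:dual_norm}, and choosing $c = \innerProduct{\vx}{\vy}$ on the right yields $\tfrac{1}{2}\norm{\dualVec{\vx}}^{2} \geq \tfrac{1}{2}\innerProduct{\vx}{\vy}^{2}$, i.e.\ $|\innerProduct{\vx}{\vy}| \leq \norm{\dualVec{\vx}}$. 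Taking the supremum over all $\vy$ with $\norm{\vy} \leq 1$ gives $\normDual{\vx} \leq \norm{\dualVec{\vx}}$.

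There is no real obstacle here; the only subtlety is being careful to optimize over the scalar $c$ so that one extracts the square $\innerProduct{\vx}{\vy}^2$, which is what lets us match the factor of $\tfrac12$ and combine cleanly with Fact~\ref{fact:dual_norm}. Combining the two inequalities yields the claim.
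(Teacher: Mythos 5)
Your proof is correct and follows essentially the same approach as the paper: both directions rely on Fact~\ref{fact:dual_norm}, the lower bound by plugging the normalized vector $\dualVec{\vx}/\norm{\dualVec{\vx}}$ into the definition of $\normDual{\cdot}$, and the upper bound by testing the variational definition of $\dualVec{\vx}$ against a suitably scaled copy of a dual-norm test vector (your scalar $c = \innerProduct{\vx}{\vy}$ reproduces exactly the paper's choice $\vvar{z} = \frac{\innerProduct{\vx}{\vy}}{\norm{\vy}^2}\vy$ once $\vy$ is normalized). Your phrasing in terms of an arbitrary unit-ball vector followed by a supremum, rather than selecting the maximizer up front, is a cosmetic rather than substantive difference.
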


\begin{proof} Note that if $\vx=0$ then the claim follows from Claim
\eqref{fact:dualzero} otherwise we have 
\[
\norm{\vx}^{*}=\max_{\norm y\leq1}\innerProduct xy=\max_{\norm y=1}\innerProduct xy\leq\max_{y\in\rn}\frac{\innerProduct xy}{\norm y}
\]
From this it is clear that $\normDual{\vx}\geq\norm{\dualVec{\vx}}$.
To see the other direction consider a $\vy$ that maximizes the above
and let $\vvar z=\frac{\innerProduct{\vx}{\vy}}{\norm{\vy}^{2}}\vy$
\[
\innerProduct{\vx}{\vvar z}-\frac{1}{2}\norm{\vvar z}^{2}\leq\innerProduct{\vx}{\dualVec{\vx}}-\frac{1}{2}\norm{\dualVec{\vx}}^{2}
\]
and therefore 
\[
{\normDual{\vx}}^{2}-\frac{1}{2}{\normDual{\vx}}^{2}\leq\frac{1}{2}\norm{\dualVec{\vx}}^{2}
\]

\end{proof}

\begin{fact}\label{claim:CS_inq}{[}Cauchy Schwarz{]} 
\[
\forall\vx,\vy\in\rn\enspace:\enspace\innerProduct{\vy}{\vx}\leq\normDual{\vy}\norm{\vx}.
\]
\end{fact}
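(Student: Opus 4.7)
The plan is to derive this directly from the definition of the dual norm, with only a trivial case split on whether $\vx$ vanishes. Recall that by definition
\[
\normDual{\vy} = \max_{\vz \in \Rn :\, \norm{\vz} \leq 1} \innerProduct{\vz}{\vy},
\]
so the statement amounts to saying that the linear functional $\vz \mapsto \innerProduct{\vz}{\vy}$ grows at most linearly in $\norm{\vz}$, with proportionality constant $\normDual{\vy}$.

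First I would handle the degenerate case $\vx = 0$ separately: both sides of the inequality equal $0$ since $\innerProduct{\vy}{\vzero} = 0$ and $\norm{\vx} = 0$, so the bound holds trivially. For the main case $\vx \neq 0$, I would form the normalized vector $\vz \defeq \vx / \norm{\vx}$, which satisfies $\norm{\vz} = 1$ by homogeneity of the norm. Then $\vz$ is feasible in the maximization defining $\normDual{\vy}$, so
\[
\innerProduct{\vz}{\vy} \;\leq\; \normDual{\vy}.
\]
Multiplying both sides by the positive scalar $\norm{\vx}$ and using the bilinearity of $\innerProduct{\cdot}{\cdot}$ gives $\innerProduct{\vx}{\vy} \leq \normDual{\vy}\norm{\vx}$, which is the claimed inequality (using symmetry $\innerProduct{\vy}{\vx} = \innerProduct{\vx}{\vy}$ of the inner product).

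There is essentially no obstacle here; the statement is the defining property of the dual norm packaged as an inequality, and the only subtlety is excluding the zero vector from the normalization step. One could alternatively derive the bound from Fact \ref{fact:dualnorm_norm} and Fact \ref{fact:dual_norm}, since $\normDual{\vy} = \norm{\dualVec{\vy}}$ and the maximizer in the dual-norm definition is realized (up to scaling) by $\dualVec{\vy}$, but the direct route above is shorter and needs no additional lemmas.
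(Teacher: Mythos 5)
Your proof is correct and matches the paper's argument in essence: both start from the definition of $\normDual{\cdot}$, observe the inequality on the unit sphere, and then rescale to the general case, with your version just spelling out the $\vx = 0$ case and the normalization step more explicitly than the paper's terse ``it follows by linearity.''
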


\begin{proof} By the definition of dual norm, for all $\norm{\vx}=1$,
we have $\innerProduct{\vy}{\vx}\leq\normDual{\vy}.$ Hence, it follows
by linearity of both side.\end{proof}

\subsection{Functions with Lipschitz Gradient}

\begin{lemma} \label{lem:lipeq}Let $f$ be a continuously differentiable
convex function. Then, the following are equivalence:

\[
\forall\vx,\vy\in\rn\enspace:\enspace\normDual{\gradient f(\vx)-\gradient f(\vy)}\leq\funLip\cdot\norm{\vx-\vy}
\]
and 
\[
\forall\vx,\vy\in\rn\enspace:\enspace f(\vx)\leq f(\vy)+\innerProduct{\gradient f(\vy)}{\vx-\vy}+\frac{\funLip}{2}\norm{\vx-\vy}^{2}.
\]
For any such $f$ and any $\vx\in\R^{n}$ , we have 
\[
f\left(\vx-\frac{1}{L}\dualVec{\nabla f(\vx)}\right)\leq f(\vx)-\frac{1}{2\funLip}{\normDual{\gradient f(\vx)}}^{2}.
\]
\end{lemma}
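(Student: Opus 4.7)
The plan is to prove the equivalence in both directions and then derive the descent inequality as a direct consequence of the second formulation.

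For the direction $(1) \Rightarrow (2)$, I would use the fundamental theorem of calculus along the segment from $\vy$ to $\vx$: writing
\[
f(\vx) - f(\vy) - \innerProduct{\gradient f(\vy)}{\vx - \vy} = \int_0^1 \innerProduct{\gradient f(\vy + t(\vx-\vy)) - \gradient f(\vy)}{\vx-\vy}\, dt,
\]
then applying the generalized Cauchy--Schwarz inequality (Fact~\ref{claim:CS_inq}) inside the integral to bound the integrand by $\normDual{\gradient f(\vy + t(\vx-\vy)) - \gradient f(\vy)} \cdot \norm{\vx-\vy}$, and finally invoking the hypothesis to bound this by $\funLip t \norm{\vx-\vy}^2$. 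Integrating from $0$ to $1$ produces the desired $\frac{L}{2}\norm{\vx-\vy}^2$ term.

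For the direction $(2) \Rightarrow (1)$, which I expect to be the main obstacle since it is not an obvious rearrangement, I would use the standard ``shift-and-minimize'' trick. Fix $\vx$ and define the auxiliary function $g(\vz) \defeq f(\vz) - \innerProduct{\gradient f(\vx)}{\vz}$. By convexity of $f$, $g$ is convex and its gradient $\gradient g(\vz) = \gradient f(\vz) - \gradient f(\vx)$ vanishes at $\vz=\vx$, so $\vx$ is a global minimizer of $g$. Moreover $g$ inherits property $(2)$ from $f$ with the same constant $L$. Applying $(2)$ to $g$ at the point $\vy$ with step $\vz = \vy - \tfrac{1}{\funLip}\dualVec{\gradient g(\vy)}$ yields (via the descent computation below)
\[
g(\vx) \leq g\!\left(\vy - \tfrac{1}{\funLip}\dualVec{\gradient g(\vy)}\right) \leq g(\vy) - \tfrac{1}{2\funLip}\normDual{\gradient g(\vy)}^2.
\]
Unpacking $g$, this gives $\tfrac{1}{2\funLip}\normDual{\gradient f(\vy) - \gradient f(\vx)}^2 \leq f(\vy) - f(\vx) - \innerProduct{\gradient f(\vx)}{\vy - \vx}$. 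Applying $(2)$ once more in the form $f(\vy) - f(\vx) - \innerProduct{\gradient f(\vx)}{\vy - \vx} \leq \tfrac{L}{2}\norm{\vx-\vy}^2$ and combining yields $\normDual{\gradient f(\vx) - \gradient f(\vy)}^2 \leq L^2 \norm{\vx-\vy}^2$, which is exactly $(1)$.

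For the descent inequality, which is used inside the above argument and is the payoff needed in \lemmaref{lem:lip_constant_of_g} and Theorem~\ref{thm:gradient_descent}, I would simply plug $\vy \gets \vx$ and $\vx \gets \vx - \tfrac{1}{\funLip}\dualVec{\gradient f(\vx)}$ into $(2)$ to get
\[
f\!\left(\vx - \tfrac{1}{\funLip}\dualVec{\gradient f(\vx)}\right) \leq f(\vx) - \tfrac{1}{\funLip}\innerProduct{\gradient f(\vx)}{\dualVec{\gradient f(\vx)}} + \tfrac{1}{2\funLip}\norm{\dualVec{\gradient f(\vx)}}^2.
\]
Fact~\ref{fact:dual_norm} identifies $\innerProduct{\gradient f(\vx)}{\dualVec{\gradient f(\vx)}} = \norm{\dualVec{\gradient f(\vx)}}^2$, and Fact~\ref{fact:dualnorm_norm} identifies $\norm{\dualVec{\gradient f(\vx)}} = \normDual{\gradient f(\vx)}$, so the two $\tfrac{1}{\funLip}$ and $\tfrac{1}{2\funLip}$ terms combine into the desired $-\tfrac{1}{2\funLip}\normDual{\gradient f(\vx)}^2$. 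The circularity between this descent bound and the $(2) \Rightarrow (1)$ step is resolved because the descent bound only uses property $(2)$, not $(1)$, so it can be invoked freely in the course of proving the reverse implication.
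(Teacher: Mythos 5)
Your proposal is correct and follows essentially the same route as the paper: the $(1)\Rightarrow(2)$ direction via the fundamental theorem of calculus and generalized Cauchy--Schwarz, the $(2)\Rightarrow(1)$ direction via the auxiliary function $g(\vz)=f(\vz)-\innerProduct{\gradient f(\vx)}{\vz}$ minimized at $\vx$, and the descent inequality by plugging the step $\vx-\tfrac{1}{\funLip}\dualVec{\gradient f(\vx)}$ into $(2)$ and using Facts~\ref{fact:dual_norm} and~\ref{fact:dualnorm_norm}. The only minor divergence is the very last step of $(2)\Rightarrow(1)$: you close by applying $(2)$ once more to bound $f(\vy)-f(\vx)-\innerProduct{\gradient f(\vx)}{\vy-\vx}\le\tfrac{\funLip}{2}\norm{\vx-\vy}^2$ and taking square roots, whereas the paper symmetrizes the intermediate inequality over $\vx\leftrightarrow\vy$ and then applies Cauchy--Schwarz to $\innerProduct{\gradient f(\vy)-\gradient f(\vx)}{\vy-\vx}$; both are one line and equally elementary.
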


\begin{proof} From the first condition, we have 
\begin{eqnarray*}
f(\vy) & = & f(\vx)+\int_{0}^{1}\frac{d}{dt}f(\vx+t(\vy-\vx))dt\\
 & = & f(\vx)+\int_{0}^{1}\innerProduct{\nabla f(\vx+t(\vy-\vx))}{\vy-\vx}dt\\
 & = & f(\vx)+\innerProduct{\nabla f(\vx)}{\vy-\vx}+\int_{0}^{1}\innerProduct{\nabla f(\vx+t(\vy-\vx))-\nabla f(\vx)}{\vy-\vx}dt\\
 & \leq & f(\vx)+\innerProduct{\nabla f(\vx)}{\vy-\vx}+\int_{0}^{1}\normDual{\nabla f(\vx+t(\vy-\vx))-\nabla f(\vx)}\norm{\vy-\vx}dt\\
 & \leq & f(x)+\innerProduct{\nabla f(\vx)}{\vy-\vx}+\int_{0}^{1}Lt\norm{\vy-\vx}^{2}dt\\
 & = & f(x)+\innerProduct{\nabla f(\vx)}{\vy-\vx}+\frac{L}{2}\norm{\vy-\vx}^{2}.
\end{eqnarray*}
Given the second condition. For any $\vx\in\R^{n}$. let $\phi_{\vx}(\vy)=f(\vy)-\innerProduct{\nabla f(\vx)}{\vy}$.
From the convexity of $f$, for any $\vy\in\R^{n}$ 
\[
f(\vy)-f(\vx)\geq\innerProduct{\nabla f(\vx)}{\vy-\vx}.
\]
Hence, $\vx$ is a minimizer of $\phi_{\vx}$. Hence, we have

\begin{align*}
\phi_{\vx}(\vx) & \leq\phi_{\vx}(\vy-\frac{1}{L}\dualVec{\nabla\phi_{\vx}(\vy)})\\
 & \leq\phi_{\vx}(\vy)-\innerProduct{\nabla\phi_{\vx}(\vy)}{\frac{1}{L}\dualVec{\nabla\phi_{\vx}(\vy)}}+\frac{L}{2}\norm{\frac{1}{L}\dualVec{\nabla\phi_{\vx}(\vy)}}^{2}\tag{First part of this lemma}\\
 & =\phi_{\vx}(\vy)-\frac{1}{2L}\norm{\dualVec{\nabla\phi_{\vx}(\vy)}}^{2}\\
 & =\phi_{\vx}(\vy)-\frac{1}{2L}\left(\normDual{\nabla\phi_{\vx}(\vy)}\right)^{2}.
\end{align*}
Hence, 
\[
f(\vy)\geq f(\vx)+\innerProduct{\nabla f(\vx)}{\vy-\vx}+\frac{1}{2L}\left(\normDual{\nabla f(\vy)-\nabla f(\vx)}\right)^{2}.
\]
Adding up this inequality with $\vx$ and $\vy$ interchanged, we
have 
\begin{eqnarray*}
\frac{1}{L}\left(\normDual{\nabla f(\vy)-\nabla f(\vx)}\right)^{2} & \leq & \innerProduct{\nabla f(\vy)-\nabla f(\vx)}{\vy-\vx}\\
 & \leq & \normDual{\gradient f(\vy)-\gradient f(\vx)}\norm{\vy-\vx}.
\end{eqnarray*}
The last inequality follows from similar proof in above for $\phi_{\vx}$.\end{proof}

The next lemma relate the Hessian of function with the Lipschitz parameter
$\funLip$ and this lemma gives us a easy way to compute $L$. \begin{lemma}
\label{lem:lip_formula}Let $f$ be a twice differentiable function
such that for any $\vx,\vy\in\R^{n}$ 
\[
0\leq\vy^{T}\left(\nabla^{2}f(\vx)\right)\vy\leq L||\vy||^{2}.
\]
Then, $f$ is convex and the gradient of $f$ is Lipschitz continuous
with Lipschitz parameter $\funLip$.\end{lemma}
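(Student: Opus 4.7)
The plan is to derive both conclusions from a single application of Taylor's theorem with integral remainder applied to the second-order expansion of $f$, and then to invoke Lemma~\ref{lem:lipeq} to convert a quadratic upper bound into Lipschitz continuity of the gradient. Concretely, I would start from the identity
\[
f(\vx) = f(\vy) + \innerProduct{\nabla f(\vy)}{\vx - \vy} + \int_0^1 (1-t)\, (\vx-\vy)^T \nabla^2 f(\vy + t(\vx-\vy)) (\vx-\vy)\, dt,
\]
which holds for any twice continuously differentiable $f$.

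For convexity, the hypothesis $0 \leq \vvar{z}^T \nabla^2 f(\cdot) \vvar{z}$ says the integrand in the remainder is pointwise nonnegative, so dropping it yields the first-order condition $f(\vx) \geq f(\vy) + \innerProduct{\nabla f(\vy)}{\vx-\vy}$, which is equivalent to convexity. For Lipschitz continuity of $\nabla f$, I would apply the upper bound $\vvar{z}^T \nabla^2 f(\cdot) \vvar{z} \leq L\norm{\vvar{z}}^2$ inside the integrand with $\vvar{z} = \vx - \vy$; pulling the constant factor $L\norm{\vx-\vy}^2$ out of the integral and using $\int_0^1 (1-t)\, dt = \tfrac{1}{2}$ yields
\[
f(\vx) \leq f(\vy) + \innerProduct{\nabla f(\vy)}{\vx-\vy} + \frac{L}{2}\norm{\vx-\vy}^2.
\]
This is exactly the second of the two equivalent conditions in Lemma~\ref{lem:lipeq}, so that lemma immediately implies $\normDual{\nabla f(\vx) - \nabla f(\vy)} \leq L\norm{\vx-\vy}$, completing the proof.

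The only mild subtlety is that Lemma~\ref{lem:lipeq} requires $f$ to be convex in order to pass from the quadratic upper bound to the Lipschitz bound on the gradient, so the two conclusions must be established in the right order: convexity first (from the nonnegativity of the Hessian), and then the Lipschitz property (from the upper bound plus convexity via Lemma~\ref{lem:lipeq}). There is no genuine technical obstacle beyond justifying the use of the integral form of Taylor's remainder under the stated twice-differentiability assumption, which is standard.
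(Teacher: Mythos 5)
Your proof is correct and follows essentially the same route as the paper: expand $f$ to second order, use the nonnegativity of the Hessian quadratic form to get the first-order convexity inequality, use the upper bound to get the quadratic upper bound with constant $\frac{L}{2}$, and then invoke Lemma~\ref{lem:lipeq}. The only cosmetic difference is that you use the standard $(1-t)$-weighted integral remainder while the paper writes the remainder as an integral of gradient differences and applies the mean value theorem inside, yielding a $t$-weighted integrand; both integrate to $\frac{1}{2}$.
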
 \begin{proof} Similarly
to Lemma \ref{lem:lipeq}, we have 
\begin{eqnarray*}
f(\vy) & = & f(\vx)+\innerProduct{\nabla f(\vx)}{\vy-\vx}+\int_{0}^{1}\innerProduct{\nabla f(\vx+t(\vy-\vx))-\nabla f(\vx)}{\vy-\vx}dt\\
 & = & f(\vx)+\innerProduct{\nabla f(\vx)}{\vy-\vx}+\int_{0}^{1}t(\vy-\vx)^{T}\nabla^{2}f(\vx+\theta_{t}(\vy-\vx))(\vy-\vx)dt
\end{eqnarray*}
where the $0\leq\theta_{t}\leq t$ comes from mean value theorem.
By the assumption, we have 
\begin{eqnarray*}
f(\vx)+\innerProduct{\nabla f(\vx)}{\vy-\vx} & \leq & f(\vy)\\
 & \leq & f(\vx)+\innerProduct{\nabla f(\vx)}{\vy-\vx}+\int_{0}^{1}tL\norm{\vy-\vx}^{2}dt\\
 & \leq & f(\vx)+\innerProduct{\nabla f(\vx)}{\vy-\vx}+\frac{L}{2}\norm{\vy-\vx}^{2}.
\end{eqnarray*}
And the conclusion follows from Lemma \ref{lem:lipeq}. \end{proof}

\end{document}